\newcommand{\kibitz}[2]{\ifnum\Comments=1{\color{#1}{#2}}\fi}
\newcommand{\hma}[1]{\kibitz{blue}{[HONGYAO: #1]}}
\newcommand{\rmr}[1]{\kibitz{red}{[RESHEF: #1]}}
\newcommand{\dcp}[1]{\kibitz{cyan}{[DAVID: #1]}}
\definecolor{auburn}{rgb}{0.43, 0.21, 0.1}
\colorlet{darkblue}{blue!35!black}
\theoremstyle{plain}
\newtheorem{theorem}{Theorem}
\newtheorem{proposition}{Proposition}
\theoremstyle{definition}
\newtheorem{definition}{Definition} 
\newtheorem{lemma}{Lemma}
\newtheorem{claim}{Claim}
\newcommand{\alts}{A}			
\newcommand{\utilDomain}{U}		
\newcommand{\utilSpace}{\mathcal{U}}	
\newcommand{\qlDomain}{U_{QL}}
\newcommand{\qlSpace}{\mathcal{U}_{QL}}
\newcommand{\nqlDomain}{U^0}	
\newcommand{\nqlSpace}{\mathcal{U}^0}	
\newcommand{\plDomain}{U_\parallel}		
\newcommand{\plSpace}{\mathcal{U}_\parallel}
\newcommand{\uspaceintpl}{\tilde{\mathcal{U}}}
\newcommand{\udomainintpl}{\tilde{U}}
\newcommand{\favalt}{\bar{a}}
\newcommand{\worstalt}{\underline{a}}
\def\payment{z}					
\def\maxwp{p}					
\newcommand{\affcnst}{C}		
\newcommand{\affcoeff}{k}		
\newcommand{\toposet}{P}		
\newcommand{\GG}{G}
\newcommand{\HH}{H}
\newcommand{\FF}{F}
\newcommand{\setR}{\mathbb{R}}
\newcommand{\txtfor}{~\mathrm{for}~}
\newcommand{\txtst}{~\mathrm{s.t.}~}
\newcommand{\txtand}{~\mathrm{and}~}
\newcommand{\hsq}{\hspace{-0.1em}}	
\title{Social Choice with Non Quasi-linear Utilities%
\thanks{The authors thank Vincent Conitzer, Debasis Mishra, Aris Filos Ratsikas, Moshe Tennenholtz and the participants of IJCAI'16, NYCE'17 and IJCAI'17 for helpful comments. The authors also thank Brian Baisa for discussions on the impossibility of efficient mechanisms. Ma is supported by a Siebel scholarship. Meir is supported by the Israel Science Foundation (ISF) under Grant No. 773/16.}
}
\author{Hongyao Ma
\thanks{John A. Paulson School of Engineering and Applied Sciences, Harvard University, 33 Oxford Street, Maxwell Dworkin 242, Cambridge,
MA 02138, USA. Email: hongyaoma@seas.harvard.edu.} 
\and 
Reshef Meir
\thanks{Department of Industrial Engineering and Management,
Technion - Israel Institute of Technology, Bloomfield Building, Room 510, Technion City, Haifa 3200003, Israel. Email: reshefm@ie.technion.ac.il.}
\and
David C. Parkes
\thanks{John A. Paulson School of Engineering and Applied Sciences, Harvard University, 33 Oxford Street, Maxwell Dworkin 229, Cambridge,
MA 02138, USA. Email: parkes@eecs.harvard.edu.}
}
\date{}
\begin{document}

\maketitle

\begin{abstract}
Without monetary payments, the Gibbard-Satterthwaite theorem proves that under mild requirements all truthful social choice mechanisms must be dictatorships. When payments are allowed, the Vickrey-Clarke-Groves (VCG) mechanism implements the value-maximizing choice, and has many other good properties: it is strategy-proof, onto, deterministic, individually rational, and does not make positive transfers to the agents. By Roberts' theorem, with three or more alternatives, the weighted VCG mechanisms are essentially  unique for domains with quasi-linear utilities. The goal of this paper is to characterize domains of non-quasi-linear utilities where ``reasonable'' mechanisms (with VCG-like properties) exist. Our main result is a tight characterization of the maximal non quasi-linear utility domain, which we call the largest \emph{parallel domain}. We extend Roberts' theorem to parallel domains, and use the generalized theorem to prove two impossibility results. First, any reasonable mechanism must be dictatorial when the utility domain is quasi-linear together with any single non-parallel type. Second, for richer utility domains that still differ very slightly from quasi-linearity, every strategy-proof, onto and deterministic mechanism must be a dictatorship.
\end{abstract}

\section{Introduction} \label{sec:intro}

We study social choice mechanisms that aggregate individual preferences and select one among a finite set of alternatives. Our interest is in the existence of strategy-proof social choice mechanisms under general, non quasi-linear utility functions.
In the classical voting problem without money, the seminal Gibbard-Satterthwaite theorem~\citep{gibbard1973,satterthwaite1975} states that if agents' preferences can be any ordering over the alternatives, the only deterministic, onto (i.e. every alternative can be selected) and strategy-proof mechanisms for three or more alternatives are dictatorial. On the other hand, with the introduction of monetary transfers and quasi-linear utilities, the Vickrey-Clarke-Groves (VCG) mechanism~\citep{vickrey1961,clarke1971,groves1973} maximizes social welfare in dominant strategies, and can be generalized to implement any affine maximizer of agents' values~\citep{roberts1979,krishna1998}.

However, quasi-linearity is a strong assumption, violated for example in domains with budget constraints and problems with lack of liquidity~\citep{cramton1997fcc} or with wealth effect and risk-aversion~\citep{pratt1992risk}. Non quasi-linearity can also arise as a result of the timing of payments coupled with temporal preferences~\citep{Frederick2002}, when payments are contingent on agents' actions and the presence of payments affects decisions and thus the likelihood of contingencies~\citep{ma2016contingent,Ma_ijcai16}, and in the context of illiquid currencies such as points-based allocation schemes~\citep{kash2007optimizing}.

To the best of our knowledge, few papers considered social choice mechanisms with monetary transfers and non quasi-linear utilities.\footnote{A preliminary, short version of this paper~\citep{Ma_ijcai16_b} proved a special case of the impossibility result for two agents and the richest non quasi-linear utility domain, with additional assumptions of unanimity and neutrality. The proof techniques are different, and this preliminary version does not make use of the generalized Roberts theorem.}
The main result of this paper is a tight characterization of the maximal utility domain, which we name the \emph{largest parallel domain}, where there exist non-dictatorial mechanisms that are strategy-proof, onto, deterministic, individually rational and satisfy \emph{no subsidy} (i.e. no positive transfers from the mechanism to the agents). These properties are those of the VCG mechanism in quasi-linear utility domains.
As a special case, we prove that for utility domains that contain all quasi-linear types but do not reside within the largest parallel domain, the only mechanisms satisfying the above conditions must be dictatorial. 
The proofs make use of a generalized Roberts' theorem, which we extended to non quasi-linear utility domains with suitable properties. 
We also provide a negative result for a broader class of mechanisms: by allowing richer utility domains that still differ very slightly from  quasi-linearity, we establish the impossibility of non-dictatorial mechanisms even without requiring individual rationality or no subsidy.

A key observation is that the critical property of agents' utilities that enables non-dictatorial mechanisms is not the linear dependency on payments. 
We say the utility functions of an agent is of \emph{parallel type} if for any two alternatives $a$ and $b$, within the range of interest, no matter how much the agent is charged for $b$, to achieve the same utility, the additional amount she is willing to pay for $a$ stays the same. 
Quasi-linear utility functions have this property, but there can also be non quasi-linear parallel types. 
\hma{Updated using the languages that we used in the rebuttals}
Intuitively, a type being parallel requires that regardless of which alternative is selected, the agent's marginal cost for money is the same as long as she has the same utility level--- the trade-off with money depends on how happy the agent is, not how much she is paying. 
%
%
A domain where all types are parallel is called a {\em parallel domain}, and the largest parallel domain is the set of all parallel types.

\if 0
Quasi-linear utility functions have the property that for a given agent and any two alternatives $a$ and $b$, no matter how much she is charged for $b$, to achieve the same utility, the additional amount the agent is willing to pay for $a$ stays the same.
A non quasi-linear type may also have this property. This is what is captured by a \emph{parallel type}.
A parallel type allows for non-linear utility of money, but insists that while varying the price on a first alternative, 
an agent's additional willingness to pay for a second alternative (in comparison to what she is charged for the first alternative) remains constant. A domain where all types are parallel is called a \emph{parallel domain}.
\citet{Ma_ijcai16_b} observe that the VCG mechanism can be generalized to maximize in dominant strategy agents' willingness to pay on parallel domains.
They also introduce the utility domain that we study here, that of the {\em largest parallel domain}.
\fi


The rest of the paper is organized as follows. After a brief discussion of related work, we provide in Section~\ref{sec:PD} a formal definition of parallel domains. 
We prove a positive result in Section~\ref{sec:gen_weighted_VCG}, that within the parallel domains, the family of {\em generalized weighted VCG mechanisms} are strategy-proof, onto, deterministic, individually rational (IR), and satisfy no subsidy. 
In Section~\ref{sec:roberts}, we generalize Roberts' Theorem~\citep{roberts1979} to parallel domains, and prove that when the differences in agents' willingness to pay for different alternatives are unrestricted, maximizers of affine functions of the willingness to pay are the only
implementable choice rules amongst mechanisms with these properties.
With this characterization, we prove in Section~\ref{sec:dictatorship} our main result--- that when
agents have types outside of the parallel domain, the only mechanisms that are strategy-proof, onto, deterministic, individually rational, and satisfy no subsidy must be fixed-price dictatorships, i.e.
there exists a dictator who chooses her favorite alternative given
fixed prices associated with each alternative.
We also develop a negative result for a broader class of mechanisms: by
allowing utility domains that are slightly richer in their non
quasi-linearity, we show that individual rationality and no subsidy can be relaxed, while the impossibility of non-dictatorial
mechanisms can still be established.  We give in
Section~\ref{sec:relaxing_P4P5}, for example, an impossibility result
for a {\em two-slopes domain},
when the utility function of each agent for each alternative is
linear, with the slope taking one of two possible values. Proofs omitted from the body of the paper are provided in Appendix~\ref{appx:proofs}.

\subsection{Related Work}

In social choice without monetary payments, the classical Gibbard-Satterthwaite theorem has been extended to more restricted preference
domains, including the saturated domains~\citep{kalai1979social}, linked domains~\citep{aswal2003dictatorial}, circular domains~\citep{sato2010circular}, and weakly connected domains~\citep{pramanik2015further}.  
Domains for which there are positive results have also been extensively studied, see Black's majority rule~\citep{black1948rationale}, Moulin's median voting schemes and generalizations~\citep{moulin1980strategy,nehring2007structure} and  results on graphs with metric spaces~\citep{barbera2001introduction,schummer2002,dokow2012}. 

For social choice with payments and quasi-linear utilities,
\citet{roberts1979} showed that with three or more
alternatives, when the values can take any real numbers, positive
association of differences is
necessary and sufficient for strategy-proof implementation, and that
the only implementable choice rules are affine maximizers of agent values. Such choice rules can be implemented by  weighted VCG
mechanisms~\citep{krishna1998}. Characterizations of strategy-proof implementations have also been developed for mechanism design problems in specific domains~\citep{myerson1981optimal,rochet1987necessary,lavi2003towards,bikhchandani2006weak,saks2005weak}. 
%

One approach to mechanism design without quasi-linearity
is to assume that the functional form of agent utility functions is
known to the designer, for example auctions
public budget constraints~\citep{dobzinski2012multi}, or
auctions with  known risk
preferences~\citep{maskin1984optimal}. 
In contrast, we assume that the functional form of agent utility functions are private.
For private non quasi-linear utilities, under suitable richness of type space, \citet{kazumuramechanism} prove a ``taxation principle" style characterization and a ``revenue uniqueness" result of truthful mechanisms, and show various applications to problems other than social choice, e.g. single item allocation. 
The existence of truthful and non-dictatorial social choice mechanisms is not discussed.

For the assignment problem with unit demand,  
the minimum Walrasian-equilibrium mechanism is known to be truthful~\citep{demange1985strategy,Alaei2010,aggarwal2009,dutting2009bidder,morimoto2015strategy},
 even for any general non-increasing utility function in payment.
On the other hand, truthfulness cannot be achieved together with Pareto-efficiency for allocation problems
in which agents may demand more than one unit of good, or when agents have multi-dimensional type spaces~\citep{kazumura2016efficiency,dobzinski2012multi,baisa2016efficient}.
We do not impose Pareto-efficiency (PE) in proving our impossibility results.\footnote{With PE, however, we prove a similar dictatorship  result, which in comparison to our main results, requires weaker assumptions on agents utility domains. See Theorem~\ref{thm:dictatorial_with_PE} in Appendix~\ref{appx:result_with_PE}.}
 Randomized mechanisms for bilateral trade~\citep{garratt2015efficient} and revenue-optimal auctions in very simple settings~\citep{baisa2017auction,che2012assigning,pai2014optimal} have
also been studied in the context of private budget constraints.
We focus here on social choice rather than assignment or allocation problems, settings for which there is more structure on agents' preferences and also indifference toward outcomes where an agent's own assignments are the same.

\section{Preliminaries} \label{sec:PD}

Denote $N = \{1, 2, \dots, n\}$ as the set of \emph{agents} and $A = \{a, b, \dots, m\}$ as the set of \emph{alternatives}. A \emph{social choice
  mechanism} accepts reports from agents as input, selects a single
alternative $a^\ast \in A$, and may also determine payments.
%
A mechanism is \emph{onto} if for any alternative $a \in \alts$, there exists a preference profile for which $a$ is selected. 
A mechanism is \emph{dominant-strategy incentive compatible} (DSIC)
if no agent can gain by reporting false preferences.   \hma{People complained that the space of all possible reports are not specified yet.}

We allow monetary transfers, and the utility of an agent may depend both on the selected alternative and her assigned payment. Denote $u_{i,a}(\payment)$ as the utility of agent $i \in N$ if alternative $a \in \alts$ is selected and she needs to pay $\payment \in \setR$. $u_i = (u_{i,a}, \dots, u_{i,m})$ determines agent $i$'s \emph{type} and is her private information. Denote $u = (u_1, \dots, u_n)$ as a \emph{type profile}, and $u_{-i} = (u_1, \dots, u_{i-1}, u_{i+1}, \dots, u_n)$ as the type profile of agents except for agent $i$.

Denote the utility of alternative $a$ to agent $i$ at zero payment as
$v_{i,a} \triangleq u_{i,a}(0)$, which we call the \emph{value} of
alternative $a$ to agent $i$. Let $\favalt_i \in \arg\max_{a\in
  \alts} v_{i,a}$ and $\worstalt_i \in \arg\min_{a \in \alts}
v_{i,a}$ be a most and a least preferred alternative at zero
payment. A utility profile $u$ is \emph{quasi-linear} if
$u_{i,a}(\payment) = v_{i,a} - \payment$ for all $i \in N$, $a \in
\alts$ and $\payment \in \setR$. In this case, the values $\{v_{i,a}
\}_{i \in N, a \in \alts}$ fully determine the type profile. Let
the quasi-linear domain $\qlDomain$ be the set of all quasi-linear
types of a single agent where the $v_{i,a}$'s can take any value in $\setR$, and let $\qlSpace \triangleq \prod_{i=1}^n \qlDomain$ be the set of all quasi-linear type profiles.

\medskip

We consider \emph{non quasi-linear utilities} such that for all $i\in N$ and all $a \in \alts$, \vspace{-0.5em}
\begin{enumerate}[({S}1)]
	\item $u_{i,a}(\payment)$ is continuous and strictly decreasing in $\payment$,
	\item  $\lim_{\payment \rightarrow +\infty}u_{i,a}(\payment) < \min_{a' \in \alts} v_{i,a'}$.
\end{enumerate} 

Property~(S1) guarantees that agents strictly prefer to make smaller payments. Property~(S2) means that every agent prefers the worst alternative at zero payment to any alternative at some very large payment.
Denote the general non quasi-linear utility domain $\nqlDomain$ as the set of all types of an agent satisfying (S1) and (S2), and let $\nqlSpace \triangleq \prod_{i=1}^n \nqlDomain$ be the general non quasi-linear utility domain for a set of $n$ agents.  %


A \emph{social choice mechanism} $(x,t)$ on type domain $\utilSpace \subseteq \nqlSpace$ is composed of a \emph{choice rule} $x: \utilSpace \rightarrow A$ and a \emph{payment rule} $t = (t_1, \dots, t_n):\utilSpace \rightarrow \mathbb R^n$. Thus if the reported type profile is $\hat{u} \in \utilSpace$, the choice made is $x(\hat{u})$, and the utility of agent $i$ is $u_{i,x(\hat{u})}(t_i(\hat{u}))$.
A mechanism $(x, t)$ is DSIC if and only if, for any agent $i\in N$, any type $u_i \in \utilDomain_i$ of agent $i$, and any reported profile from other agents $\hat{u}_{-i} \in \utilSpace_{-i}$, agent $i$ cannot gain by misreporting any type $\hat{u}_i \in \utilDomain_i$:
\begin{align}
	u_{i,x(u_i, \hat{u}_{-i})}(t_{i}(u_i, \hat{u}_{-i})) \geq u_{i,x(\hat{u}_i, \hat{u}_{-i})}(t_{i}(\hat{u}_i, \hat{u}_{-i})).
\end{align}
A mechanism is \emph{individually rational} (IR) if and only if, by truthfully participating in the mechanism, regardless of the reports made by the other agents, no agent can be worse off than having their worst alternative at zero payment selected and not making any payment.\footnote{For a mechanism where IR is violated, an agent may benefit from not participating. See Section~\ref{sec:dictatorship} for more discussions on \emph{voluntary participation}. Assuming for all $i \in N$ and $a \in \alts$, $v_{i,a}$ may take any non-negative value, for DSIC mechanisms, this definition of IR is equivalent to requiring that the utility of any truthful agent is non-negative.} That is, $\forall i \in N,  ~\forall u_i \in \utilDomain_i,  ~\forall \hat{u}_{-i} \in \utilSpace_{-i}$, 
\begin{align}
	u_{i,x(u_i, \hat{u}_{-i})}(t_{i}(u_i, \hat{u}_{-i})) \geq \min_{a \in \alts} v_{i,a}.
\end{align}

We are interested in mechanisms with the following set of properties. 
\vspace{-0.8em}
\begin{multicols}{2}
\begin{enumerate}[{P}1.]
  	\setlength\itemsep{0em}
	\item Dominant-strategy incentive compatible 
	\item Deterministic 
	\item Onto 
	\item Individually rational 	
	\item No subsidy
\end{enumerate} 
\end{multicols}
\vspace{-0.5em}
Ontoness only requires that all alternatives will be selected given some type profile, but does not require all payment schedules are achievable. No subsidy requires that the mechanism does not make positive transfers to the agents.


Before continuing, we review a well-known characterization
of deterministic DSIC mechanisms. 
We say that a mechanism is \emph{agent-independent} if 
an agent's payment is independent of her report, conditioned
on a particular alternative being selected; i.e. fixing the
type profile of the rest of the agents $u_{-i}$, $\forall u_i, u_i'
\in \utilDomain_i$, $x(u_i, u_{-i}) = x(u_i', u_{-i}) \Rightarrow
t_i(u_i, u_{-i}) = t_i(u_i', u_{-i})$. Given an agent-independent
mechanism and any $u_{-i} \in \utilSpace_{-i}$, if there exists $u_i \in \utilDomain_i$ s.t. $x(u_i, u_{-i}) = a$, let the {\em agent-independent price} be the payment $i$ pays when $a$ is selected: $t_{i,a}(u_{-i}) \triangleq t_i(u_i, u_{-i})$, which depends only on $u_{-i}$. Otherwise, if
$x(u_i, u_{-i}) \neq a$ for all $u_i \in \utilDomain_i$, let
$t_{i,a}(u_{-i}) \triangleq +\infty$. 
An agent-independent mechanism 
is also \emph{agent-maximizing} if given the agent-independent
prices $\{ t_{i,a}(u_{-i}) \}_{i \in N, a \in \alts}$, the alternative selected by the mechanism maximizes the
utilities of all agents simultaneously, i.e. $\forall u \in
\utilSpace$, $\exists a^\ast \in A$ s.t. $a^\ast \in \arg \max_{a \in
  A} u_{i,a}(t_{i,a}(u_{-i}))　$ for all $i \in N$, and $x(u) =
a^\ast$.
The properties of agent-independence and agent-maximization are necessary and sufficient for deterministic DSIC mechanisms with quasi-linear utilities~\citep{vohra2011}, and this equivalence can be easily generalized to general utilities that strictly decrease with payment. 
%
%

\if 0
As an example, the Vickrey-Clarke-Groves (VCG) mechanism collects values $\{\hat{v}_{i,a}\}_{i \in N, a \in \alts} $ from agents, selects the alternative that maximizes the total values, and charges each agent the negative externality she imposes on the rest of the agents. We can check that the agent-independent price agent $i$ faces for alternative $a$ is of the form $t_{i,a}(\hat{v}_{-i}) = \max_{a' \in \alts} \{ \sum_{i'\neq i}\hat{v}_{i,a'} \} - \sum_{i'\neq i} \hat{v}_{i,a}$, and that given such prices, with quasi-linearity, the welfare-maximizing alternative $a^\ast = \arg\max_{a\in \alts} \sum_{i \in N} \hat{v}_{i,a}$ is agent-maximizing for all agents.
\fi


A dictatorship in social-choice without money identifies an agent $i^\ast$ as the dictator, and always selects her favorite alternative. We generalize this concept for social choice with money as follows:

\begin{definition}[Fixed Price Dictatorship] Under a fixed price dictatorship, there exists a dictator $i^\ast \in N$ and fixed prices $\vec{\payment} \in \setR^m$. Given any type profile $u \in \utilSpace$, one of the dictator's favorite alternatives under these prices is selected, i.e. $x(u) = a^\ast \in \arg\max_{a \in \alts} u_{i^\ast, a}(\payment_{a} )$, and the dictator pays $t_{i^\ast}(u) = \payment_{x(u)}$.

\if 0
\begin{itemize}
	\item Choice rule: $x(u) = a^\ast \in \arg\max_{a \in \alts} u_{i^\ast, a}(\payment_{a} )$,
	\item Payment rule: $t_{i^\ast}(u) = \payment_{x(u)}$ and $t_i(u) = 0, \txtfor i\neq i^\ast$.
\end{itemize}
\fi
\end{definition}

\subsection{Parallel Domains} 

Given any type of an agent $u_i \in \nqlDomain$, for each alternative $a$, we define the \emph{willingness to pay} $\maxwp_{i,a}$ as the payment for $a$ at which the agent is indifferent between getting $a$ at this payment, and getting her least preferred alternative $\worstalt_i$ at zero payment:
\begin{align}
	\maxwp_{i,a} \triangleq u_{i,a}^{-1}(v_{i,\worstalt_i}). \label{equ:maxwp}
\end{align}
See Figure~\ref{fig:prl_domain}. $\maxwp_{i,a}$ is the maximum amount the agent can be charged if alternative $a$ is selected, without violating IR. $\maxwp_{i,\worstalt_i} = 0$ always holds, and (S1)-(S2) imply that for all $a \in \alts$, $\maxwp_{i,a}$ exists, and $0 \leq \maxwp_{i,a} < +\infty$.

\begin{figure}[t!]
\vspace{-0.0em}
\centering   

\begin{tikzpicture}[scale = 0.95][font = \normalsize]
\draw[->] (-0.2,0) -- (8.5,0) node[anchor=north] {$\payment$};

\draw[->] (0,-0.4) -- (0, 3.8) node[anchor=west] {$u_{i,a'}(\payment)$};


\draw  [-]  (-0.3, 3.7) to[out=-51, in=-200] (5, 0.8) to[out=-20, in=-190] (7.5, 0.2);
\draw  [dashed]   (-0.4, 2.1) to[out=-25, in=-205]  (0, 1.97) to[out=-25, in=-200] (3.2, 0.8) to[out=-20, in=-190] (5.5, -0.2);
\draw[dashdotted]  (-0.32, 1.2) parabola[bend at end] (2, -0.2);

\draw[dotted](1.8, 1.97) -- (2.3, 1.97) ;
\draw[dotted](1.8, 2.2) -- (1.8, -0.25);
\draw (1.8, -0.1) node[anchor=north] { {$u_{i,a}^{-1}(v_{i,b})$ } };

\draw[dotted](0, 0.8) -- (3.2, 0.8) ;
\draw[dotted](5, 0.8) -- (5.5, 0.8) ;
\draw [dotted](3.2, 1.1) -- (3.2, -0.3);
\draw (3.2, -0.2) node[anchor=north] { { $\maxwp_{i,b}$ }};

\draw [dotted](5, 1.1) -- (5, -0.3);
\draw (5, -0.2) node[anchor=north] { { $\maxwp_{i,a}$ }};

\draw (-0.05, 3.2)node[anchor = east] {$v_{i,a}$};
\draw (-0.05, 1.85)node[anchor = east] {$v_{i,b}$};
\draw (-0.0, 0.8)node[anchor = east] {$v_{i, \worstalt_i}$};

\draw[-] (6, 3) -- (6.5, 3) node[anchor=west] {$u_{i,\hspace{0.1em}a}\hspace{0.1em}(\payment)$};
\draw[dashed] (6, 2.4) -- (6.5, 2.4) node[anchor=west] {$u_{\hspace{0.1em}i,\hspace{0.1em}b\hspace{0.1em}}(\payment)$};
\draw[dashdotted] (6, 1.8) -- (6.5, 1.8) node[anchor=west] {$u_{i, \worstalt_i}(\payment)$};


\fill [pattern = north east lines, pattern color = black!20](0, 1.97) to[out=-25, in=-200] (3.2, 0.8) -- (5, 0.8) to[out=-25, in=-200](1.8, 1.97) -- (0, 1.97) ;	

\draw[dotted,thick,<->](0, 1.97) -- (1.8, 1.97) ;
\draw[dotted,thick,<->](1.8, 1.3) -- (3.6, 1.3) ;
\draw[dotted,thick,<->](3.2, 0.8) -- (5, 0.8) ;

\end{tikzpicture}
\caption{An example type in the parallel domain. All horizontal sections of the shaded area (e.g. the dotted arrows) are of the same length, i.e. utility curves are horizontal 
translations of each other within the range but need not
be straight lines. \label{fig:prl_domain}} 
\end{figure}

\begin{definition}[Parallel Domain] A utility domain $\utilDomain_i \subset \nqlDomain$ for an agent is a \emph{parallel domain} if for all $u_i \in \utilDomain_i$, 
\begin{align}
	u_{i,a} \left( \payment + (\maxwp_{i,a} \hspace{-0.1em}  - \maxwp_{i,b}) \right) = u_{i,b}(\payment), \: \forall a,b\in \alts \txtst v_{i,a} \hspace{-0.2em} \geq \hspace{-0.2em} v_{i,b},\: \forall \payment \hspace{-0.1em} \in \hspace{-0.2em} [0, ~\maxwp_{i,b}]. \label{equ:parallel_defn}
\end{align}
\end{definition}

See Figure~\ref{fig:prl_domain}. 
We call a $u_i \in \nqlDomain$ a \emph{parallel type} if \eqref{equ:parallel_defn} is satisfied. For a parallel type $u_i$, for $a,b \in \alts$ s.t. $v_{i,a} \geq v_{i,b}$, for any utility level $w \in [v_{i, \worstalt_i}, v_{i,b}]$, we have 
\begin{align}
	u_{i,a}^{-1}(w) - u_{i,b}^{-1}(w)  = \maxwp_{i,a} - \maxwp_{i,b} = u_{i,a}^{-1}(v_{i,b}). \label{equ:parallel_property}
\end{align}
In words, the differences in the payments on $a$ and $b$ in order to achieve $w$ is a constant that does not depend on $w$, i.e. the additional amount an agent is willing to pay for $a$ over $b$ does not depend on how much the agent is charged for $b$.%
\footnote{There is no requirement on the shape of the utility functions below the utility level $v_{i, \worstalt_i} = \min_{a \in \alts} v_{i,a}$, or where the payments are negative, since utilities in these ranges are irreverent to the incentive properties of mechanisms that are IR and satisfy no subsidy. This also implies that the shape of the utility function of the least preferred alternative is irrelevant, and that when there are only two alternatives the general non quasi-linear domain $\nqlDomain$ is parallel.}%
\footnote{Abstracting away the cardinal utility to an agent for an outcome, and modeling instead the prices at which one alternative is more preferable than another, a type being parallel then requires that  in the range of interest, for any pair of alternatives $a,b \in \alts$, which one is more preferable depends only on the difference in the prices $\payment_a - \payment_b$.
}
For an agent with parallel type, her marginal cost for money is the same as long as she has the same utility level, regardless of which alternative is selected. Intuitively, the trade-off with money depends on how happy the agent is, not how much she is paying. For example, while paying for a better alternative, an agent with this kind of wealth effect would care less about money at the same payment amount than compared to a weaker alternative.

\if 0
As an example, consider an agent with wealth effect, who has higher marginal disutility for paying money when the payment is higher. 
Her type is parallel if she has the same disutility for paying money for all alternatives at the same utility level, i.e. if for any two alternatives $a,b \in \alts$ and any prices $t_{i,a}, ~t_{i,b}$, we have $u_{i,a}(t_{i,a}) = u_{i,b}(t_{i,a}) \Rightarrow \frac{d}{dz} u_{i,a}(t_{i,a}) = \frac{d}{dz} u_{i,b}(t_{i,b})$. In this case, her utility functions for different alternatives are horizontal shifts of each other, and her relative willingness to pay for different alternatives stay the same.
\fi

Denote $\plDomain \subset \nqlDomain$ as the {\em largest parallel domain} (i.e. the set of all parallel types), and $\plSpace = \prod_{i=1}^n \plDomain$.
The quasi-linear domain $\utilSpace_{QL}$ is a parallel domain, where
$\maxwp_{i,a} = v_{i,a} -v_{i,\worstalt_i}$ for all $a \in \alts$ and
$\maxwp_{i,a} - \maxwp_{i,b} = v_{i,a} - v_{i,b}$ for all $a,b \in
\alts$. Another special case of the parallel domain is the
\emph{linear parallel domain}, where for every $u_i$, there exists
$\alpha > 0$ s.t. $\forall a \in \alts$ and all $\payment \in \setR$,
$u_{i,a}(\payment) = v_{i,a} - \alpha \payment$ (so the quasi-linear
domain is a special case when $\alpha=1$). For these two domains, the
``vertical" distances between the utility curves also stay the same (which need not be the case for a general parallel domain), and the utility functions are horizontal translations of each other
everywhere.

A utility domain for an agent $\utilDomain_i \subseteq \nqlDomain$ is
said to have \emph{unrestricted willingness to pay} if for any
$m$-dimensional non-negative vector with at least one zero entry,
there exists $u_i \in \utilDomain_i$ s.t. the willingness to pay according to $u_i$ is equal to this vector element-wise
(at least one zero entry is required since an agent always has zero willingness to pay for $\worstalt_i$).
Formally, $\forall \lambda \in \mathbb{R}_{\geq 0}^m$ for which $\exists a \in \alts$ s.t. $\lambda_a = 0$, there exists $ u_i \in \utilDomain_i$ s.t. $\maxwp_{i,a} =
\lambda_a$ for all $a \in A$.\footnote{There are multiple (actually,
infinite number of) types in $\nqlDomain$ with the same vector of
willingness to pay, and we only require at least one of them to be
included.} We call a parallel domain with unrestricted willingness to
pay an {\em unrestricted parallel domain}.
%
%
In particular, $\qlDomain$ is an example of an unrestricted parallel domain.
A utility domain $\utilSpace = \prod_{i=1}^n \utilDomain_i$ is an {\em unrestricted parallel utility domain} if each of the $\utilDomain_i$ is unrestricted and parallel.

We now prove two lemmas. 
\begin{restatable}{lemma}{LemStandardPrices} \label{lem:aiprice_characterization} 

Let $(x,t)$ be a DSIC and deterministic social choice mechanism on a utility domain $\utilSpace \subseteq \nqlSpace$ with unrestricted willingness to pay.
The mechanism satisfies (P4) IR and (P5) No subsidy if and only if $\forall i \in N$ and $\forall u_{-i} \in \utilSpace_{-i}$, the agent-independent prices $\{t_{i,a}(u_{-i})\}_{a\in \alts}$ satisfy:
\begin{enumerate}[(i)]
	\item $t_{i,a}(u_{-i}) \geq 0$ for all alternatives $a \in \alts$,
	\item there exists an alternative $a \in \alts$ s.t. $t_{i,a}(u_{-i}) = 0$.
\end{enumerate}

\end{restatable}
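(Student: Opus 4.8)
The plan is to reduce both implications to a statement about the agent-independent prices, leveraging the characterization recalled just above the lemma: since $(x,t)$ is deterministic and DSIC on a domain of utilities that strictly decrease in payment, it is \emph{agent-independent} and \emph{agent-maximizing}. Hence for every fixed $u_{-i}$ the chosen alternative maximizes $u_{i,a}(t_{i,a}(u_{-i}))$ over $a \in \alts$, and the payment charged to $i$ equals $t_{i,x(u)}(u_{-i})$. The bridge between the IR constraint and the prices is the definition of willingness to pay: because $u_{i,a}$ is continuous and strictly decreasing with $u_{i,a}(\maxwp_{i,a}) = v_{i,\worstalt_i} = \min_{a'} v_{i,a'}$, selecting $a$ at price $t$ is IR for agent $i$ (that is, $u_{i,a}(t) \geq \min_{a'} v_{i,a'}$) if and only if $t \leq \maxwp_{i,a}$. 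I would prove the two directions separately.

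For the ``if'' direction, assume (i) and (ii) hold for every $i$ and $u_{-i}$. No subsidy is immediate: for any profile $u$, agent-independence gives $t_i(u) = t_{i,x(u)}(u_{-i})$, which is nonnegative by (i). For IR, take an alternative $a_0$ with $t_{i,a_0}(u_{-i}) = 0$, guaranteed by (ii); then $u_{i,a_0}(t_{i,a_0}(u_{-i})) = u_{i,a_0}(0) = v_{i,a_0} \geq \min_{a'} v_{i,a'}$. Since the mechanism is agent-maximizing, $u_{i,x(u)}(t_i(u)) = \max_{a} u_{i,a}(t_{i,a}(u_{-i})) \geq u_{i,a_0}(t_{i,a_0}(u_{-i})) \geq \min_{a'} v_{i,a'}$, which is exactly IR.

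For the ``only if'' direction, assume IR and no subsidy. Condition (i) follows directly: if $a$ is never selected against $u_{-i}$ for any report of $i$, then $t_{i,a}(u_{-i}) = +\infty \geq 0$ by convention; otherwise $a$ is selected at some report, and no subsidy forces the corresponding payment $t_{i,a}(u_{-i})$ to be nonnegative. The crux is (ii), and this is where I would invoke unrestricted willingness to pay. Applying it to the admissible all-zero vector $\lambda = 0$, there exists a type $u_i \in \utilDomain_i$ with $\maxwp_{i,a} = 0$ for every $a \in \alts$. Let $a^\ast = x(u_i, u_{-i})$ be the alternative selected against this constructed type; its price $t_{i,a^\ast}(u_{-i})$ is finite and equals the charged payment by agent-independence. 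IR at $(u_i, u_{-i})$ together with the dictionary above forces $t_{i,a^\ast}(u_{-i}) \leq \maxwp_{i,a^\ast} = 0$, while no subsidy (equivalently part (i)) gives $t_{i,a^\ast}(u_{-i}) \geq 0$; hence $t_{i,a^\ast}(u_{-i}) = 0$, establishing (ii).

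The routine parts are the monotonicity argument converting IR into $t \leq \maxwp$ and the bookkeeping that the actually-selected alternative has finite price equal to the payment. I expect the only genuinely delicate step to be the construction in (ii): one must verify that a type with every willingness to pay equal to zero lies in the domain, which is precisely what unrestricted willingness to pay with $\lambda = 0$ supplies (a type whose values for all alternatives coincide), and that IR is required to hold at this specific constructed profile. This is also the step that explains why the unrestricted-willingness-to-pay hypothesis cannot be dropped; everything else is manipulation of the agent-independent and agent-maximizing representation.
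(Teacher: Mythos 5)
Your proof is correct and follows essentially the same route as the paper's: both directions rest on the agent-independence/agent-maximization characterization of deterministic DSIC mechanisms, the equivalence (via continuity and strict monotonicity) between IR when $a$ is selected at price $t$ and $t \leq \maxwp_{i,a}$, and, for part (ii), a type supplied by unrestricted willingness to pay whose willingness to pay is zero for every alternative. The one difference is cosmetic: for part (i) you invoke the definitional fact that any finite agent-independent price is actually charged at some profile, so no subsidy bounds it by zero directly, whereas the paper reaches the same conclusion by a slightly more roundabout contradiction (a negative price on an agent's favorite alternative would guarantee utility above $\max_{a' \in \alts} v_{i,a'}$, forcing a positive transfer); both arguments are valid.
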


Thus, the agent-independent prices any agent faces under a mechanism satisfying (P1)-(P5) must be \emph{standard}, i.e. the minimum price among all alternatives is zero.
We leave the proof of the lemma to Appendix~\ref{appx:aiprice_characterization}.
%
%
\begin{lemma} \label{lem:maxwp_as_values} For any parallel type $u_i \in \plDomain$ and any standard prices $\{t_{i,a}\}_{a \in \alts}$:
\begin{enumerate}[(i)]
	\item $\forall a,b \in \alts$ s.t. $0 \leq t_{i,a} \leq \maxwp_{i,a}$ and $0 \leq t_{i,b} \leq \maxwp_{i,b}$, 
	\begin{align*}
		\maxwp_{i,a} - t_{i,a} \geq \maxwp_{i,b} - t_{i,b} \Leftrightarrow u_{i,a}(t_{i,a}) \geq u_{i,b}(t_{i,b}). 
	\end{align*}	
	\item $\arg\max_{a \in \alts} \{ u_{i,a}(t_{i,a}) \} = \arg \max_{a \in \alts} \{ \maxwp_{i,a} - t_{i,a} \}$.
\end{enumerate}
\end{lemma}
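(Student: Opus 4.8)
The plan is to make precise the picture in Figure~\ref{fig:prl_domain}: for a parallel type the utility curves are horizontal translations of one another throughout the relevant range, so that $u_{i,a}(t_{i,a})$ depends on the price only through the ``surplus'' $\maxwp_{i,a} - t_{i,a}$, via a single strictly increasing function shared across all alternatives. Concretely, I would define for each $a$ the map $s \mapsto u_{i,a}(\maxwp_{i,a} - s)$ on $s \in [0, \maxwp_{i,a}]$; by (S1) it is strictly increasing, and by the definition \eqref{equ:maxwp} of willingness to pay it equals $v_{i,\worstalt_i}$ at $s = 0$. The crux is that these maps agree on overlapping domains: taking without loss of generality $\maxwp_{i,a} \geq \maxwp_{i,b}$ (equivalently $v_{i,a} \geq v_{i,b}$) and substituting $\payment = \maxwp_{i,b} - s$ into the parallel identity \eqref{equ:parallel_defn} yields $u_{i,a}(\maxwp_{i,a} - s) = u_{i,b}(\maxwp_{i,b} - s)$ for all $s \in [0, \maxwp_{i,b}]$. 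Since the intervals $[0, \maxwp_{i,a}]$ all start at $0$ and hence are nested, these maps glue into one strictly increasing function $g$ on $[0, \maxwp_{i,\favalt_i}]$ satisfying $u_{i,a}(t_{i,a}) = g(\maxwp_{i,a} - t_{i,a})$ whenever $0 \le t_{i,a} \le \maxwp_{i,a}$.

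Part~(i) is then immediate. Under the stated range conditions, both $\maxwp_{i,a} - t_{i,a}$ and $\maxwp_{i,b} - t_{i,b}$ lie in the domain of $g$, so $u_{i,a}(t_{i,a}) \ge u_{i,b}(t_{i,b}) \iff g(\maxwp_{i,a} - t_{i,a}) \ge g(\maxwp_{i,b} - t_{i,b}) \iff \maxwp_{i,a} - t_{i,a} \ge \maxwp_{i,b} - t_{i,b}$, where the last equivalence uses that $g$ is strictly increasing. (This part uses only the in-range hypotheses, not standardness per se; equivalently, one can read it straight off \eqref{equ:parallel_property}, since equal utility forces equal surplus and monotonicity handles the rest.)

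For part~(ii), the only thing part~(i) does not cover is alternatives priced out of range, so I would partition $\alts$ into $S^+ = \{a : t_{i,a} \le \maxwp_{i,a}\}$ and $S^- = \{a : t_{i,a} > \maxwp_{i,a}\}$. For $a \in S^+$ both the utility $u_{i,a}(t_{i,a}) = g(\maxwp_{i,a} - t_{i,a}) \ge g(0) = v_{i,\worstalt_i}$ and the surplus $\maxwp_{i,a} - t_{i,a} \ge 0$ are non-negative, whereas for $a \in S^-$ strict monotonicity (S1) gives $u_{i,a}(t_{i,a}) < v_{i,\worstalt_i}$ and surplus strictly negative. Standardness supplies an alternative with $t_{i,a} = 0$, which belongs to $S^+$, so $S^+ \neq \emptyset$; hence every utility-maximizer and every surplus-maximizer lies in $S^+$, and within $S^+$ the two argmaxes coincide by part~(i). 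The genuinely delicate step is exactly this last one — certifying that out-of-range alternatives in $S^-$ can never maximize either objective — which is where standardness together with the strict separation at the level $v_{i,\worstalt_i}$ is essential; everything else is bookkeeping built on the common translation function $g$.
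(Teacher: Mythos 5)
Your proof is correct and takes essentially the same route as the paper's: part (i) rests on the parallel identity \eqref{equ:parallel_defn} together with strict monotonicity (S1) --- your glued surplus function $g$ is a repackaging of the paper's direct chain of equivalences --- and part (ii) uses standardness to rule out alternatives priced above the willingness to pay, exactly as the paper does. If anything, your part (ii) is slightly more complete, since you explicitly verify that out-of-range alternatives cannot maximize the surplus $\maxwp_{i,a} - t_{i,a}$ either (not just the utility), a point that is needed for the stated equality of the two argmax sets but that the paper's proof leaves implicit.
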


\begin{proof} We first prove part (i). 
Assume w.l.o.g that $v_{i,a} \geq v_{i,b}$. We know from the monotonicity of $u_{i,a}$ and the definition of parallel domain \eqref{equ:parallel_defn} that
\begin{align*}
	 & \maxwp_{i,a} - t_{i,a} \geq \maxwp_{i,b} - t_{i,b}  \Leftrightarrow t_{i,a} \leq t_{i,b} + \maxwp_{i,a}- \maxwp_{i,b} \\	  
	\Leftrightarrow & u_{i,a}(t_{i,a}) \geq u_{i,a}(t_{i,b} + (\maxwp_{i,a} -  \maxwp_{i,b} ) )
	\Leftrightarrow u_{i,a}(t_{i,a}) \geq  u_{i,b}(t_{i,b}).
\end{align*}

For part (ii), observe that if the price for at least one of the alternatives is zero, the highest utility at the given prices among all alternatives $\max_{a \in \alts} \{ u_{i,a}(t_{i,a}) \}$ is at least $\min_{a\in \alts} u_{i,a}(0)  = \min_{a \in \alts} v_{i,a}$. Therefore, for any alternative $a \in \alts$ s.t. $t_{i,a} > \maxwp_{i,a}$, the alternative cannot be agent-maximizing. Among the alternatives s.t. $t_{i,a} \leq \maxwp_{i,a}$, the agent-maximizing alternative(s) coincides with the maximizer(s) of $p_{i,a} - t_{i,a}$, according to part (i).
\end{proof}

Thus, in a parallel domain, the agent-maximizing alternative given standard prices is the maximizer of the difference between the willingness to pay and the price: $\maxwp_{i,a} - t_{i,a}$.
%
As a result, the willingness to pay serves similar roles as values in the quasi-linear domain, and it is this connection that enables us to generalize Roberts' theorem to unrestricted parallel domains.

\section{The Generalized Weighted VCG Mechanism} \label{sec:gen_weighted_VCG}

We prove in this section a positive result, that in parallel domains,
the generalized weighted VCG mechanisms implement in dominant strategy
any affine maximizer of willingness to pay: $	x(u) \in \arg\max_{a \in \alts} \left\lbrace \sum_{i=1}^n \affcoeff_i
\maxwp_{i,a} + \affcnst_a \right\rbrace,$
\if 0
\begin{align}
	x(u) \in \arg\max_{a \in \alts} \left\lbrace \sum_{i=1}^n \affcoeff_i
\maxwp_{i,a} + \affcnst_a \right\rbrace,
\end{align}
\fi
for non-negative weights $\{ \affcoeff_1\}_{i \in N}$, and real constants $\{\affcnst_a\}_{a \in  \alts}$.
\begin{definition}[Generalized Weighted VCG]
\label{defn:generalized_weighted_VCG} The generalized weighted VCG mechanism, parametrized by non-negative weights $\{\affcoeff_i \}_{i \in N}$ and real constants $\{\affcnst_a\}_{a \in \alts}$, collects a type profile $\hat{u} = (\hat{u}_1, \dots, \hat{u}_n)$ from agents, and computes the willingness to pay $\hat{\maxwp}_{i,a} = \hat{u}_{i,a}^{-1}(\min_{a'\in \alts} \hat{v}_{i,a'})$.
It is defined as
\begin{itemize}
	\item Choice rule: $x(\hat{u}) = a^\ast, \text{ where } a^\ast  \in \arg\max_{a \in \alts} \left\lbrace \sum_{i \in N} \affcoeff_i \hat{\maxwp}_{i,a} +  \affcnst_a \right\rbrace$, breaking ties arbitrarily, independent to agents' reports.
	\item Payment rule: $t_i(\hat{u}) = 0$ for $i \in N$ s.t. $\affcoeff_i = 0$; for $i$ s.t. $\affcoeff_i \neq 0$:
	\begin{align}
		t_i(\hat{u})  = \frac{1}{\affcoeff_i} \left( \sum_{j \neq i} \affcoeff_{j} \hat{\maxwp}_{j, a^\ast_{-i}} + \affcnst_{a^\ast_{-i}} - \sum_{j \neq i} \affcoeff_{j} \hat{\maxwp}_{j, a^\ast} - \affcnst_{a^\ast} \right), \label{equ:gwvcg_payment_rule}
	\end{align}
	where 
	$a^\ast_{-i} \in \arg\max_{a \in \alts}  \{ \sum_{j \neq i} \affcoeff_{j}  \hat{\maxwp}_{j,a} +  \affcnst_a \}$.
\end{itemize}
\end{definition}

\begin{theorem}
\label{thm:pos_result_PD} With type domain $\utilSpace \subseteq \plSpace$, any non-negative coefficients $\{\affcoeff_i\}_{i \in N}$ and any  real constants $\{\affcnst_a\}_{a \in \alts}$, the generalized weighted VCG mechanism is DISC, IR and does not make positive transfers to the agents.
\end{theorem}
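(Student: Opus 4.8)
The plan is to verify the three claimed properties—no subsidy, IR, and DSIC—by reducing each to the familiar quasi-linear VCG analysis through Lemma~\ref{lem:maxwp_as_values}, which allows me to treat the willingness to pay $\maxwp_{i,a}$ as though it were a quasi-linear value. The first step is to read off the \emph{agent-independent prices}. Since the winner $a^\ast$ and the payment \eqref{equ:gwvcg_payment_rule} depend on agent $i$'s own report only through $a^\ast$, conditioning on a fixed alternative $a$ being selected yields, for each $i$ with $\affcoeff_i > 0$, a price $t_{i,a}(u_{-i}) = \frac{1}{\affcoeff_i}\big(\Phi_{-i}^\ast - \sum_{j\neq i}\affcoeff_j \maxwp_{j,a} - \affcnst_a\big)$, where $\Phi_{-i}^\ast = \sum_{j\neq i}\affcoeff_j \maxwp_{j,a^\ast_{-i}} + \affcnst_{a^\ast_{-i}}$ is a constant independent of $a$ (and $t_{i,a} \equiv 0$ when $\affcoeff_i = 0$). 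I would then observe that these prices are \emph{standard}: taking $a = a^\ast_{-i}$ makes the bracket vanish, so $t_{i,a^\ast_{-i}}(u_{-i}) = 0$, and since $a^\ast_{-i}$ maximizes $\sum_{j\neq i}\affcoeff_j \maxwp_{j,a} + \affcnst_a$, every $t_{i,a}(u_{-i}) \geq 0$. No subsidy is then immediate, because the realized payment $t_i(u) = t_{i,x(u)}(u_{-i})$ is one of these non-negative prices.

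For DSIC I would invoke the characterization stated in the preliminaries that agent-independence together with agent-maximization is equivalent to deterministic DSIC. Agent-independence is exactly the structure exhibited above. For agent-maximization I apply Lemma~\ref{lem:maxwp_as_values}(ii) with the standard prices $\{t_{i,a}(u_{-i})\}_{a}$: it says the utility-maximizing alternative for any parallel type coincides with the maximizer of $\maxwp_{i,a} - t_{i,a}(u_{-i})$. Multiplying this quantity by $\affcoeff_i > 0$ and adding the constant $\Phi_{-i}^\ast$ turns it into $\sum_{j}\affcoeff_j \maxwp_{j,a} + \affcnst_a$, so for each such agent the set of utility-maximizing alternatives is precisely the set the choice rule maximizes over. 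Hence the selected $a^\ast$ is simultaneously utility-maximizing for every agent with $\affcoeff_i>0$. Agents with $\affcoeff_i = 0$ need a separate, trivial argument: their report enters neither the affine objective nor their (identically zero) payment, so they can never profit by deviating, and DSIC holds for them directly.

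IR follows from the same alignment. Reporting truthfully, $a^\ast$ maximizes $\maxwp_{i,a}-t_{i,a}(u_{-i})$ over $a$, so comparing against $a^\ast_{-i}$, which has $t_{i,a^\ast_{-i}}(u_{-i})=0$ and $\maxwp_{i,a^\ast_{-i}}\geq 0$, gives $\maxwp_{i,a^\ast} - t_{i,a^\ast}(u_{-i}) \geq 0$, i.e. $t_i(u) \leq \maxwp_{i,a^\ast}$. Monotonicity of $u_{i,a^\ast}$ together with the defining identity $u_{i,a^\ast}(\maxwp_{i,a^\ast}) = v_{i,\worstalt_i} = \min_{a} v_{i,a}$ then yields $u_{i,x(u)}(t_i(u)) \geq \min_{a} v_{i,a}$; for $\affcoeff_i = 0$ the payment is zero and $v_{i,a^\ast}\geq \min_a v_{i,a}$ makes IR trivial.

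The main obstacle is not any single computation but ensuring the reduction through Lemma~\ref{lem:maxwp_as_values} is legitimate: one must confirm that the agent-independent prices are standard (so part~(ii) applies) and carefully dispose of the $\affcoeff_i = 0$ agents, for whom the affine objective ignores the report and the clean simultaneous-agent-maximization picture degenerates. Once these two points are handled, everything else is the standard Groves externality argument transported verbatim from values to willingness to pay.
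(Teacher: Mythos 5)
Your proposal is correct and follows essentially the same route as the paper's own proof: read off the agent-independent prices induced by the payment rule, observe that they are standard (zero at $a^\ast_{-i}$, non-negative elsewhere), and use Lemma~\ref{lem:maxwp_as_values} to conclude that $a^\ast$ is agent-maximizing for every agent, which yields DSIC; your affine-rescaling observation that $\affcoeff_i(\maxwp_{i,a}-t_{i,a}(u_{-i}))+\Phi^\ast_{-i}=\sum_j \affcoeff_j\maxwp_{j,a}+\affcnst_a$ is the same computation the paper performs by checking $\maxwp_{i,a^\ast}-t_{i,a^\ast}(u_{-i})-(\maxwp_{i,a}-t_{i,a}(u_{-i}))\geq 0$. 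The only (cosmetic, and in fact slightly cleaner) divergence is that you establish IR and no subsidy directly from $0\leq t_i(u)\leq \maxwp_{i,x(u)}$ and monotonicity of $u_{i,x(u)}$, whereas the paper invokes the sufficiency direction of Lemma~\ref{lem:aiprice_characterization}, whose statement carries an unrestricted-willingness-to-pay hypothesis that Theorem~\ref{thm:pos_result_PD} does not assume (though the direction actually used does not need it).
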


\begin{proof} We first consider an agent $i \in N$ s.t. $\affcoeff_i > 0$. Given $u_{-i}$, we can check that for any $u_i$ agent $i$ reports s.t. $x(u_i,u_{-i}) = a$, agent $i$'s agent-independent payment would be
\begin{align}
	t_{i,a}(u_{-i}) = \frac{1}{\affcoeff_i} \left( \sum_{j \neq i} \affcoeff_{j} \maxwp_{j, a^\ast_{-i}} + \affcnst_{a^\ast_{-i}} - \sum_{j \neq i} \affcoeff_{j} \maxwp_{j, a} - \affcnst_{a} \right). \label{equ:ai_prices_gwVCG}
\end{align}
For agent $i$ s.t. $\affcoeff_i = 0$, $\arg\max_{a \in \alts}  \{ \sum_{j \in N} \affcoeff_{j}  {\maxwp}_{j,a} +  \affcnst_a \}$ and $ \arg\max_{a \in \alts}  \{ \sum_{j \neq i} \affcoeff_{j}  {\maxwp}_{j,a} +  \affcnst_a \}$ coincide. 
No matter what agent $i$ reports, $a_{-i}^\ast = a^\ast$ is always selected and she does not pay anything, thus
\begin{align}
	t_{i,a_{-i}^\ast}(u_{-i}) = 0 \text{ and } t_{i,a}(u_{-i}) = +\infty \text{ for } a \neq a_{-i}^\ast. \label{equ:ai_prices_gwVCG_k0}
\end{align}
Since $a^\ast_{-i}$ is the maximizer of $\sum_{j \neq i} \affcoeff_{j} \maxwp_{j, a} + \affcnst_{a}$, all agent-independent prices are non-negative. Moreover, $a^\ast_{-i}$ has the minimum price among all alternatives, which is exactly zero: $\min_{a \in \alts}  \{ t_{i,a}(u_{-i})\} = t_{i,a^\ast_{-i}}(u_{-i}) = 0$. By Lemma~\ref{lem:aiprice_characterization}, we know that the prices are standard, and the mechanism satisfies (P4) IR and (P5) No subsidy if it is DSIC. What is left to prove is choosing $a^\ast$ at such agent-independent prices is agent-maximizing for all agents, which implies DSIC. From Lemma~\ref{lem:maxwp_as_values} we know that we only need to prove $a^\ast$ is the maximizer of $\maxwp_{i,a} - t_{i,a}(u_{-i})$ for all agents. This is immediate for agents with $\affcoeff_i = 0$. For an agent with $\affcoeff_i > 0$, for any alternative $a \in \alts$,  we have
\begin{align*}
	& \maxwp_{i,a^\ast} - t_{i,a^\ast}(u_{-i}) - (\maxwp_{i,a} - t_{i,a}(u_{-i})) \\
	= &  \maxwp_{i,a^\ast} - \frac{1}{\affcoeff_i} \left( \sum_{j \neq i} \affcoeff_{j} \maxwp_{j, a^\ast_{-i}} + \affcnst_{a^\ast_{-i}} - \sum_{j \neq i} \affcoeff_{j} \maxwp_{j, a^\ast} - \affcnst_{a^\ast} \right) \\
	& - \maxwp_{i,a} + \frac{1}{\affcoeff_i} \left( \sum_{j \neq i} \affcoeff_{j} \maxwp_{j, a^\ast_{-i}} + \affcnst_{a^\ast_{-i}} - \sum_{j \neq i} \affcoeff_{j} \maxwp_{j, a} - \affcnst_{a} \right) \\
	=&  \frac{1}{\affcoeff_i} \left(\sum_{j \in N} \affcoeff_{j} \maxwp_{j, a^\ast} + \affcnst_{a^\ast} - \sum_{j \in N} \affcoeff_{j} \maxwp_{j, a} - \affcnst_{a} \right) \geq 0,
\end{align*}
thus $a^\ast$ indeed maximizes $\maxwp_{i,a} - t_{i,a}(u_{-i})$.
\end{proof}

Note that when $\utilSpace \subseteq \qlSpace$, we have $\maxwp_{i,a}
= v_{i,a} -v_{i,\worstalt_i}$ for all $a \in \alts$, $i \in N$, and
maximizing an affine function of the willingness to pay is equivalent
to maximizing the same affine function of the values. Thus, this
mechanism coincides with the weighted VCG mechanisms when utilities are quasi-linear.
Ontoness is satisfied if $\affcoeff_i > 0$ for at least one agent and when the utility domain is unrestricted.





\section{Generalizing Roberts' Theorem} \label{sec:roberts}

With quasi-linear
utilities,~\citet{roberts1979} showed that with three or more alternatives, if each agent's value for
each alternative can be any real number, the choice rule of any social
choice mechanism that is~(P1) DISC, (P2) deterministic and (P3)
onto must be a maximizer of some affine function of agents' values.
With two additional conditions, (P4) IR and (P5) No subsidy, we generalize Roberts' theorem to the unrestricted parallel
domains. 

\begin{restatable}[Roberts' Theorem on Parallel Domains]{theorem}{ThmRobertsPD} \label{thm:roberts} With three or more alternatives and an unrestricted parallel utility domain $\utilSpace$,
 for every social choice mechanism satisfying (P1)-(P5), there exist non-negative weights $\affcoeff_1, \dots, \affcoeff_n$ (not all equal to zero) and real constants $\affcnst_1, \dots, \affcnst_m$ such that for all $u \in \utilSpace$, 
\begin{align*}
	x(u) \in \arg \max_{a \in \alts} \left\lbrace \sum_{i=1}^n \affcoeff_i \maxwp_{i,a} + \affcnst_a  \right\rbrace.
\end{align*}
%
\end{restatable}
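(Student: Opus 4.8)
The plan is to reduce to the classical Roberts' theorem by transporting the mechanism from the parallel domain to the quasi-linear domain, treating the willingness to pay $\maxwp_{i,a}$ as a surrogate for quasi-linear values. The engine of the reduction is Lemma~\ref{lem:maxwp_as_values}: combined with the agent-independence/agent-maximization characterization of deterministic DSIC mechanisms and Lemma~\ref{lem:aiprice_characterization} (which forces the agent-independent prices to be standard), it tells us that $x(u)\in\arg\max_{a\in\alts}\{\maxwp_{i,a}-t_{i,a}(u_{-i})\}$ simultaneously for every agent $i$. Thus, within the IR range, the chosen alternative behaves exactly like the agent-maximizing alternative of a quasi-linear mechanism whose ``values'' are the $\maxwp_{i,a}$, which is precisely the structure Roberts' theorem exploits.

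The main obstacle is to show that the choice (and hence the prices) depends on an agent's type only through her willingness-to-pay vector; the difficulty is that agent $j$'s agent-independent prices $t_{j,\cdot}(u_{-j})$ a priori depend on agent $i$'s \emph{entire} type, so one cannot naively substitute representatives coordinate by coordinate. The observation that breaks this interdependence is that agent $i$'s own menu $\{t_{i,a}(u_{-i})\}_a$ depends only on $u_{-i}$ by agent-independence; hence if we vary $u_i$ alone to any $u_i'$ with the same willingness-to-pay, agent $i$ faces the identical menu and, by Lemma~\ref{lem:maxwp_as_values}, the identical set of utility-maximizing alternatives. Whenever this set is a singleton, $x(u_i,u_{-i})$ and $x(u_i',u_{-i})$ are both forced to equal that single alternative, regardless of how the other agents' menus shift. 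Applying this coordinate by coordinate on the (dense) set of profiles where each agent's optimum is unique shows that $x$ factors through the willingness-to-pay profile $\lambda=(\maxwp_i)_i$ there; the remaining tie profiles form a lower-dimensional set handled at the very end.

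With factoring in hand, I would fix for each realizable willingness-to-pay vector a representative parallel type (available since $\utilSpace$ is unrestricted), writing $\rho(\lambda)$ for the representative profile, and define an induced choice rule on the full quasi-linear domain $\qlSpace$ by $x^{QL}(v):=x(\rho(\lambda))$, where $\lambda_{i,a}=v_{i,a}-\min_{a'}v_{i,a'}$. Pairing it with payments $t^{QL}_{i,a}(v_{-i}):=t_{i,a}(\rho(\lambda_{-i}))$, a direct check using Lemma~\ref{lem:maxwp_as_values} shows $x^{QL}$ is deterministic, onto and DSIC on $\qlSpace$: a quasi-linear agent who induces alternative $a$ obtains utility $\maxwp_{i,a}-t_{i,a}+\min_{a'}v_{i,a'}$, whose maximizer over $a$ is attained by truthful reporting precisely because truthful reporting is agent-maximizing in the original mechanism. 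Since there are at least three alternatives and $\qlSpace$ ranges over all real values, classical Roberts~\citep{roberts1979} yields non-negative weights $\affcoeff_1,\dots,\affcoeff_n$ (not all zero) and constants $\affcnst_1,\dots,\affcnst_m$ with $x^{QL}(v)\in\arg\max_{a}\{\sum_i\affcoeff_i v_{i,a}+\affcnst_a\}$.

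Finally I would translate the conclusion from values to willingness-to-pay: since $v_{i,a}=\maxwp_{i,a}+\min_{a'}v_{i,a'}$ and the per-agent constants $\affcoeff_i\min_{a'}v_{i,a'}$ shift the objective by the same amount across all alternatives, $\arg\max_a\{\sum_i\affcoeff_i v_{i,a}+\affcnst_a\}=\arg\max_a\{\sum_i\affcoeff_i\maxwp_{i,a}+\affcnst_a\}$, giving the claim for every $u$ whose choice factors through $\lambda$. For the residual tie profiles, I would observe that once the choice rule is known to be this affine maximizer, the standard agent-independent prices are pinned down (up to the standardization of Lemma~\ref{lem:aiprice_characterization}) as the generalized weighted VCG prices of Theorem~\ref{thm:pos_result_PD}; since $x(u)$ is always agent-maximizing at its standard prices, it must lie in $\arg\max_a\{\sum_i\affcoeff_i\maxwp_{i,a}+\affcnst_a\}$ even at ties, completing the argument. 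The two steps I expect to require the most care are the coordinate-by-coordinate factoring (resolving the interdependence of prices through the singleton observation) and the verification that ontoness and DSIC transfer cleanly to $x^{QL}$.
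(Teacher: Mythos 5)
Your guiding intuition---treat $\maxwp_{i,a}$ as surrogate values via Lemma~\ref{lem:maxwp_as_values}---is also the paper's, but your implementation is a genuinely different route (a black-box reduction to classical Roberts through an induced quasi-linear mechanism), and it has a real gap at its crux: the ``factoring'' step asserting that the choice rule depends on each agent's type only through her willingness-to-pay vector. That assertion is false in general. Concretely, take a generalized weighted VCG mechanism whose tie-breaking among affine-maximizing alternatives depends on non-WTP features of the reported utility curves (e.g.\ their curvature): it satisfies (P1)--(P5), yet at tie profiles its choice genuinely varies across types with identical willingness to pay, so no unconditional factoring lemma can hold. (The authors in fact formulated exactly such a lemma in an earlier draft and abandoned it as false; it is why the published proof takes a different path.) You anticipate the tie problem and restrict the coordinate-by-coordinate substitution to profiles where each agent's optimum is a singleton, but this patch is not well-founded. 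First, the uniqueness you need at step $i$ is not a property of the original profile $u$: it is uniqueness of agent $i$'s optimum at the menu $\{t_{i,a}(\cdot)\}_{a\in\alts}$ evaluated at a \emph{hybrid} profile $(\rho_1(\lambda_1),\dots,\rho_{i-1}(\lambda_{i-1}),u_{i+1},\dots,u_n)$, where $\rho_j(\lambda_j)$ is your fixed representative, and these menus move as you substitute. Second, at this stage of the argument the price functions $t_{i,a}(\cdot)$ are arbitrary---you have no continuity, measurability, or parametric structure for them---so the claim that tie profiles form a ``lower-dimensional'' or dense-complement set cannot be cashed out; genericity language presupposes structure you have not established.

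The gap then propagates into the two steps you yourself flag as delicate. Ontoness of the induced mechanism $x^{QL}$ requires every alternative to be selected at some \emph{representative} profile; if an alternative is only ever selected outside your factoring set, you have no tool to relocate it (the natural tool is PAD, Definition~\ref{defn:pad}, which the paper derives from W-Mon---but once you prove W-Mon and PAD you have already built the machinery of the paper's direct argument, and the reduction buys nothing). Likewise, your endgame for tie profiles invokes the price-pinning characterization (the paper's Lemma~\ref{lem:price_characterization}), whose proof itself needs the affine-maximizer form of $x$ on a rich set of profiles, i.e.\ exactly the richness the factoring step failed to secure. The paper's proof avoids the whole issue by never comparing profiles that differ in more than one coordinate: W-Mon (Lemma~\ref{lem:wmon_ic}) compares only $(u_i,u_{-i})$ with $(u_i',u_{-i})$, where agent $i$'s menu is literally identical by agent-independence, so no interdependence arises and ties never need to be excluded; from W-Mon it derives PAD (Lemma~\ref{lem:pad_ic}) and then re-runs the proof of \citet{lavi2009two} with willingness to pay in place of values (Claims~\ref{clm:clm1}--\ref{clm:clm3} and the sets $\toposet(a,b)$). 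To salvage your reduction you would need an honest proof of a tie-aware factoring lemma together with a structure theorem for the tie set---and the evidence is that no such lemma holds in the generality you need.
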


We prove in Lemma~\ref{lem:wmon_ic} that the \emph{weak monotonicity} condition defined in terms of willingness to pay (which is eqivalent to the W-Mon condition in terms of values~\citep{bikhchandani2006weak} when utilities are quasi-linear, see Definition~\ref{defn:wmon}) is a necessary condition of incentive compatibility. 
Following the same steps as in the first proof of Roberts' theorem presented in~\citet{lavi2009two} while treating willingness to pay as ``values" in the proof, we conclude that affine maximizers of the willingness to pay are the only implementable choice rules (see Appendix~\ref{appx:proof_roberts} for the details).
The coefficients $\{ \affcoeff_i\}_{i\in N}$ cannot be all zero in order to satisfy (P3) Ontoness.

Regarding the requirements on the utility domain: in the proof of
Robert's theorem~\citep{lavi2009two}, the values can take any real numbers, whereas the
willingness to pay for a non quasi-linear type takes non-negative
values and one of them has to be exactly zero. This does not prevent
us from generalizing the proof, since what is necessary is that the
differences in the values $v_{i,a} - v_{i,b}$ for all $a, b \in \alts$ can be any real numbers. 
We get this from the unrestricted parallel domain.


\if 0

\hma{Original statement and proof for the lemma that was proved to be false!}

\begin{restatable}{lemma}{LemSufficiencyWP} \label{lem:SufficiencyWP}

Let $\utilSpace = \prod_{i=1}^n \utilDomain_i$ be an unrestricted parallel utility domain, and $(x,t)$ be a social choice mechanism on $\utilSpace$ that satisfies (P1)-(P5). 
For any two type profiles $u,u' \in \utilSpace$ such that $\maxwp_{i,a} = \maxwp_{i,a}'$ for all $a \in \alts$ and all $i \in N$,
\begin{enumerate}[(i)]
	\item $t_{i,a}(u_{-i}) = t_{i,a}(u_{-i}')$ for all $a \in \alts$ and for all $i \in N$,
	\item $x(u) = x(u')$, $t_i(u) = t_i(u')$ for all $i \in N$.
\end{enumerate}
\end{restatable}

We outline the intuition here, and provide the full proof in Appendix~\ref{appx:proof_lemma}.
\if 0
\hma{This is the original longer version}
Consider the case when $n=2$, any fixed $u_2 \in \utilDomain_2$, and types $u_1$, $u_1'$ of agent $1$ that have the same willingness to pay. We would like to show $\{t_{2,a}(u_1)\}_{a\in \alts}$ and $\{t_{2,a}(u_1')\}_{a\in \alts}$ are the same.
Lemma~\ref{lem:aiprice_characterization} implies the agent-independent prices $\{t_{1,a}(u_2)\}_{a \in \alts}$  must be standard. 
From Lemma~\ref{lem:maxwp_as_values} implies that the agent-maximizing alternatives for agent 1 in the economy $(u_1, u_2)$ and the economy $(u_1',u_2)$ must be the same.

If $\{t_{2,a}(u_1)\}_{a\in \alts}$ and $\{t_{2,a}(u_1')\}_{a\in \alts}$ are different, we are able to construct a type $u_2$ in the unrestricted parallel domain for whom the agent-maximizing alternatives under the two sets of prices are different. Therefore, if an alternative is agent-maximizing for both agents in the economy $(u_1, u_2)$, there cannot be any alternative that is agent-maximizing for both agents in the economy $(u_1',u_2)$, which contradicts DSIC. With the same agent-independent prices and Lemma~\ref{lem:maxwp_as_values}, it is obvious that the outcomes must also be the same. The case for $n \geq 3$ can be proved by induction. 
\fi
Consider economies with two agents. If we change one agent's type without changing her willingness to pay, her agent-maximizing alternative for given prices does not change (Lemma~\ref{lem:maxwp_as_values}). Now if the other agent's prices differ from those in the original economy, the other agent may prefer a different alternative at the new prices, thus no alternative can be agent-maximizing for both agents. This contradicts DSIC.
The more general case is proved by induction. 

With Lemma~\ref{lem:SufficiencyWP}, we can think about the willingness
to pay as agents' types since they are the only relevant information
for determining the outcomes of mechanisms.

The weak-monotonicity (W-Mon) in terms of the willingness to pay is defined as follows, which is analogous to the W-Mon condition in terms of values for quasi-linear utilities~\citep{bikhchandani2006weak}.
\fi

\begin{definition}[Weak Monotonicity] \label{defn:wmon} Let $\utilSpace = \prod_{i=1}^n \utilDomain_i \subseteq \nqlSpace$ be a utility domain. A choice rule $x: \utilSpace \rightarrow \alts$ satisfies \emph{weak monotonicity} (W-Mon) if for all $u_{-i} \in \utilSpace_{-i}$ and all $u_i, u_i' \in \utilDomain_i$, 
\begin{align*}
	x(u_i, u_{-i}) = a,~ x(u_i', u_{-i}) = b \Rightarrow \maxwp_{i,b}' - \maxwp_{i,b} \geq \maxwp_{i,a}' - \maxwp_{i,a}. 
\end{align*}
\end{definition}

In words, W-Mon in a parallel domain means that if alternative $a$ is selected under $(u_i, u_{-i})$ and alternative $b$
is selected under $(u_i', u_{-i})$, the additional willingness to pay for $b$
comparing with $a$ according to $u_i'$, i.e. $\maxwp_{i,b}'
-\maxwp_{i,a}'$, must be at least as big as the additional willingness
to pay for $b$ comparing with $a$ according to $u_i$: $\maxwp_{i,b} -
\maxwp_{i,a}$. This is a generalization of the W-Mon condition in terms of values as defined in~\citep{bikhchandani2006weak}, and the two are equivalent when utilities are quasi-linear, in which case $\maxwp_{i,a} - \maxwp_{i,b}= v_{i,a} - v_{i,b}$ holds for all $a,b \in \alts$.
\begin{lemma} 
 \label{lem:wmon_ic} With any parallel utility domain $\utilSpace \subseteq \plSpace$, every social choice mechanism satisfying (P1), (P2), (P4) and (P5) must satisfy W-Mon.
\end{lemma}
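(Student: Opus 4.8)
The plan is to exploit the equivalence recalled in the excerpt—that a deterministic DSIC mechanism is agent-independent and agent-maximizing—and then read off W-Mon from two instances of the agent-maximization inequality. Fix an agent $i$, a profile $u_{-i} \in \utilSpace_{-i}$, and two types $u_i, u_i' \in \utilDomain_i$ with $x(u_i, u_{-i}) = a$ and $x(u_i', u_{-i}) = b$. Since both alternatives are selected for some report of agent $i$, the agent-independent prices $t_a \triangleq t_{i,a}(u_{-i})$ and $t_b \triangleq t_{i,b}(u_{-i})$ are finite and, crucially, identical across the two economies. Agent-maximization at these common prices then gives $u_{i,a}(t_a) \geq u_{i,b}(t_b)$ under $u_i$, and $u_{i,b}'(t_b) \geq u_{i,a}'(t_a)$ under $u_i'$.

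Next I would pin down the ranges of the two prices. No subsidy (P5) forces $t_a, t_b \geq 0$, and IR (P4) applied to the selected alternative in each economy gives $t_a \leq \maxwp_{i,a}$ (under $u_i$) and $t_b \leq \maxwp_{i,b}'$ (under $u_i'$): indeed $\maxwp$ is by definition the price at which the selected-alternative utility drops to $v_{i,\worstalt_i} = \min_{a'} v_{i,a'}$, and $u_{i,\cdot}$ is strictly decreasing. Note this step deliberately avoids Lemma~\ref{lem:aiprice_characterization}: the domain here is an arbitrary parallel domain $\utilSpace \subseteq \plSpace$, not assumed unrestricted, so I cannot claim the prices are \emph{standard}—but I do not need standardness, only non-negativity together with the IR upper bound on the \emph{selected} alternative.

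I would then convert each utility inequality into an inequality on willingness to pay via Lemma~\ref{lem:maxwp_as_values}(i), whose proof uses only the parallel identity \eqref{equ:parallel_defn} and holds as soon as the two prices lie in the relevant valid ranges. This yields $t_b - t_a \geq \maxwp_{i,b} - \maxwp_{i,a}$ from the first inequality and $t_a - t_b \geq \maxwp_{i,a}' - \maxwp_{i,b}'$ from the second. Summing the two makes the price terms cancel and delivers $\maxwp_{i,b}' - \maxwp_{i,b} \geq \maxwp_{i,a}' - \maxwp_{i,a}$, which is exactly W-Mon.

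The one genuinely delicate point—the step I expect to require the most care—is that Lemma~\ref{lem:maxwp_as_values}(i) imposes a range hypothesis on \emph{both} alternatives, whereas IR controls only the price of the \emph{selected} one. Concretely, under $u_i$ I know $t_a \leq \maxwp_{i,a}$ but not necessarily $t_b \leq \maxwp_{i,b}$ (and symmetrically $t_a \leq \maxwp_{i,a}'$ may fail under $u_i'$). I would dispatch this with a short case split: if the non-selected alternative's price exceeds its willingness to pay, say $t_b > \maxwp_{i,b}$ under $u_i$, then combining with $t_a \leq \maxwp_{i,a}$ gives $t_b - t_a > \maxwp_{i,b} - \maxwp_{i,a}$ directly, so the required inequality holds (even strictly) by monotonicity alone, with no appeal to the lemma; the identical dichotomy handles the second inequality. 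Hence in every case both price inequalities hold, and the summation argument goes through unchanged, establishing W-Mon.
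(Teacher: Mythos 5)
Your proof is correct, and its skeleton is the same as the paper's: fix the common agent-independent prices, write the agent-maximization inequality once under $u_i$ and once under $u_i'$, translate both into inequalities between willingness to pay and prices, and add them so that the price terms cancel. Where you differ is in how the translation step is justified, and the difference is substantive. The paper's proof disposes of it in one line by citing Lemma~\ref{lem:maxwp_as_values}, which is stated for \emph{standard} prices; standardness is in turn guaranteed only by Lemma~\ref{lem:aiprice_characterization}, whose proof (in particular, the claim that some price must be exactly zero) requires the domain to have \emph{unrestricted} willingness to pay---an assumption absent from the statement of Lemma~\ref{lem:wmon_ic}, which asks only for ``any parallel utility domain $\utilSpace \subseteq \plSpace$.'' You avoid this dependence: non-negativity of $t_{i,a}(u_{-i})$ and $t_{i,b}(u_{-i})$ follows directly from (P5) because both are realized payments in their respective economies; the upper bound on the \emph{selected} alternative's price follows directly from (P4); and the possibility that the unselected alternative's price exceeds its willingness to pay is handled by your explicit case split, in which case the needed inequality holds by arithmetic alone, with no appeal to the parallel identity. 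The net effect is that your argument is valid for an arbitrary parallel domain exactly as the lemma is stated, whereas the paper's own proof is, strictly read, justified only when prices are known to be standard (e.g., on unrestricted parallel domains, which is the setting in which the lemma is subsequently applied). The underlying idea is unchanged, but your handling of the range hypotheses closes a small gap in the paper's one-line invocation of Lemma~\ref{lem:maxwp_as_values}.
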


\begin{proof} Consider two types $u_i$, $u_i'$  and a social choice mechanism $(x,t)$ s.t. $x(u) = a$ and $x(u_i', u_{-i}) = b$. We know from agent-maximization and Lemma~\ref{lem:maxwp_as_values} that facing prices $\{t_{i,a'}(u_{-i})\}_{a' \in \alts}$, alternative $a$ must be a maximizer of $\maxwp_{i,a'} - t_{i,a'}(u_{-i})$ according to $u_i$, and alternative $b$ must be a maximizer of $\maxwp'_{i,a'} - t_{i,a'}(u_{-i})$ according to $u_i'$. Thus, $\maxwp_{i,a} - t_{i,a}(u_{-i}) \geq \maxwp_{i,b} - t_{i,b}(u_{-i})$ and $\maxwp_{i,b}' - t_{i,b}(u_{-i}) \geq \maxwp_{i,a}' - t_{i,a}(u_{-i})$ must hold. Adding both sides of the  two inequalities we get $\maxwp_{i,a} +  \maxwp_{i,b}'  \geq \maxwp_{i,b} +\maxwp_{i,a}' \Rightarrow \maxwp_{i,b}' - \maxwp_{i,b} \geq  \maxwp_{i,a}' - \maxwp_{i,a}$.
\end{proof}

\section{Impossibility results} \label{sec:dictatorship}

We now state the main result in this paper.
\begin{restatable}[Dictatorship]{theorem}{ThmDictatorship} \label{thm:dictatorship} With at least three alternatives and a utility domain $\utilSpace = \prod_{i=1}^n \utilDomain_i$ s.t. 
\begin{enumerate}[(C1)]
	\item for each $i\in N$, $\utilDomain_i$ contains an unrestricted parallel domain,
	\item for at least $n-1$ agents, $\utilDomain_i \not\subset \plDomain$,
\end{enumerate}
the only social choice mechanisms that satisfy (P1)-(P5) are fixed price dictatorships.
\end{restatable}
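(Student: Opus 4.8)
The plan is to combine the generalized Roberts' theorem (Theorem~\ref{thm:roberts}) on the parallel part of the domain with a careful use of the non-parallel types guaranteed by (C2). Since each $\utilDomain_i$ contains an unrestricted parallel domain by (C1), the product of these is an unrestricted parallel utility domain $W\subseteq\utilSpace$, and the restriction of $(x,t)$ to $W$ still satisfies (P1)--(P5). With at least three alternatives, Theorem~\ref{thm:roberts} yields non-negative weights $\affcoeff_1,\dots,\affcoeff_n$ (not all zero) and constants $\affcnst_1,\dots,\affcnst_m$ such that on $W$ the choice rule is $x(u)\in\arg\max_{a}\{\sum_i\affcoeff_i\maxwp_{i,a}+\affcnst_a\}$. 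I would then reduce the whole theorem to a single claim: \emph{at most one of the weights $\affcoeff_i$ is strictly positive}. Granting this, ontoness (P3) forces exactly one positive weight $\affcoeff_{i^\ast}$; all other agents have weight zero and are priced out, while the dictator's agent-independent prices collapse, via~\eqref{equ:ai_prices_gwVCG}, to the fixed vector $\payment_a=\frac{1}{\affcoeff_{i^\ast}}(\max_{b}\affcnst_b-\affcnst_a)\ge 0$, independent of the others' reports, so by Lemma~\ref{lem:maxwp_as_values} the dictator receives a favorite alternative at these fixed prices---exactly a fixed price dictatorship.

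Two preliminary facts I would establish first. (i) Because a deterministic DSIC mechanism is agent-independent on all of $\utilSpace$, and the unrestricted parallel reports of any single agent already reach every alternative, agent $i$'s agent-independent price $t_{i,a}(u_{-i})$ at a \emph{parallel} profile $u_{-i}$ must coincide with the affine-maximizer price~\eqref{equ:ai_prices_gwVCG}; in particular, when some $\affcoeff_j>0$ we can shift agent $i$'s entire standard price vector over a range by varying agent $j$'s (parallel) reported willingness to pay. (ii) For an agent reporting only parallel types against any fixed background, Lemma~\ref{lem:maxwp_as_values} says her selected alternative is $\arg\max_{a}\{\maxwp_{i,a}-t_{i,a}\}$, so in the space of her reported willingness to pay her $a$-versus-$b$ indifference set is the straight line $\{q_a-q_b=t_{i,a}-t_{i,b}\}$.

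The heart of the proof---and the step I expect to be the main obstacle---is that two positive weights are impossible. Suppose $\affcoeff_i,\affcoeff_j>0$. By (C2) at most one agent is parallel-only, so at least one of $i,j$ is non-parallel; relabel so that $i$ is non-parallel, fix a type $\tilde u_i\in\utilDomain_i\setminus\plDomain$ of agent $i$, fix the remaining $n-2$ agents at parallel types (folding their contributions into the constants $\affcnst_a$), and keep a third alternative $c=\worstalt_i$ priced so that it is suboptimal for both $i$ and $j$. Now vary agent $j$'s parallel report $q$ and let $\psi(q)$ be the alternative selected when agent $i$ reports $\tilde u_i$, agent $j$ reports $q$, and the others report their fixed types. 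By agent-maximization---necessary for any deterministic DSIC mechanism---$\psi(q)$ must \emph{simultaneously} be agent $i$'s true favorite at her prices (tunable through $q$ by fact (i)) and agent $j$'s favorite at her fixed standard prices. Tracking the level $w$ at which agent $i$ is indifferent between $a$ and $b$, her indifference locus in $q$-space satisfies $q_a-q_b=\mathrm{const}-\frac{\affcoeff_i}{\affcoeff_j}\,\mathrm{gap}_i(w)$, where $\mathrm{gap}_i(w)=\tilde u_{i,a}^{-1}(w)-\tilde u_{i,b}^{-1}(w)$ is the horizontal distance between the two utility curves at level $w$. Non-parallelness of $\tilde u_i$ means precisely that $\mathrm{gap}_i(w)$ is \emph{not} constant in $w$, so this locus attains two distinct values of $q_a-q_b$ and is therefore not contained in any line $\{q_a-q_b=\gamma\}$. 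But by fact (ii) agent $j$'s indifference set \emph{is} such a line. The two loci thus disagree, and I would pick a report $q$ at which agent $i$ strictly prefers $b$ while agent $j$ strictly prefers $a$ (with $c$ suboptimal for both); then no alternative is simultaneously a favorite of both agents, contradicting agent-maximization and hence DSIC.

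The delicate part of this last step is feasibility: the disagreement report must keep both agents' relevant prices inside $[0,\maxwp]$ and keep $c$ strictly suboptimal, which I would arrange by working in the open ``non-parallel window'' around a witnessing level $w$ (using $c=\worstalt_i$ as the zero-price anchor, so that $a$ and $b$ may both carry positive prices) and invoking the unrestricted willingness to pay of agent $j$ to realize the required $q$. Finally, I would promote the conclusion from $W$ to all of $\utilSpace$: once every non-dictator has weight zero, agent-independence together with agent-maximization and Lemma~\ref{lem:aiprice_characterization} force each non-dictator to be priced out and the dictator to face the same standard price vector $\payment$ regardless of the (possibly non-parallel) reports of the others, giving $x(u)\in\arg\max_a u_{i^\ast,a}(\payment_a)$ and $t_{i^\ast}(u)=\payment_{x(u)}$ on the entire domain.
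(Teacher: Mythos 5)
Your overall architecture coincides with the paper's: restrict to the parallel subdomain, invoke Theorem~\ref{thm:roberts}, kill the possibility of two positive weights using a non-parallel type, then extend the dictatorship to the whole domain. Your central argument is also, in substance, the paper's Step~1 with the two agents' roles relabeled: the paper fixes the non-parallel type $u_2^\ast$, treats the \emph{parallel} agent's price vector at that profile as the unknown, and derives two incompatible bounds on $t_{1,b}(u^\ast_{-1})-t_{1,a}(u^\ast_{-1})$ from the gap values $\Delta_1=(u^\ast_{2,a})^{-1}(v^\ast_{2,b})$ and $\Delta_2$ at two distinct utility levels --- precisely your observation that the non-parallel agent's indifference locus in $q$-space is not a line while the parallel agent's is, made quantitative. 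Your ``fact~(i)'' is the paper's Lemma~\ref{lem:price_characterization}, and your feasibility remarks (anchor at $\worstalt_i$, pricing out the third alternative) match the paper's explicit constructions in Steps~1.1 and~1.2.

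There are, however, two genuine gaps. First, you assert that the restriction of $(x,t)$ to $W=\utilSpace\cap\plSpace$ satisfies (P1)--(P5). DSIC, determinism, IR and no-subsidy restrict trivially, but \emph{ontoness does not}: a priori some alternative might be selected only at profiles containing non-parallel types, in which case Theorem~\ref{thm:roberts} cannot be applied to the restriction. The paper needs Lemma~\ref{lem:subdomain_onto}, proved by induction on the number of non-parallel reporters, showing that $t_{i,a^\ast}(u_{-i})=+\infty$ propagates from parallel to non-parallel backgrounds. Second, and more seriously, your closing sentence promoting the conclusion from $W$ to all of $\utilSpace$ does not follow from ``agent-independence together with agent-maximization and Lemma~\ref{lem:aiprice_characterization}.'' Once even one opponent reports a non-parallel type, nothing established so far pins down the dictator's price vector at that profile: the affine-maximizer characterization, and hence \eqref{equ:ai_prices_gwVCG}, applies only to parallel backgrounds. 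This is exactly the same difficulty you handled carefully in the two-weights step, reappearing in the extension step, and it is where the paper spends most of its Step~2: an interleaved induction on the number of non-parallel reporters, with statements $\GG_\ell$ (the dictator faces prices $\vec{\payment}$ whenever at most $\ell$ opponents are non-parallel) and $\HH_\ell$ (non-dictators are priced out at such profiles), each implication requiring fresh adversarial constructions of parallel types for the dictator that force agent-maximization conflicts. Without an argument of this kind, your proof establishes the fixed price dictatorship only on the parallel subdomain, not on $\utilSpace$.
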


The quasi-linear domain is unrestricted and parallel, thus a
special case of the theorem can be stated as: on any utility domain
containing $\qlSpace$, if the utility domains of at least $n-1$ agents
contain non-parallel types, the only mechanisms satisfying (P1)-(P5)
are fixed price dictatorships. We provide here an outline of the proof and leave the full version to Appendix~\ref{appx:proof_dictatorship}.

Given any mechanism $(x,t)$ with (P1)-(P5) on a utility domain $\utilSpace$ satisfying (C1) and (C2), we prove that the restriction of the mechanism on the parallel subspace $ \uspaceintpl \triangleq \utilSpace \cap \plSpace$ must also satisfy (P1)-(P5).
Theorem~\ref{thm:roberts} then implies that on $\uspaceintpl$, the choice rule $x$ must be the maximizer of some affine function of agents' willingness to pay.
Fixing the choice rule, agent-independence and agent-maximization determine the agent-independent prices up to a constant (when there is no tie), and the requirement that prices being standard fully pins down $\{t_{i,a}(u_{-i}) \}_{a \in \alts}$ as \eqref{equ:ai_prices_gwVCG} and \eqref{equ:ai_prices_gwVCG_k0} for all $i \in N$ and any $u_{-i}$ s.t. $u_{i'}$ is parallel for all $i'\neq i$. 
Theorem~\ref{thm:roberts} requires that there exists at least one agent with a non-zero coefficient $\affcoeff_i$. 
If the number of agents for whom $\affcoeff_i > 0$ is more than one,  $\utilDomain_i \not\subset \plDomain$ for at least one of them. Assume w.l.o.g. that $\affcoeff_1, \affcoeff_2 > 0$ and $\exists u_2^\ast \in \utilDomain_2 \backslash \plDomain$. Fixing some parallel type profile of the rest of the agents $u_{-1,-2}^\ast$, $\{t_{1,a}(u_2^\ast, u_{-1,-2}^\ast) \}_{a\in\alts}$ are not yet pinned down by the above mentioned characterization since $u_2^\ast$ is not parallel. We prove that no agent-independent price $\{t_{1,a}(u_2^\ast, u_{-1,-2}^\ast) \}_{a\in\alts}$ guarantees that an alternative that is agent-maximizing for all agents always exists for all economies $(u_1, u_2^\ast, u_{-1,-2}^\ast)$ where $u_1 \in \utilDomain_1$. This contradicts DSIC. 

Now we know that there exists exactly one agent (say $i^\ast$) with a non-zero coefficient. This implies that when the reported profile is parallel (i.e. $u \in \utilSpace \cap \plSpace$), the outcome of the mechanism must be determined according to a fixed price dictatorship. By induction on the number of agents whose type is not parallel, we then prove that for any $u \in \utilSpace$, the outcome must also be determined by the same fixed price dictatorship.

\paragraph{On the Tightness of the Negative Result}

That the utility domain $\utilSpace$ contains an unrestricted parallel  domain is necessary for Theorem~\ref{thm:roberts}. If the number of agents that has a type outside of the parallel domain is smaller than $n-1$, there are at least two agents whose types are always parallel. The generalized weighted VCG mechanism with $\affcoeff_i > 0$ only for these agents would satisfy all (P1)-(P5), and still would not be a
dictatorship.

Regarding the properties (P1)-(P5), the mechanism being (P1) DSIC and
(P2) deterministic are trivially necessary. (P3) ontoness is
required, since if some alternatives are never selected, the number of
alternatives can be effectively reduced to two, in which case all types
satisfying (S1) and (S2) are parallel, thus even for the most general
$\nqlSpace$, any generalized weighted VCG mechanism with coefficients
not all zero satisfies (P1)-(P5).

The conditions on the utility domain~(C1) and~(C2) in the statement of Theorem~\ref{thm:dictatorship} require only a small deviation from quasi-linearity or the parallel domain. As an example, with $n=1$ and $m=3$, the type domain $\utilDomain_1 = \qlDomain$ and $\utilDomain_2 = \qlDomain \cup \{ u_2^\ast \}$ for any $u_2^\ast \notin \plDomain$ (e.g. as illustrated in Figure~\ref{fig:exmp_nqltype_u2_ast}) satisfies both~(C1) and~(C2). 
Assuming only~(C1) and~(C2), truthful mechanisms that violate one or both of (P4) and (P5) may still exist. For the above described utility domain, the mechanism which always adds 1 to agent 2's payment but otherwise functions exactly as a generalized weighted VCG mechanism satisfies (P1)-(P3) and (P5). More detailed discussions are provided in Appendix~\ref{appx:relax_ir_np}, and we present an alternative impossibility result assuming only (P1)-(P3) in Section~\ref{sec:relaxing_P4P5}.
 
\begin{figure}[t!]

\centering   
\begin{tikzpicture}[scale = 0.9][font=\normalsize]
\draw[->] (-0.2,0) -- (4.5,0) node[anchor=north] {$\payment$};

\draw[->] (0,-0.4) -- (0, 3.8) node[anchor=west] {$u_{2,a'}(\payment)$};

\draw[-] (-0.2, 3.4) -- (3.5, -0.3);

\draw[dashed](-0.2, 2.5) -- (0,2.3) to [out=0, in=-180] (0.7,1.6)-- (2.6, -0.3);

\draw[dashdotted](-0.2, 1.5) -- (1.6, -0.3);

\draw[dotted](1, 2.5) -- (1.0, -0.2) node[anchor=north] { { \scriptsize $1$ } };

\draw[dotted](0, 0.3) -- (3.3, 0.3);

\draw[-] (2.4,3) -- (2.9, 3) node[anchor=west] {$u^\ast_{2,a}(\payment)$};
\draw[dashed] (2.4,2.3) -- (2.9, 2.3) node[anchor=west] {$u^\ast_{2,b}(\payment)$};
\draw[dashdotted] (2.4,1.6) -- (2.9, 1.6) node[anchor=west] {$u^\ast_{2,c}(\payment)$};

\draw (0, 3.1)node[anchor = east] {$v_{2,a}$};
\draw (0, 2.4)node[anchor = east] {$v_{2,b}$};
\draw (0, 1.3)node[anchor = east] {$v_{2,c}$};


\end{tikzpicture}
\caption{An example non-parallel type $u_2^\ast$ of agent 2. \label{fig:exmp_nqltype_u2_ast}}
\end{figure}

Note that Theorem~\ref{thm:dictatorial_with_PE} does not rely on any efficiency assumption. Replacing ontoness by Pareto-efficiency, we can prove a similar dictatorship result which instead of (C2), requires that only a single agent has a single non-parallel type. See Theorem~\ref{thm:dictatorial_with_PE} in Appendix~\ref{appx:result_with_PE}.

\paragraph{On the Fixed Price Dictatorship}

In a fixed price dictatorship, the dictator may still be asked to make
some non-zero payments. It is clear that such a mechanism is DSIC, and
when the prices that the dictator faces are standard, it is also IR
and never pays the agents. In order to 
replace the fixed-price dictatorship in Theorem~\ref{thm:dictatorship} with full dictatorship, i.e. 
the dictator chooses her favorite alternative free of charge, we can
impose another condition, \emph{voluntary participation} (VP), which
means any agent can choose to walk away from the mechanism and accept the
alternative decided by the rest of the agents without having to make
any payment.
If the dictator is charged a positive fixed price $\payment_a > 0$ for some alternative $a \in \alts$, and $a$ is still the dictator's favorite choice under the fixed prices, then when $a$ is selected by the sub-economy without the dictator (which should be the case for some $u_{-i^\ast}$ given onto-ness), the dictator would have the incentive to walk away, in which case $a$ will be selected and the dictator pays $0$. This contradicts VP.

VP is stronger than IR. To see this, note that IR requires that for every agent at least one
of the prices is weakly below the agent's willingness to pay.
On the other hand, VP
requires that the mechanism is well defined for any economy of $n-1$
agents and satisfy the same properties, moreover, each agent must face
a zero-price for the alternative that would be selected in the
sub-economy without her.

Regarding the payments from the rest of the agents--- when the dictator strictly prefers a single alternative $a \in \alts$ (i.e. $\forall a' \neq a$, $u_{i^\ast,a'}(\payment_{a'}) < u_{i^\ast,a}(\payment_{a})$), the rest of the agents do not make any payment: $t_i(u) = 0$ for all $i \neq i^\ast$. In the degenerate case where $|\arg \max_{a \in \alts} u_{i^\ast, a}(\payment_a) |> 1$, i.e. the dictator is indifferent toward multiple best alternatives, some or all the rest of the agents may be charged a non-zero payment to break ties among the dictator's favorite alternatives (e.g., we could run a generalized VCG mechanism between two alternatives that are tied from the dictator's perspective), and this may still satisfy (P1)-(P5). See Appendix~\ref{appx:proof_dictatorship}.

\subsection{Relaxing IR and No Subsidy} \label{sec:relaxing_P4P5}

We show in the rest of the section that with more richness in non quasi-linearity (e.g. with the linear domain with two slopes defined below), the dictatorship result remains given only (P1)-(P3).

\begin{definition}[Linear Domain with Two Slopes]
$\utilDomain_i$ is a linear domain with two slopes if there exists $ \alpha_i,~ \beta_i > 0$, $\alpha_i \neq \beta_i$ s.t. for all $u_i \in \utilDomain_i$, $\forall a \in \alts$, either $u_i(\payment) = v_{i,a} - \alpha_i \payment$ for all $\payment \in \setR$ or $u_i(\payment)  = v_{i,a} - \beta_i \payment$ for all $\forall \payment \in \setR$.  
\end{definition}

We say a linear domain with two slopes $\utilDomain_i$ is \emph{unrestricted} if for each $i \in N$, the values $\{v_{i,a}\}$ can be any real numbers, and the slopes of utility functions for different alternatives can be any combination of $\alpha_i$ and $\beta_i$.

\begin{restatable}{theorem}{ThmTwoSlopeDic} \label{thm:two_slope_dictatorship} With at least three alternatives and a utility domain $\utilSpace = \prod_{i=1}^n \utilDomain_i$ s.t. for each $i\in N$, $\utilDomain_i$ contains an unrestricted linear domain with two slopes, a social choice mechanism satisfying (P1)-(P3) must be a fixed price dictatorship.
\end{restatable}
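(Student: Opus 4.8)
The plan is to reduce to a quasi-linear situation on a rich sub-domain, apply Roberts' theorem there, and then use the second slope to upgrade the resulting affine-maximizer structure into a genuine fixed price dictatorship. Fix the unrestricted linear two-slopes sub-domain supplied by the hypothesis, with slopes $\alpha_i,\beta_i$ for each agent $i$, and let $\utilSpace^\alpha = \prod_i \utilDomain_i^\alpha$ be the profiles in which every agent $i$ assigns slope $\alpha_i$ to every alternative. This $\utilSpace^\alpha$ is an unrestricted linear parallel domain, and the per-agent change of units $\tilde t_i = \alpha_i t_i$ turns each utility into the quasi-linear form $v_{i,a} - \tilde t_i$; since restricting the report space only weakens DSIC, the restricted mechanism $(x,\tilde t)|_{\utilSpace^\alpha}$ is a deterministic DSIC mechanism on an unrestricted quasi-linear domain. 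First I would apply the original Roberts' theorem (which needs only (P1)--(P3), not IR or no subsidy) to conclude that on $\utilSpace^\alpha$ the choice rule is an affine maximizer of the values, and since $v_{i,a} = \alpha_i \maxwp_{i,a} + v_{i,\worstalt_i}$ with the last term independent of $a$, equivalently $x(u) \in \arg\max_{a}\{\sum_i \affcoeff_i \maxwp_{i,a} + \affcnst_a\}$ with $\affcoeff_i \ge 0$ and, by (P3), not all $\affcoeff_i$ zero. One technical point to dispatch is ontoness on the subspace: I would show the single-slope restriction already attains all $m\ge 3$ alternatives, so that Roberts applies with the full alternative set.

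Next I would show that exactly one coefficient is positive. Suppose $\affcoeff_1,\affcoeff_2 > 0$ and fix the remaining agents at slope-$\alpha$ types. The Groves structure pins the differences among the agent-independent prices that agent $1$ faces, making them depend with positive weight on agent $2$'s willingness to pay, while leaving the overall price level $h_1(u_{-1})$ free. I would then let agent $2$ report a mixed-slope (hence non-parallel) type and invoke the agent-maximization characterization: the selected alternative must simultaneously maximize agent $1$'s realized utility at $\{t_{1,a}\}$ and agent $2$'s realized utility at $\{t_{2,a}(u_1)\}$ as $u_1$ ranges over $\utilDomain_1^\alpha$. Because a mixed-slope type ranks alternatives by quantities of the form $s_a(\maxwp_{i,a}-t_a)$ with $s_a\in\{\alpha_i,\beta_i\}$, its preferred alternative at given prices shifts with the absolute price level in a way that a single-slope (parallel) type's does not; I would argue that no choice of the free level $h_1$ can keep a common maximizer available for every $u_1$, contradicting DSIC. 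This mirrors the clash argument behind Theorem~\ref{thm:dictatorship}, with the mixed-slope type playing the role of the non-parallel type $u_2^\ast$; the essential new work is making the contradiction robust to the undetermined Groves levels, which is exactly what the second slope buys.

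With a single positive weight $\affcoeff_{i^\ast}$, on $\utilSpace^\alpha$ the outcome maximizes $\affcoeff_{i^\ast}\maxwp_{i^\ast,a}+\affcnst_a$, i.e. the dictator's favorite at prices whose differences are the fixed constants $(\affcnst_{b}-\affcnst_a)/\affcoeff_{i^\ast}$; only the common level $h_{i^\ast}(u_{-i^\ast})$ remains free. The final step is to pin this level and extend to all profiles. I would let the dictator report a mixed-slope type against single-slope others: the prices she faces still equal the single-slope-determined ones (they depend only on the others), so if $h_{i^\ast}$ varied with those others' reports, her mixed-type favorite would vary with them, making an otherwise zero-weight agent pivotal. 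The same ``no common maximizer'' argument---now between the mixed-type dictator and the agent made pivotal through the level---would then defeat DSIC, forcing $h_{i^\ast}$ to be constant. Hence the dictator faces one fixed price vector $\vec\payment$ with $x(u)\in\arg\max_a u_{i^\ast,a}(\payment_a)$, and the extension to the whole two-slopes domain goes by induction on the number of agents reporting a non-single-slope type, exactly as in Theorem~\ref{thm:dictatorship}: every non-dictator has zero weight and therefore cannot move the outcome, so the dictator's favorite at $\vec\payment$ is selected throughout and the mechanism is the fixed price dictatorship $(i^\ast,\vec\payment)$.

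I expect the main obstacle to be the second paragraph, and more precisely the fact that, unlike in Theorem~\ref{thm:dictatorship}, IR and no subsidy are unavailable, so the agent-independent price levels are a priori undetermined. The whole theorem rests on showing that the extra slope removes this freedom: a common additive shift of all prices changes comparisons between alternatives carrying different slopes, so a mixed-slope type's ranking genuinely detects the price level. Turning this sensitivity into a clean, level-independent contradiction---valid for every $h_i(u_{-i})$ and for three or more alternatives---is the crux; the reduction to Roberts on $\utilSpace^\alpha$ and the inductive extension are routine by comparison.
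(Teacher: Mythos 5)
Your overall scaffolding matches the paper's: restrict to the single-slope subdomain $\utilSpace^\alpha$, obtain an affine maximizer of willingness to pay there (your rescaling $\tilde t_i = \alpha_i t_i$ to reduce to the original quasi-linear Roberts theorem is a perfectly good substitute for the paper's Lemma~\ref{lem:gen_Roberts_SPD} on strictly parallel domains, and ontoness of the restriction is the paper's Lemma~\ref{lem:subdomain_onto}), then show a single positive coefficient, pin down fixed prices, and extend by induction. The problem is that the step you yourself call ``the crux''--- turning the level-sensitivity of mixed-slope types into a contradiction that is robust to the undetermined price levels --- is exactly the content of the theorem, and you leave it as a promise. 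Worse, the configuration you propose for it is the harder one. You let agent $2$ report the mixed-slope type and then quantify over agent $1$'s single-slope reports. In that configuration \emph{both} sides carry undetermined degrees of freedom: agent $1$'s price vector $\{t_{1,a}(u_2^{\mathrm{mixed}}, u_{-1,-2})\}$ is entirely unconstrained (the Roberts characterization says nothing about slices through a non-$\alpha$ profile), and agent $2$'s prices have a free level $h_2(u_1, u_{-1,-2})$ that may depend on $u_1$. Since cross-slope comparisons of a mixed-slope type are precisely what reacts to a uniform price shift, this per-$u_1$ freedom is a tool the adversarial mechanism can use to restore a common maximizer; nothing in your sketch rules that out, and your assertion that ``no choice of the free level can keep a common maximizer available for every $u_1$'' is stated, not argued.

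The paper's proof succeeds by choosing a configuration in which the undetermined levels never enter. In Claim~\ref{clm:wtp_diff_fixed_price} it compares two economies $(u_2, u_{-1,-2})$ and $(u_2', u_{-1,-2})$ that differ only in agent $2$'s \emph{single-slope} type, with $\maxwp_{2,a} - \maxwp_{2,b} = \maxwp_{2,a}' - \maxwp_{2,b}'$. Then (i) agent $1$'s price differences agree across the two economies by the pinned difference formula, so only a level discrepancy is possible, and a \emph{mixed-slope type of agent $1$} ($u_{1,a}$ with slope $\alpha_1$, $u_{1,b}$ with slope $\beta_1$) detects that discrepancy and forces the two outcomes to be $a$ and $b$ respectively; and (ii) agent $2$'s prices are \emph{literally the same vector} in both economies by agent-independence (her opponents' profile is unchanged), so no free level appears on her side at all; her agent-maximization at equal willingness-to-pay differences then forces an exact equality, which a small perturbation of $v_{1,b}$ contradicts. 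Bootstrapping this claim over willingness-to-pay differences shows agent $1$'s prices cannot depend on agent $2$'s type at all, contradicting $\affcoeff_2 > 0$. Pinning the fixed price vector afterwards requires a further double induction (the statements $\GG_\ell^\ast$ and $\HH_\ell^\ast$ in the paper's Step~2), not merely the observation that a zero-weight agent ``becomes pivotal''--- pivotality alone is not a contradiction; one must show the other agents' prices are infinite off the dictator's favorite. So: right architecture and a correct first paragraph, but the theorem's actual content --- the level-robust clash argument --- is missing, and the route you sketch for it faces an obstruction (the $u_1$-dependent level $h_2$) that the paper's comparison is specifically designed to avoid.
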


Intuitively, each $\utilDomain_i$ contains as a sub-domain an unrestricted linear parallel domain (e.g. the set of $u_i$ s.t. $u_{i,a}(\payment) = v_{i,a} - \alpha_i \payment$ for all $z \in \setR$ and all $a\in \alts$), which is a special case of a \emph{strictly parallel domain}. For strictly parallel domains, with only (P1)-(P3), we can generalize Roberts' Theorem, and still determine the agent-independent prices up to a constant.
We then prove that if more than one agent has a positive coefficient in the choice rule, the induced agent-independent prices will result in contradictions when an agent's utility domain consists of utility functions with mixed slopes. Then we can prove by induction that the only agent with a positive coefficient must be a fixed price dictator. See Appendix~\ref{appx:two_slope_dictatorship} for the full proof.

The theorem still holds if $\alpha_i = \alpha$, $\beta_i = \beta$ for all $i$. Moreover, we would reach the same result if the $\alpha_i$'s and $\beta_i$'s are known to the mechanism designer. Since the $\alpha_i$'s and the $\beta_i$'s can be  arbitrarily close, this theorem shows that very slight disturbances on the slopes of agents' utility functions is sufficient to rule out the existence of truthful non-dictatorial mechanisms.

\section{Conclusions}

The existence of truthful and non-dictatorial social choice mechanisms strongly depends on whether monetary transfers are allowed. The seminal Gibbard-Satterthwaite theorem proves that without payments, the only truthful and onto mechanisms are dictatorial, whereas for the (rather restrictive) quasi-linear utility domain, any affine maximizer of agent values can be implemented if payments are allowed. 
%
%
We study social choice with payments and general utilities, distinguish  types being \emph{parallel} as the central property of quasi-linearity for DSIC mechanisms to exist, generalize (with additional conditions IR and No subsidy) Roberts' theorem to parallel domains with unrestricted willingness to pay, and provide a tight characterization of the largest parallel domain. Within the largest parallel domain, the generalized weighted VCG mechanisms implement any affine maximizer of agents' willingness to pay, and satisfy DSIC, onto, deterministic, IR and do not make payments to agents. Adding any non-parallel type to an unrestricted parallel domain, the only mechanisms with the above properties are dictatorial. 
We also discuss utility domains that are richer in their non quasi-linearity but still deviate very slightly from the quasi-linear domain, for which individual rationality and no subsidy can be relaxed and the dictatorship result remains.

\if 0
Interesting directions for future work include 
studying  mechanisms with weaker solution
concepts than DSIC or with randomization, and
analyzing how non quasi-linearity can be restricted to enable DSIC mechanisms for other problems (e.g. assignment with multiple demands) where impossibility results are known when fully general utility functions are allowed. 

\dcp{careful with future directions. don't want to
undercut what has been done here and non-standard for econ paper.
maybe just give 1-2 general sentences about the prospect for
the kinds of future research?} \hma{Is the above short enough or should we just get rid of it all together?}
\rmr{separate this. Future directions on social choice are one thing (including maybe other classes of NQL utility functions). 
Another thing is analysis of mechanisms for other problems with NQL utilities. Note that in some sense every mechanism is a voting rule with payments, but with some structure imposed on outcomes and utilities.}
\hma{Agreed. People usually just call these ``restricted domains" but we got rid of this since it may be confusing. Do we want a future work paragraph or not? Short or ``separated" and in more detail?}
\fi


\newpage

\bibliography{nqlsc-ecma-refs-original}

\begin{thebibliography}{45}
\providecommand{\natexlab}[1]{#1}
\providecommand{\url}[1]{\texttt{#1}}
\expandafter\ifx\csname urlstyle\endcsname\relax
  \providecommand{\doi}[1]{doi: #1}\else
  \providecommand{\doi}{doi: \begingroup \urlstyle{rm}\Url}\fi

\bibitem[Aggarwal et~al.(2009)Aggarwal, Muthukrishnan, P{\'a}l, and
  P{\'a}l]{aggarwal2009}
Gagan Aggarwal, S~Muthukrishnan, D{\'a}vid P{\'a}l, and Martin P{\'a}l.
\newblock General auction mechanism for search advertising.
\newblock In \emph{Proceedings of the 18th international conference on World
  wide web}, pages 241--250. ACM, 2009.

\bibitem[Alaei et~al.(2010)Alaei, Jain, and Malekian]{Alaei2010}
Saeed Alaei, K~Jain, and A~Malekian.
\newblock {Competitive Equilibria in Two Sided Matching Markets with
  Non-transferable Utilities}.
\newblock \emph{arXiv:1006.4696}, 10\penalty0 (2):\penalty0 34--36, 2010.
\newblock URL \url{http://arxiv.org/abs/1006.4696}.

\bibitem[Aswal et~al.(2003)Aswal, Chatterji, and Sen]{aswal2003dictatorial}
Navin Aswal, Shurojit Chatterji, and Arunava Sen.
\newblock Dictatorial domains.
\newblock \emph{Economic Theory}, 22\penalty0 (1):\penalty0 45--62, 2003.

\bibitem[Baisa(2016)]{baisa2016efficient}
Brian Baisa.
\newblock Efficient multi-unit auctions for normal goods.
\newblock \emph{SSRN 2824921}, 2016.

\bibitem[Baisa(2017)]{baisa2017auction}
Brian Baisa.
\newblock Auction design without quasilinear preferences.
\newblock \emph{Theoretical Economics}, 12\penalty0 (1):\penalty0 53--78, 2017.

\bibitem[Barbera(2001)]{barbera2001introduction}
Salvador Barbera.
\newblock An introduction to strategy-proof social choice functions.
\newblock \emph{Social Choice and Welfare}, 18\penalty0 (4):\penalty0 619--653,
  2001.

\bibitem[Bikhchandani et~al.(2006)Bikhchandani, Chatterji, Lavi, Mu'alem,
  Nisan, and Sen]{bikhchandani2006weak}
Sushil Bikhchandani, Shurojit Chatterji, Ron Lavi, Ahuva Mu'alem, Noam Nisan,
  and Arunava Sen.
\newblock Weak monotonicity characterizes deterministic dominant-strategy
  implementation.
\newblock \emph{Econometrica}, 74\penalty0 (4):\penalty0 1109--1132, 2006.

\bibitem[Black(1948)]{black1948rationale}
Duncan Black.
\newblock On the rationale of group decision-making.
\newblock \emph{The Journal of Political Economy}, pages 23--34, 1948.

\bibitem[Che et~al.(2012)Che, Gale, and Kim]{che2012assigning}
Yeon-Koo Che, Ian Gale, and Jinwoo Kim.
\newblock Assigning resources to budget-constrained agents.
\newblock \emph{The Review of Economic Studies}, page rds025, 2012.

\bibitem[Clarke(1971)]{clarke1971}
Edward~H Clarke.
\newblock Multipart pricing of public goods.
\newblock \emph{Public choice}, 11\penalty0 (1):\penalty0 17--33, 1971.

\bibitem[Cramton(1997)]{cramton1997fcc}
Peter Cramton.
\newblock The fcc spectrum auctions: An early assessment.
\newblock \emph{Journal of Economics \& Management Strategy}, 6\penalty0
  (3):\penalty0 431--495, 1997.

\bibitem[Demange and Gale(1985)]{demange1985strategy}
Gabrielle Demange and David Gale.
\newblock The strategy structure of two-sided matching markets.
\newblock \emph{Econometrica: Journal of the Econometric Society}, pages
  873--888, 1985.

\bibitem[Dobzinski et~al.(2012)Dobzinski, Lavi, and Nisan]{dobzinski2012multi}
Shahar Dobzinski, Ron Lavi, and Noam Nisan.
\newblock Multi-unit auctions with budget limits.
\newblock \emph{Games and Economic Behavior}, 74\penalty0 (2):\penalty0
  486--503, 2012.

\bibitem[Dokow et~al.(2012)Dokow, Feldman, Meir, and Nehama]{dokow2012}
Elad Dokow, Michal Feldman, Reshef Meir, and Ilan Nehama.
\newblock Mechanism design on discrete lines and cycles.
\newblock In \emph{Proceedings of the 13th ACM Conference on Electronic
  Commerce}, pages 423--440. ACM, 2012.

\bibitem[D{\"u}tting et~al.(2009)D{\"u}tting, Henzinger, and
  Weber]{dutting2009bidder}
Paul D{\"u}tting, Monika Henzinger, and Ingmar Weber.
\newblock Bidder optimal assignments for general utilities.
\newblock In \emph{International Workshop on Internet and Network Economics},
  pages 575--582. Springer, 2009.

\bibitem[Frederick et~al.(2002)Frederick, Loewenstein, and
  O'donoghue]{Frederick2002}
Shane Frederick, George Loewenstein, and T~O'donoghue.
\newblock {Time discounting and time preference: A critical review}.
\newblock \emph{Journal of Economic Literature}, 40\penalty0 (2):\penalty0
  351--401, 2002.
\newblock URL \url{http://www.jstor.org/stable/2698382}.

\bibitem[Garratt and Pycia(2014)]{garratt2015efficient}
Rod Garratt and Marek Pycia.
\newblock Efficient bilateral trade.
\newblock \emph{Working paper, available at SSRN 2444279}, 2014.

\bibitem[Gibbard(1973)]{gibbard1973}
Allan Gibbard.
\newblock Manipulation of voting schemes: a general result.
\newblock \emph{Econometrica: journal of the Econometric Society}, pages
  587--601, 1973.

\bibitem[Groves(1973)]{groves1973}
Theodore Groves.
\newblock Incentives in teams.
\newblock \emph{Econometrica: Journal of the Econometric Society}, pages
  617--631, 1973.

\bibitem[Kalai et~al.(1979)Kalai, Muller, and Satterthwaite]{kalai1979social}
Ehud Kalai, Eitan Muller, and Mark~A Satterthwaite.
\newblock Social welfare functions when preferences are convex, strictly
  monotonic, and continuous.
\newblock \emph{Public Choice}, 34\penalty0 (1):\penalty0 87--97, 1979.

\bibitem[Kash et~al.(2007)Kash, Friedman, and Halpern]{kash2007optimizing}
Ian~A Kash, Eric~J Friedman, and Joseph~Y Halpern.
\newblock Optimizing scrip systems: Efficiency, crashes, hoarders, and
  altruists.
\newblock In \emph{Proceedings of the 8th ACM conference on Electronic
  commerce}, pages 305--315. ACM, 2007.

\bibitem[Kazumura and Serizawa(2016)]{kazumura2016efficiency}
Tomoya Kazumura and Shigehiro Serizawa.
\newblock Efficiency and strategy-proofness in object assignment problems with
  multi-demand preferences.
\newblock \emph{Social Choice and Welfare}, 47\penalty0 (3):\penalty0 633--663,
  2016.

\bibitem[Kazumura et~al.(2017)Kazumura, Mishra, and
  Serizawa]{kazumuramechanism}
Tomoya Kazumura, Debasis Mishra, and Shigehiro Serizawa.
\newblock Mechanism design without quasilinearity.
\newblock \emph{ISER Discussion paper No. 1005, available at SSRN 2982971},
  2017.

\bibitem[Krishna and Perry(1998)]{krishna1998}
Vijay Krishna and Motty Perry.
\newblock Efficient mechanism design.
\newblock Technical report, 1998.

\bibitem[Lavi et~al.(2003)Lavi, Mu'Alem, and Nisan]{lavi2003towards}
Ron Lavi, Ahuva Mu'Alem, and Noam Nisan.
\newblock Towards a characterization of truthful combinatorial auctions.
\newblock In \emph{Foundations of Computer Science, 2003. Proceedings. 44th
  Annual IEEE Symposium on}, pages 574--583. IEEE, 2003.

\bibitem[Lavi et~al.(2009)Lavi, Mu’alem, and Nisan]{lavi2009two}
Ron Lavi, Ahuva Mu’alem, and Noam Nisan.
\newblock Two simplified proofs for roberts’ theorem.
\newblock \emph{Social Choice and Welfare}, 32\penalty0 (3):\penalty0 407--423,
  2009.

\bibitem[Ma et~al.(2016{\natexlab{a}})Ma, Meir, and Parkes]{Ma_ijcai16_b}
Hongyao Ma, Reshef Meir, and David~C. Parkes.
\newblock {Social Choice for Agents with General Utilities}.
\newblock In \emph{Proceedings of the 25th International Joint Conference on
  Artificial Intelligence (IJCAI'16)}, 2016{\natexlab{a}}.

\bibitem[Ma et~al.(2016{\natexlab{b}})Ma, Meir, Parkes, and
  Zou]{ma2016contingent}
Hongyao Ma, Reshef Meir, David~C. Parkes, and James Zou.
\newblock Contingent payment mechanisms to maximize resource utilization.
\newblock \emph{arXiv preprint arXiv:1607.06511}, 2016{\natexlab{b}}.

\bibitem[Ma et~al.(2016{\natexlab{c}})Ma, Robu, Li, and Parkes]{Ma_ijcai16}
Hongyao Ma, Valentin Robu, Na~Li, and David~C. Parkes.
\newblock {Incentivizing Reliability in Demand-Side Response}.
\newblock In \emph{Proceedings of the 25th International Joint Conference on
  Artificial Intelligence (IJCAI'16)}, 2016{\natexlab{c}}.

\bibitem[Maskin and Riley(1984)]{maskin1984optimal}
Eric Maskin and John Riley.
\newblock Optimal auctions with risk averse buyers.
\newblock \emph{Econometrica: Journal of the Econometric Society}, pages
  1473--1518, 1984.

\bibitem[Morimoto and Serizawa(2015)]{morimoto2015strategy}
Shuhei Morimoto and Shigehiro Serizawa.
\newblock Strategy-proofness and efficiency with non-quasi-linear preferences:
  A characterization of minimum price walrasian rule.
\newblock \emph{Theoretical Economics}, 10\penalty0 (2):\penalty0 445--487,
  2015.

\bibitem[Moulin(1980)]{moulin1980strategy}
Herv{\'e} Moulin.
\newblock On strategy-proofness and single peakedness.
\newblock \emph{Public Choice}, 35\penalty0 (4):\penalty0 437--455, 1980.

\bibitem[Myerson(1981)]{myerson1981optimal}
Roger~B Myerson.
\newblock Optimal auction design.
\newblock \emph{Mathematics of operations research}, 6\penalty0 (1):\penalty0
  58--73, 1981.

\bibitem[Nehring and Puppe(2007)]{nehring2007structure}
Klaus Nehring and Clemens Puppe.
\newblock The structure of strategy-proof social choice - part i: General
  characterization and possibility results on median spaces.
\newblock \emph{Journal of Economic Theory}, 135\penalty0 (1):\penalty0
  269--305, 2007.

\bibitem[Pai and Vohra(2014)]{pai2014optimal}
Mallesh~M Pai and Rakesh Vohra.
\newblock Optimal auctions with financially constrained buyers.
\newblock \emph{Journal of Economic Theory}, 150:\penalty0 383--425, 2014.

\bibitem[Pramanik(2015)]{pramanik2015further}
Anup Pramanik.
\newblock Further results on dictatorial domains.
\newblock \emph{Social Choice and Welfare}, 45\penalty0 (2):\penalty0 379--398,
  2015.

\bibitem[Pratt(1992)]{pratt1992risk}
John~W Pratt.
\newblock Risk aversion in the small and in the large.
\newblock In \emph{Foundations of Insurance Economics}, pages 83--98. Springer,
  1992.

\bibitem[Roberts(1979)]{roberts1979}
Kevin Roberts.
\newblock The characterization of implementable choice rules.
\newblock \emph{Aggregation and revelation of preferences}, 12\penalty0
  (2):\penalty0 321--348, 1979.

\bibitem[Rochet(1987)]{rochet1987necessary}
Jean-Charles Rochet.
\newblock A necessary and sufficient condition for rationalizability in a
  quasi-linear context.
\newblock \emph{Journal of mathematical Economics}, 16\penalty0 (2):\penalty0
  191--200, 1987.

\bibitem[Saks and Yu(2005)]{saks2005weak}
Michael Saks and Lan Yu.
\newblock Weak monotonicity suffices for truthfulness on convex domains.
\newblock In \emph{Proceedings of the 6th ACM conference on Electronic
  commerce}, pages 286--293. ACM, 2005.

\bibitem[Sato(2010)]{sato2010circular}
Shin Sato.
\newblock Circular domains.
\newblock \emph{Review of Economic Design}, 14\penalty0 (3-4):\penalty0
  331--342, 2010.

\bibitem[Satterthwaite(1975)]{satterthwaite1975}
Mark~Allen Satterthwaite.
\newblock Strategy-proofness and arrow's conditions: Existence and
  correspondence theorems for voting procedures and social welfare functions.
\newblock \emph{Journal of Economic Theory}, 10\penalty0 (2):\penalty0
  187--217, 1975.

\bibitem[Schummer and Vohra(2002)]{schummer2002}
James Schummer and Rakesh~V Vohra.
\newblock Strategy-proof location on a network.
\newblock \emph{Journal of Economic Theory}, 104\penalty0 (2):\penalty0
  405--428, 2002.

\bibitem[Vickrey(1961)]{vickrey1961}
William Vickrey.
\newblock Counterspeculation, auctions, and competitive sealed tenders.
\newblock \emph{The Journal of finance}, 16\penalty0 (1):\penalty0 8--37, 1961.

\bibitem[Vohra(2011)]{vohra2011}
Rakesh~V Vohra.
\newblock \emph{Mechanism design: a linear programming approach}, volume~47.
\newblock Cambridge University Press, 2011.

\end{thebibliography}

\newpage
\appendix

\section{Proofs} \label{appx:proofs}

\subsection{Proof of Lemma~\ref{lem:aiprice_characterization}} \label{appx:aiprice_characterization}

\LemStandardPrices*

\begin{proof}
We first prove parts (i) and (ii) given (P4) and (P5). Assume part (i) does not hold, that there exists $u_{-i}$ and $a \in \alts$ s.t. $t_{i,a}(u_{-i}) < 0$. Consider $u_1$ s.t. $a = \favalt_i$. In any agent-maximizing mechanism, the agent is guaranteed utility at least $u_{i,a}(t_{i,a}(u_{-i})) > \max_{a' \in \alts} v_{i,a'}$, which is not possible without the mechanism making a positive payment to the agent, and this violates (P5). Now we only need to prove that it cannot be the case that $t_{i,a}(u_{-i}) > 0$ for all $a \in \alts$. This is obvious, since if otherwise, there exists $u_i \in \utilDomain_i$ for whom $p_{i,a} < t_{i,a}(u_{-i})$ for all $a\in \alts$ thus $\max_{a \in \alts} u_{i,a}( t_{i,a}(u_{-i}) ) < \min_{a\in \alts} v_{i,a}$, in which case (P4) is violated.

We now prove the other direction. Given part (i), it is obvious that no matter which alternative is selected, the transfer from any agent to the mechanism is non-negative thus (P5) holds. Given (ii), we know that the agent's minimum possible utility under an agent-maximizing mechanism is $v_{i,a}$, which is at least $\min_{a' \in \alts} v_{i,a'}$, thus (P4) holds. 
\end{proof}

\if 0

\hma{This lemma actually failed. Here's the obsoleted proof}

\subsection{Proof of Lemma~\ref{lem:SufficiencyWP}} \label{appx:proof_lemma}


\begin{proof} We observe from Lemma~\ref{lem:aiprice_characterization} that the agent-independent prices must be standard, and from Lemma~\ref{lem:maxwp_as_values} that for two agents with the same willingness to pay and facing the same set of standard prices, their agent-maximizing alternatives are the same. We use these observations to prove part (i) of the lemma by induction on $|\{j \in N ~|~  j \neq i,~ u_{j} \neq u_{j}' \}|$, the number of agents in $ N \backslash \{i\}$ whose the types differ according to $u_{-i}$ and $u_{-i}'$. Given part (i), part (ii) easily follows from Lemma~\ref{lem:maxwp_as_values}, since the agent-maximizing alternative, which is the selected outcome, would be the same for $u_i$ and $u_i'$, and so are the payments determined for each agent by the mechanism.

We now prove part (i).
First, when there is zero agent in $ N \backslash \{i\}$ with different types according to $u_{-i}$ and $u_{-i}'$ (i.e. $u_{-i} = u'_{-i}$), it is obvious that $t_{i,a}(u_{-i}) = t_{i,a}(u_{-i}')$ for all $a \in \alts$. 

Now assume that when there are at most $\ell - 1$ agents in $N \backslash \{i\}$ that have different types but same willingness to pay according to  $u_{-i}$ and $u'_{-i}$, $t_{i,a}(u_{-i}) = t_{i,a} (u'_{-i})$ holds for all $a \in \alts$. We would like to prove that this is also the case when $|\{ j \in N | j \neq i, u_{j} \neq u_{j}' \}| = \ell$. 
Without loss of generality, we focus on agent $\ell + 1$, and assume that $u_{i} \neq u_{i'}$ for all $i \leq \ell$, and $u_{i} = u_{i'}$ for all $i > \ell + 1$. What we need to prove is $t_{\ell+1,a}(u_{-(\ell+1)}) = t_{\ell+1,a}(u_{-(\ell+1)}') $ for all $a \in \alts$. Assume otherwise, that there exists an alternative $a$ such that $t_{\ell+1,a}(u_{-(\ell+1)}) \neq t_{\ell+1,a}(u_{-(\ell+1)}')$. We are going to construct a type $u_{\ell+1}^\ast$ of agent $\ell+1$ in the unrestricted parallel domain for whom the agent-maximizing alternatives under the two sets of prices are different.

Now let $u_\ell = u_\ell' = u_{\ell+1}^\ast$, we know from the induction assumption that $t_{i,a}(u_{-i}) = t_{i,a}(u_{-i}')$ must hold for all $a\in \alts$ for all $i \leq \ell$, since $u_{-i}$ and $u_{-i}'$ differ by $\ell - 1$ elements. We know from the above discussion that the agent-maximizing alternatives according to $u_1$ and $u_1'$ must be the same, since they have the same willingness to pay and face the same standard prices.
Therefore, in the two economies $(u_{-(\ell+1)}, u_{\ell+1}^\ast)$ and $(u_{-(\ell+1)}', u_{\ell+1}^\ast)$, agent $1$ has the same agent-maximizing alternative while agent $\ell+1$ has different agent-maximizing alternatives. This shows that in at least one of the two economies, an alternative that is agent-maximizing for all agents does not exist, which contradicts DSIC.

We now construct such a type $u_{\ell+1}^\ast$. W.l.o.g., name an  alternative for which the prices agent $\ell+1$ faces differ as $a$, and assume $t_{\ell+1,a}(u_{-(\ell+1)}) < t_{\ell+1,a}(u_{-(\ell+1)}')$. Observe that for at least one other alternative, which we call $b$, $t_{\ell+1,b}(u_{-(\ell+1)})$ $\geq t_{\ell+1,b}(u_{-(\ell+1)}')$ must hold. If not, we have $t_{\ell+1,a'}(u_{-(\ell+1)}) < t_{\ell+1,a'}(u_{-(\ell+1)}')$ for all $a' \in \alts$, which implies that $t_{\ell+1,a'}(u_{-(\ell+1)}')$ are all strictly positive (since $t_{\ell+1,a'}(u_{-(\ell+1)}) \geq 0$ holds for all $a'$ according to part (i) of Lemma~\ref{lem:aiprice_characterization}), and this violates part (ii) of Lemma~\ref{lem:aiprice_characterization}. 

Given this, we look for $u_{\ell+1}^\ast$ in the unrestricted parallel domain such that the agent-maximizing alternative at prices $\{t_{\ell+1,a'}(u_{-(\ell+1)})\}_{a' \in \alts}$ is alternative $a$ and the agent maximizing alternative at prices $\{t_{\ell+1,a'}(u_{-(\ell+1)}')\}_{a' \in \alts}$ is alternative $b$. One specific construction of the willingness to pay is given below: 
\begin{align*}
	\maxwp_{\ell+1, a}^\ast &= \frac{1}{2} \left( t_{\ell+1,a}(u_{-(\ell+1)}) + t_{\ell+1,a}(u'_{-(\ell+1)}) \right) + \delta, \\
	\maxwp_{\ell+1, b}^\ast &=  \frac{1}{2} \left( t_{\ell+1,b}(u_{-(\ell+1)}) + t_{\ell+1,b}(u'_{-(\ell+1)}) \right) + \delta, \\
	\maxwp^\ast_{\ell+1,c} &= 0,~\forall c \neq a,b,
\end{align*}
where $\delta$ is some positive real number. We can check that:
\begin{align*}
	\maxwp_{\ell+1, a}^\ast - t_{\ell+1,a}(u_{-(\ell+1)})  &= \frac12 \left( t_{\ell+1,a}(u'_{-(\ell+1)})- t_{\ell+1,a}(u_{-(\ell+1)}) \right) + \delta > \delta, \\
	\maxwp_{\ell+1, b}^\ast -t_{\ell+1,b}(u_{-(\ell+1)}) &=  \frac12 \left( t_{\ell+1,b}(u'_{-(\ell+1)})- t_{\ell+1,b}(u_{-(\ell+1)}) \right) + \delta  \leq \delta, \\
	\maxwp_{\ell+1, c}^\ast - t_{\ell+1,c}(u_{-(\ell+1)})  & \leq  - t_{\ell+1,c}(u_{-(\ell+1)}) \leq 0,
\end{align*}
thus alternative $a$ is agent-maximizing for agent $\ell+1$ given the agent-independent prices $\{t_{\ell+1,a'}(u_{-(\ell+1)})\}_{a' \in \alts}$. Similarly, 
\begin{align*}
	\maxwp_{\ell+1, a}^\ast - t_{\ell+1,a}(u_{-(\ell+1)}')  &= \frac12 \left(t_{\ell+1,a}(u_{-(\ell+1)}) - t_{\ell+1,a}(u'_{-(\ell+1)}) \right) + \delta < \delta, \\
	\maxwp_{\ell+1, b}^\ast -t_{\ell+1,b}(u_{-(\ell+1)}') &=  \frac12 \left( t_{\ell+1,b}(u_{-(\ell+1)}) - t_{\ell+1,b}(u'_{-(\ell+1)}) \right) + \delta  \geq \delta, \\
	\maxwp_{\ell+1, c}^\ast - t_{\ell+1,c}(u_{-(\ell+1)}')  & \leq  - t_{\ell+1,c}(u_{-(\ell+1)}') \leq 0,
\end{align*}
therefore $b$ is the agent maximizing alternative for agent $\ell+1$ under prices  $\{t_{\ell+1,a'}(u_{-(\ell+1)}')\}_{a' \in \alts}$. This completes the proof of the inductive step, and therefore part (i) of the lemma.
\end{proof}
\fi

\subsection{Proof  of Theorem~\ref{thm:roberts}} \label{appx:proof_roberts}

\ThmRobertsPD*

We had proved in Lemma~\ref{lem:wmon_ic} that W-Mon is a necessary condition for (P1)-(P5) if the utility domain is unrestricted and parallel. We provide here a few more steps where the details differ slightly for values and willingness to pay, following the first proof presented in \citet{lavi2009two}, however, the high level ideas are the same.

\begin{definition}[Positive Association of Differences] \label{defn:pad} A social choice function $x$ on a parallel domain $\utilSpace$ satisfies \emph{positive association of differences} (PAD) if: for all $u$ and $u'$ in $\utilSpace$, if $x(u) = a$ and $\maxwp'_{i,a} - \maxwp_{i,a} > \maxwp'_{i,b} - \maxwp_{i,b}$ for all $b \neq a$ and all $i \in N$, then it must be the case that $x(u') = a$, as well. 
\end{definition}

\begin{lemma}[IC $\Rightarrow$ PAD] \label{lem:pad_ic}
A social choice mechanism on an unrestricted parallel domain $\utilSpace$  with (P1), (P2), (P4) and (P5)  satisfies PAD.
\end{lemma}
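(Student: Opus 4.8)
The plan is to reduce PAD to the weak-monotonicity condition already established in Lemma~\ref{lem:wmon_ic}, by transforming $u$ into $u'$ one agent at a time. Since $\utilSpace = \prod_{i=1}^n \utilDomain_i$ is a product domain and $u_i, u_i' \in \utilDomain_i$ for every $i$, I can form the hybrid profiles $u^0 = u$ and $u^k = (u_1', \dots, u_k', u_{k+1}, \dots, u_n)$ for $1 \le k \le n$, so that $u^n = u'$ and each $u^k \in \utilSpace$. The point of this construction is that consecutive profiles $u^{k-1}$ and $u^k$ differ only in the reported type of agent $k$ (it is $u_k$ in the former and $u_k'$ in the latter), while the types of all other agents coincide; this is precisely the single-agent-deviation setting in which W-Mon speaks.

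First I would set up an induction on $k$ proving $x(u^k) = a$ for all $k$. The base case $x(u^0) = x(u) = a$ is given by the PAD hypothesis. For the inductive step, assume $x(u^{k-1}) = a$ and suppose toward a contradiction that $x(u^k) = b$ for some $b \neq a$. Writing $u_{-k}$ for the common part of $u^{k-1}$ and $u^k$, we have $x(u_k, u_{-k}) = a$ and $x(u_k', u_{-k}) = b$, so Lemma~\ref{lem:wmon_ic} (W-Mon) applies and yields $\maxwp_{k,b}' - \maxwp_{k,b} \ge \maxwp_{k,a}' - \maxwp_{k,a}$, where $\maxwp_{k,\cdot}$ and $\maxwp_{k,\cdot}'$ denote agent $k$'s willingness to pay read off from $u_k$ and $u_k'$ respectively.

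This is exactly where the strict inequality in the PAD premise does the work. The hypothesis $\maxwp_{i,a}' - \maxwp_{i,a} > \maxwp_{i,b}' - \maxwp_{i,b}$, specialized to the agent $i = k$ and the alternative $b$, says $\maxwp_{k,b}' - \maxwp_{k,b} < \maxwp_{k,a}' - \maxwp_{k,a}$, which directly contradicts the inequality produced by W-Mon. Hence $x(u^k)$ cannot differ from $a$, and the induction goes through; taking $k = n$ gives $x(u') = x(u^n) = a$, which is the conclusion of PAD.

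I do not expect a genuine obstacle here: the argument rests entirely on W-Mon together with the strictness of the PAD premise, and the only care needed is bookkeeping — confirming that the willingness-to-pay values entering the W-Mon inequality are the ones determined by agent $k$'s own (changing) report, and that the product structure of $\utilSpace$ guarantees every hybrid $u^k$ is a legal profile. In particular, no continuity or richness of the domain is invoked for this step, so the ``unrestricted'' qualifier plays no role beyond ensuring (via Lemma~\ref{lem:wmon_ic}) that W-Mon is available.
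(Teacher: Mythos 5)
Your proposal is correct and is essentially identical to the paper's own proof: the same hybrid-profile construction $u^{(\ell)} = (u_1',\dots,u_\ell',u_{\ell+1},\dots,u_n)$, the same induction on the number of switched agents, and the same contradiction between the W-Mon inequality from Lemma~\ref{lem:wmon_ic} and the strict inequality in the PAD premise. (One trivial aside: Lemma~\ref{lem:wmon_ic} requires only a parallel domain, so the ``unrestricted'' hypothesis is in fact not needed even for that step, but this does not affect anything.)
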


\begin{proof} 

By Lemma~\ref{lem:wmon_ic}, $x$ must satisfy W-Mon. 
Let $u, u' \in \utilSpace$ be type profiles such that $x(u) = a$ and $\maxwp'_{i,a} - \maxwp_{i,a} > \maxwp'_{i,b} - \maxwp_{i,b}$ for all $b \neq a$ and all $i \in N$. We need to show that $x(u') = a$. 
Denote $u^{(\ell)} = (u'_1, \dots, u_\ell', u_{\ell+1}, \dots, u_n)$. We know $x(u^{(0)}) = x(u) = a$. Assume $x(u^{(\ell-1)}) = a$ for some $\ell > 0$, we show by contradiction that $x(u^{(\ell)}) = a$ must also hold. By induction, this implies that $x(u') = x(u^{(n)}) = a$. 

Assume that there exists $\ell \geq 0$ and $b \neq a$ s.t. $x(u^{(\ell-1)}) = a$ but $x(u^{(\ell)}) = b \neq a$. Since all players except player $\ell$ have the same type in $u^{(\ell-1)}$ and in $u^{(\ell)}$, we get by W-Mon that $\maxwp_{\ell,b}' - \maxwp_{\ell,b} \geq \maxwp_{\ell,a}' - \maxwp_{\ell,a}$, which contradicts the PAD assumption on $u'$ and $u$. Thus, $x(u^{(\ell)}) = a$ must hold. This completes the proof of the induction step.
\end{proof}

We now prove the result analogous to Claim~1 in \citet{lavi2009two}. 

\begin{claim}\label{clm:clm1} Assume a choice rule $x$ on an unrestricted parallel domain $\utilSpace$ satisfies PAD. Fix type profiles $u, u' \in \utilSpace$ s.t. $x(u') = a$. If $\maxwp'_{i,b} - \maxwp_{i,b} > \maxwp'_{i,a} - \maxwp_{i,a}$ holds for all $i \in N$ for some $b \in \alts$, then $x(u) \neq b$.
\end{claim}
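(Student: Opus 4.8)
The plan is to argue by contradiction, invoking the PAD property (Definition~\ref{defn:pad}) twice against a single auxiliary profile that is ``squeezed'' between $u$ and $u'$. Suppose toward a contradiction that $x(u) = b$; note $a \neq b$, since the hypothesis is a strict inequality. I would construct a type profile $u'' \in \utilSpace$ whose willingness to pay, viewed as a change from $u$, has risen for $b$ strictly more than for every other alternative, and, viewed as a change from $u'$, has risen for $a$ strictly more than for every other alternative. PAD applied to the pair $(u, u'')$ (using $x(u) = b$) then forces $x(u'') = b$, while PAD applied to $(u', u'')$ (using $x(u') = a$) forces $x(u'') = a$, and $a \neq b$ is the contradiction.

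The heart of the construction is the comparison between $a$ and $b$. Rewriting the hypothesis $\maxwp'_{i,b} - \maxwp_{i,b} > \maxwp'_{i,a} - \maxwp_{i,a}$ as $\maxwp'_{i,a} - \maxwp'_{i,b} < \maxwp_{i,a} - \maxwp_{i,b}$, for each agent $i$ I can pick $d_i$ strictly between the two sides, i.e. $\maxwp'_{i,a} - \maxwp'_{i,b} < d_i < \maxwp_{i,a} - \maxwp_{i,b}$. I then define $u''$ through its willingness to pay: set $\maxwp''_{i,c} = 0$ for every $c \neq a, b$ (such a $c$ exists since there are $m \geq 3$ alternatives), and set $\maxwp''_{i,b} = M$, $\maxwp''_{i,a} = M + d_i$ for a common constant $M$ taken large over the finitely many agents. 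The quantity $d_i = \maxwp''_{i,a} - \maxwp''_{i,b}$ is exactly the lever that makes the two PAD applications point at different alternatives: read against $u$, we have $\maxwp''_{i,a} - \maxwp''_{i,b} = d_i < \maxwp_{i,a} - \maxwp_{i,b}$, so the gain for $b$ beats that for $a$; read against $u'$, we have $d_i > \maxwp'_{i,a} - \maxwp'_{i,b}$, so the gain for $a$ beats that for $b$.

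It then remains to verify the full difference conditions of PAD and the admissibility of $u''$. The pair $(a, b)$ is the computation just given; for any $c \neq a, b$ the required inequalities reduce to $\maxwp''_{i,b} - \maxwp_{i,b} > -\maxwp_{i,c}$ and $\maxwp''_{i,a} - \maxwp'_{i,a} > -\maxwp'_{i,c}$, which hold once $M > \max_i \maxwp_{i,b}$ and $M + d_i > \maxwp'_{i,a}$, using $\maxwp_{i,c}, \maxwp'_{i,c} \geq 0$. Taking $M$ large also ensures $\maxwp''_{i,a}, \maxwp''_{i,b} \geq 0$, so the vector $(\maxwp''_{i,c})_{c}$ is non-negative with a zero entry at each $c \neq a, b$; by the unrestricted-willingness-to-pay property there is a genuine type $u''_i \in \utilDomain_i$ realizing it, and hence $u'' \in \utilSpace$. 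The two invocations of PAD now yield $a = x(u'') = b$, the desired contradiction, so $x(u) \neq b$.

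The main point to be careful about is realizability rather than the logic: the willingness-to-pay vector I build must be admissible (non-negative with at least one zero coordinate), which is precisely why I park all the spare alternatives $c \neq a, b$ at zero, and is where the assumption of three or more alternatives enters. Once the construction is checked against the unrestricted-parallel-domain hypothesis, the rest is the clean double application of PAD.
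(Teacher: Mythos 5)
Your proof is correct and uses essentially the same argument as the paper: assume toward contradiction that $x(u) = b$, construct an auxiliary profile $u''$ sandwiched between $u$ and $u'$, and apply PAD twice --- once from $u$ (forcing $x(u'') = b$) and once from $u'$ (forcing $x(u'') = a$) --- to reach a contradiction. The constructions of $u''$ differ only cosmetically: the paper lowers $\maxwp_{i,a}$ by $\delta_i/2$, suppresses the alternatives $c \neq a,b$ via a min-construction, and then normalizes to restore a valid willingness-to-pay vector, whereas you park all $c \neq a,b$ at zero and lift $a$ and $b$ by a large common constant $M$, which achieves admissibility by design (at the negligible cost of explicitly invoking $m \geq 3$, which the ambient generalized Roberts' theorem assumes anyway).
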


\begin{proof} 
We follow the same proof as in~\citet{lavi2009two}, and prove this claim by contradiction.
The construction differs slightly from the original proof since the willingness to pay is always normalized s.t. the an agent's smallest willingness to pay among all alternatives is zero.


Suppose by contradiction that $x(u) = b$. For each $i \in N$, denote $\delta_i \triangleq \maxwp_{i,b}' - \maxwp_{1,b} - (\maxwp_{i,a}' - \maxwp_{i,a})$. We know that $\delta_i > 0$ for all $i \in N$, and in addition, $\maxwp_{i,a} - \maxwp_{i,a}' - \delta_i/2 = \maxwp_{i,b} - \maxwp_{i,b}' + \delta_i/2 > \maxwp_{i,b} - \maxwp_{i,b}'$. Consider $\tilde{\maxwp}$ s.t.  for all $i\in N$,
\begin{align*}
	\tilde{\maxwp_{i,a}} = & \maxwp_{i,a} - \delta_i /2, \\
	\tilde{\maxwp_{i,b}} = & \maxwp_{i,b},
\end{align*}
and for $c \in \alts$ s.t. $c \neq a, b$,
\begin{align*}
	\tilde{\maxwp}_{i,c} = \min \{ \maxwp_{i,c}, ~\maxwp_{i,c}' + \maxwp_{i,a} - \maxwp_{i,a}' \} - \delta_i.
\end{align*}
Then, let $\maxwp''$ be the normalized version of $\tilde{p}$, i.e.
\begin{align*}
	\maxwp_{i,c}'' \triangleq   \tilde{\maxwp_{i,c}} - \min_{d \in \alts} \{\tilde{\maxwp_{i,d}} \},  ~\forall i \in N,~\forall c \in \alts.
\end{align*}

First, we observe that $\min_{c \in \alts} \{ \maxwp_{i,c}''\} = 0$ holds for all $i \in N$, i.e. $\maxwp''$ is a valid set of willingness to pay for the agents. Since $\utilSpace$ is unrestricted, we can find $u'' \in \utilSpace$ where agents' willingness to pay is given by $\maxwp''$.
Second, observe that for all $i \in N$ and any pair of alternatives $c,d \in \alts$,  $\maxwp_{i,c}'' - \maxwp_{i,d}'' = \tilde{\maxwp_{i,c}} - \tilde{\maxwp_{i,d}}$. 

We now show that the PAD condition from $u \rightarrow u''$ results in $x(u'') = b$, whereas applying PAD condition from $u' \rightarrow u''$ results in $x(u'') = a$, thus a contraction. 

To show $x(u'') = b$ must hold, observe that for all $i \in N$,
\begin{align*}
	& \maxwp_{i,b}'' - \maxwp_{i,b} - (\maxwp_{i,a}'' - \maxwp_{i,a}) = \tilde{\maxwp_{i,b}} - \tilde{\maxwp_{i,a}}  - (\maxwp_{i,b} - \maxwp_{i,a}) \\
	= & \maxwp_{i,b} - \maxwp_{i,a} + \delta_i /2  - (\maxwp_{i,b} - \maxwp_{i,a})  = \delta_i /2 > 0,
\end{align*}
and that for all $c \neq a,b$,
\begin{align*}
	& \maxwp_{i,b}'' - \maxwp_{i,b} - (\maxwp_{i,c}'' - \maxwp_{i,c}) = \tilde{\maxwp_{i,b}} - \tilde{\maxwp_{i,c}}  - (\maxwp_{i,b} - \maxwp_{i,c}) \\
	\geq & \maxwp_{i,b} - \maxwp_{i,c} + \delta_i  - (\maxwp_{i,b} - \maxwp_{i,c}) = \delta_i > 0.
\end{align*}
From Lemma~\ref{lem:pad_ic}, we know that $x(u'') = b$ must hold. Similarly, to show $x(u'') = a$ must hold, we can check that for all $i \in N$,
\begin{align*}
	& \maxwp_{i,b}'' - \maxwp_{i,b}' - (\maxwp_{i,a}'' - \maxwp_{i,a}') = \tilde{\maxwp_{i,b}} - \tilde{\maxwp_{i,a}}  - (\maxwp_{i,b}' - \maxwp_{i,a}') \\
	= & \maxwp_{i,b} - \maxwp_{i,a} - (\maxwp_{i,b}' - \maxwp_{i,a}') + \delta_i /2  = -\delta_i/2 < 0,
\end{align*}
and that for all $c \neq a,b$,
\begin{align*}
	& \maxwp_{i,c}'' - \maxwp_{i,c}' - (\maxwp_{i,a}'' - \maxwp_{i,a}') = \tilde{\maxwp_{i,c}} - \tilde{\maxwp_{i,a}}  - (\maxwp_{i,c}' - \maxwp_{i,a}') \\
	\leq & \maxwp_{i,c}' + \maxwp_{i,a} - \maxwp_{i,a}' - \delta_i - (\maxwp_{i,a} - \delta_i /2) - (\maxwp_{i,c}' - \maxwp_{i,a}')  = -\delta_i/2 < 0.
\end{align*}
Lemma~\ref{lem:pad_ic} then implies $x(u'') = a$ must hold. 
\end{proof}

We now consider the same sets that are analyzed in~\citet{lavi2009two} in the first proof for Roberts' theorem, however, we define these sets in terms of differences in willingness to pay instead of differences in values. Let $\utilSpace$ be an unrestricted parallel domain. For all tuples $(a,b) \in \alts \times \alts$ that $a \neq b$:
\begin{align*}
	\toposet(a, b) \triangleq \{ \alpha \in \setR^n~|~\exists u \in \utilSpace \txtst p_a - p_b = \alpha \txtand x(u) = a \}
\end{align*}
The two immediate properties of the sets under quasi-linearity also holds:
\begin{enumerate}[1.]
	\item For every $a$ and $b$, the set $\toposet(a,b)$ is not empty, as long as the choice rule $x$ is onto.
	\item If $\alpha \in \toposet(a,b)$, then for any positive $\delta \in \setR^n$ s.t. $\delta_i > 0$ for all $i$, $\alpha + \delta \in \toposet(a,b)$. To see this, note that $\alpha \in \toposet(a,b)$ implies that there exists $u$ s.t. $x(u) = a$ and $\maxwp_a - \maxwp_b = \alpha$. Now we look for another type profile $u' \in \utilSpace$ s.t. comparing with $u$, we are increasing $\maxwp_a$ by $\delta$, while keeping the other willingness to pay the same. We know from PAD that $x(u') = a$ must still hold, and in this case $\maxwp_a' -\maxwp_b' = \alpha + \delta \in \toposet(a,b)$ as required. 
\end{enumerate}


We now prove the following claim, which is analogous to Claim~2 in \citet{lavi2009two}. Without otherwise specify, the claims in the rest of this section assumes that there are three or more alternatives, that the utility domain is parallel and unrestricted, and that the mechanism satisfies (P1)-(P5). 

\begin{claim} \label{clm:clm2}
For every $\alpha,\epsilon \in \setR^n$ , $\epsilon > 0$:
\begin{enumerate}[1.]
	\item $\alpha - \epsilon \in \toposet(a,b) \Rightarrow - \alpha \notin \toposet(b,a)$.
	\item $\alpha \notin \toposet(a,b) \Rightarrow -\alpha \in \toposet(a,b)$.
\end{enumerate}
\end{claim}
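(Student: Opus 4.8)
For Part~1, I would simply feed the two hypothesized profiles into Claim~\ref{clm:clm1}. Suppose $\alpha - \epsilon \in \toposet(a,b)$, so there is a profile $u'$ with $\maxwp'_{i,a} - \maxwp'_{i,b} = \alpha_i - \epsilon_i$ for all $i$ and $x(u') = a$, and suppose toward a contradiction that $-\alpha \in \toposet(b,a)$, witnessed by a profile $u$ with $\maxwp_{i,b} - \maxwp_{i,a} = -\alpha_i$ (equivalently $\maxwp_{i,a} - \maxwp_{i,b} = \alpha_i$) and $x(u) = b$. Now I check the hypothesis of Claim~\ref{clm:clm1} with $u'$ in the role of the $a$-choosing profile: for every $i$,
\begin{align*}
	\maxwp'_{i,b} - \maxwp'_{i,a} = \epsilon_i - \alpha_i > -\alpha_i = \maxwp_{i,b} - \maxwp_{i,a},
\end{align*}
where the strict inequality is exactly $\epsilon_i > 0$. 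Since $x(u') = a$ and the $b$-over-$a$ gap is strictly larger under $u'$ than under $u$ for every agent, Claim~\ref{clm:clm1} yields $x(u) \neq b$, contradicting $x(u) = b$. Hence $-\alpha \notin \toposet(b,a)$. This part is clean and uses only Claim~\ref{clm:clm1}.

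For Part~2 the target is to \emph{exhibit} a type profile whose $a$-versus-$b$ willingness-to-pay gap equals $-\alpha$ and at which $a$ is selected, thereby placing $-\alpha$ into $\toposet(a,b)$ (the same ordered pair, not $\toposet(b,a)$). The plan is to work with the difference vector $-\alpha$ and drive the remaining $m-2$ alternatives far down: construct $u$ with $\maxwp_{i,a} - \maxwp_{i,b} = -\alpha_i$ for all $i$ and $\maxwp_{i,c}$ much smaller than $\min(\maxwp_{i,a},\maxwp_{i,b})$ for every $c \notin \{a,b\}$. Using (P3) ontoness to obtain witnesses for $a$ and for $b$ and then applying PAD (Lemma~\ref{lem:pad_ic}) to inflate the favored alternative, I would argue that at such a ``two-alternative'' profile the outcome must lie in $\{a,b\}$. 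If $x(u) = a$ we are done, since then $-\alpha \in \toposet(a,b)$ by definition.

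The hard part is ruling out $x(u) = b$ at difference $-\alpha$. If $x(u) = b$ then $\maxwp_{i,b} - \maxwp_{i,a} = \alpha_i$ gives $\alpha \in \toposet(b,a)$; and repeating the same two-alternative construction at difference $+\alpha$, where $x \neq a$ because $\alpha \notin \toposet(a,b)$, forces $x = b$ there too, yielding $-\alpha \in \toposet(b,a)$. Thus the obstructing scenario is exactly $\{\alpha, -\alpha\} \subseteq \toposet(b,a)$, and I would aim to contradict it by a Claim~\ref{clm:clm1}-style intermediate profile: build a $u''$ (via the per-agent willingness-to-pay shifts $\delta_i$ used in the proof of Claim~\ref{clm:clm1}) to which PAD can be applied from \emph{both} witnessing profiles, forcing $x(u'')$ to equal two distinct alternatives. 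The genuine difficulties I expect here are twofold and are where I would concentrate the work: first, making the ``reduction to two alternatives'' rigorous for an arbitrary (P1)--(P5) mechanism rather than for an affine maximizer, since the agent-independent prices determining the outcome depend on the whole profile and a dominated alternative is not obviously excluded; and second, closing the antipodal case $\{\alpha,-\alpha\}\subseteq\toposet(b,a)$, which is delicate precisely at the boundary of these sets and is why both parts of the claim are stated with the strict slack $\epsilon > 0$. I would treat this boundary/strictness bookkeeping, not the algebra, as the main obstacle.
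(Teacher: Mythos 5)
Your Part~1 is correct and is essentially the paper's own argument: you feed the $\toposet(a,b)$-witness $u'$ and an arbitrary profile $u$ with gap $\alpha$ into Claim~\ref{clm:clm1}, with $\epsilon > 0$ supplying the strict inequality. (The paper phrases the conclusion for \emph{every} $u$ with $\maxwp_a - \maxwp_b = \alpha$, which is exactly what $-\alpha \notin \toposet(b,a)$ requires; your proof-by-contradiction is the same content.)

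Part~2 has a genuine gap, and it originates in the target you set yourself. You insist on showing $-\alpha \in \toposet(a,b)$ ``the same ordered pair, not $\toposet(b,a)$.'' But the printed conclusion of the claim is a typo: the paper's own proof ends with ``thus $-\alpha \in \toposet(b,a)$,'' matching Claim~2 of \citet{lavi2009two}, and that is the version used downstream (e.g., in establishing $\gamma(a,b) = -\gamma(b,a)$). The literal statement you pursue is false for the very mechanisms being characterized: take a generalized weighted VCG rule with $\affcnst_b > \affcnst_a$ and any $\alpha$ with $\sum_i \affcoeff_i \alpha_i = 0$; then no profile with $\maxwp_a - \maxwp_b = \alpha$ or $= -\alpha$ ever selects $a$, so $\alpha \notin \toposet(a,b)$ and also $-\alpha \notin \toposet(a,b)$, while both $\alpha$ and $-\alpha$ lie in $\toposet(b,a)$. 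This is why the step you flag as ``the hard part --- ruling out $x(u) = b$'' is exactly backwards: $x(u) = b$ is the \emph{desired} conclusion, not the obstruction. It also dooms your fallback plan: the ``obstructing scenario'' $\{\alpha, -\alpha\} \subseteq \toposet(b,a)$ genuinely occurs (same example), so no Claim~\ref{clm:clm1}-style sandwich can refute it. The paper's actual proof is short and needs neither of the difficulties you anticipate: for each $c \neq a,b$ pick $\beta_c \in \toposet(a,c)$ (nonempty by ontoness and PAD) and choose $u$ with $\maxwp_a - \maxwp_b = \alpha$ and $\maxwp_a - \maxwp_c = \beta_c + \epsilon$; Claim~\ref{clm:clm1}, applied with the $\toposet(a,c)$-witness and the strict slack $\epsilon$, excludes every such $c$ for an arbitrary (P1)--(P5) mechanism --- no affine-maximizer structure and no separate ``reduction to two alternatives'' lemma is needed --- and $a$ is excluded with no argument at all, directly from $\alpha \notin \toposet(a,b)$ and the definition of $\toposet(a,b)$. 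Hence $x(u) = b$, giving $-\alpha \in \toposet(b,a)$.
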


\begin{proof} For the first part, note that $\alpha - \epsilon \in \toposet(a,b)$ implies that there exists $u' \in \utilSpace$ s.t. $p'_a - p'_b = \alpha - \epsilon$ and $x(u') = a$. Now let $u \in \utilSpace $ be any type profile s.t. $\maxwp_b - \maxwp_a = -\alpha$. 
We know that for all $i \in N$, $\maxwp_{i,a} - \maxwp_{i,b} = \alpha_i > \alpha_i - \epsilon = \maxwp_{i,a}' - \maxwp_{i,b}'$. This implies 
$\maxwp_{i,b}' -\maxwp_{i,b}   > \maxwp_{i,a}'  - \maxwp_{i,a} $ for all $i \in N$, and we get $x(u) \neq b$ by applying Claim~\ref{clm:clm1}. This shows that for all $u \in \utilSpace$ s.t. $\maxwp_b - \maxwp_a = -\alpha$, $x(u) \neq b$, therefore $-\alpha \notin \toposet(b,a)$.

\if 0
suppose by contradiction that $-\alpha \in \toposet(b,a)$, therefore there exists $u \in \utilSpace$ s.t. $\maxwp_b - \maxwp_a = -\alpha$ and $x(u) = b$. 
Since $\alpha - \epsilon \in \toposet(a,b)$, there exists $u' \in \utilSpace$ s.t. $p'_a - p'_b = \alpha - \epsilon$ and $x(u') = a$. Note that $p_a - p_b = \alpha > p_a' - p_b'$, which implies $x(u) \neq b$ according to Claim~\ref{clm:clm1}. This proves that $-\alpha \notin \toposet(b,a)$.
\fi

For the second part, for any $c \neq a,b$, take some $\beta_c \in \toposet(a,c)$ and fix some $\epsilon > 0$. Choose any $u$ s.t. $\maxwp_a - \maxwp_b = \alpha$ and $p_a - p_c = \beta_c + \epsilon$ for all $c \neq a,b$. Since $\maxwp_x - \maxwp_b = \alpha \neq \toposet(a,b)$, we know that $x(u) \neq a$. For all alternatives $c \neq a,b$, from Claim~\ref{clm:clm1} and the fact that $\exists u'$ s.t. $\maxwp_{a}' - \maxwp_c' = \beta_c$ and $x(u') = a$, we know that $x(u) = c$ cannot hold. It follows that $x(u) = b$, thus $\exists u$ s.t. $\maxwp_{b} - \maxwp_a = -\alpha $ s.t. $x(u) = b$, thus $-\alpha \in \toposet(b,a)$.
\end{proof}

Intuitively, part 1 means that if for some $u$ s.t. $\maxwp_a - \maxwp_b = \alpha - \epsilon$, $x(u) = a$, it must be the case that for all $u$ s.t. $\maxwp_a - \maxwp_b = \alpha$, $x(u) \neq b$. This is because if $a$ is selected under $\maxwp_a - \maxwp_b = \alpha - \epsilon$, when the difference in the willingness to pay for $a$ vs. $b$ increases by $\epsilon$ thus becomes larger, $a$ still dominates $b$ thus $b$ cannot be selected.

Part 2 means that if for all $u$ s.t. $\maxwp_a - \maxwp_b = \alpha$, $x(u) \neq a$, then there exists some $u$ s.t. $\maxwp_a - \maxwp_b = \alpha$ and $x(u) = b$. This can be proved by constructing a type s.t. the willingness to pay for all alternatives other than $a$ and $b$ in a way that they are all dominated by $a$ so cannot be selected, leaving $b$ to be the only alternative that can be selected.

\begin{claim} \label{clm:clm3}
For every $\alpha$, $\beta$, $\epsilon^{(\alpha)}$, $\epsilon^{(\beta)} \in \setR^n$, $\epsilon^{(\alpha)}$, $\epsilon^{(\beta)} > 0$:
\begin{align*}
	\alpha \hsq - \hsq  \epsilon^{(\alpha)} \hsq  \in  \hsq  \toposet(a,b),~\beta\hsq \hsq  - \epsilon^{(\beta)} \hsq \in \hsq  \toposet(b,c) \hsq  \Rightarrow \hsq  \alpha \hsq  + \hsq  \beta \hsq  - \hsq  \frac{\epsilon^{(\alpha)}\hsq   + \hsq  \epsilon^{(\beta)}}{2}\hsq  \in \toposet(a,c).
\end{align*}
\end{claim}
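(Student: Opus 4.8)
The plan is to prove the membership directly, by exhibiting a single type profile $u$ whose $(a,c)$-difference vector equals the target $\alpha + \beta - \tfrac{\epsilon^{(\alpha)} + \epsilon^{(\beta)}}{2}$ and at which $x(u) = a$. First I would use the unrestricted willingness to pay to build $u$ as follows: for each agent $i$, prescribe the two gaps
\[
\maxwp_{i,a} - \maxwp_{i,b} = \alpha_i - \tfrac{\epsilon^{(\alpha)}_i}{2}, \qquad \maxwp_{i,b} - \maxwp_{i,c} = \beta_i - \tfrac{\epsilon^{(\beta)}_i}{2},
\]
so that by addition $\maxwp_{i,a} - \maxwp_{i,c} = \alpha_i + \beta_i - \tfrac{\epsilon^{(\alpha)}_i + \epsilon^{(\beta)}_i}{2}$, exactly the $i$-th coordinate of the target. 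For every remaining alternative $d \notin \{a,b,c\}$ I would pin $\maxwp_{i,d} = 0$, then add a common per-agent constant $M_i$ to $\maxwp_{i,a}, \maxwp_{i,b}, \maxwp_{i,c}$, chosen large enough that these are all nonnegative; the shift preserves every difference and yields a valid normalized vector (nonnegative, with a zero entry), so $u \in \utilSpace$ by unrestrictedness. When $m = 3$ there is no such $d$, and I would instead pick $M_i$ so that $\min\{\maxwp_{i,a}, \maxwp_{i,b}, \maxwp_{i,c}\} = 0$.

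Next I would exclude every alternative other than $a$ via Claim~\ref{clm:clm1}. To exclude $b$: the hypothesis $\alpha - \epsilon^{(\alpha)} \in \toposet(a,b)$ supplies a witness $u^{(1)}$ with $x(u^{(1)}) = a$ and $\maxwp^{(1)}_{i,a} - \maxwp^{(1)}_{i,b} = \alpha_i - \epsilon^{(\alpha)}_i$; since $\epsilon^{(\alpha)}_i > 0$ this is strictly below $\alpha_i - \tfrac{\epsilon^{(\alpha)}_i}{2} = \maxwp_{i,a} - \maxwp_{i,b}$ for every $i$, so Claim~\ref{clm:clm1} (winner $a$) gives $x(u) \neq b$. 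To exclude $c$: the hypothesis $\beta - \epsilon^{(\beta)} \in \toposet(b,c)$ supplies a witness $u^{(2)}$ with $x(u^{(2)}) = b$ and $\maxwp^{(2)}_{i,b} - \maxwp^{(2)}_{i,c} = \beta_i - \epsilon^{(\beta)}_i < \beta_i - \tfrac{\epsilon^{(\beta)}_i}{2} = \maxwp_{i,b} - \maxwp_{i,c}$, so Claim~\ref{clm:clm1} (winner $b$) gives $x(u) \neq c$. Finally, for each $d \notin \{a,b,c\}$, ontoness makes $\toposet(a,d)$ nonempty, so there is a witness $w^{(d)}$ with $x(w^{(d)}) = a$ and a fixed gap $\maxwp^{(d)}_{i,a} - \maxwp^{(d)}_{i,d} = \gamma_{d,i}$; by enlarging $M_i$ I can guarantee $\maxwp_{i,a} - \maxwp_{i,d} = \maxwp_{i,a} > \gamma_{d,i}$ for all $i$, whence Claim~\ref{clm:clm1} yields $x(u) \neq d$. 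Since $x(u)$ is a single alternative of $\alts$ and all but $a$ are excluded, $x(u) = a$, and as $\maxwp_{i,a} - \maxwp_{i,c}$ matches the target coordinatewise, $\alpha + \beta - \tfrac{\epsilon^{(\alpha)} + \epsilon^{(\beta)}}{2} \in \toposet(a,c)$.

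The point of the asymmetric gaps (using $-\tfrac{\epsilon^{(\alpha)}}{2}$ and $-\tfrac{\epsilon^{(\beta)}}{2}$ rather than the full $\epsilon$'s) is twofold: each gap strictly dominates the corresponding witness gap so that Claim~\ref{clm:clm1} applies, while their sum reproduces the target difference with the averaged correction $\tfrac{\epsilon^{(\alpha)} + \epsilon^{(\beta)}}{2}$. The step I expect to demand the most care is the construction and normalization of $u$: I must simultaneously realize the two prescribed gaps on $\{a,b,c\}$, keep the willingness-to-pay vector valid, and make $a$ dominate every extra alternative $d$ strongly enough to invoke Claim~\ref{clm:clm1}. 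Because a common per-agent shift $M_i$ leaves all differences unchanged, enlarging $M_i$ settles the domination of the $d$'s without disturbing the exclusions of $b$ and $c$, so these requirements never conflict, and finitely many alternatives guarantee one large enough $M_i$ per agent suffices.
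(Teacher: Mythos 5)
Your proposal is correct and follows essentially the same route as the paper: you pick $u$ with the half-shifted gaps $\maxwp_{i,a} - \maxwp_{i,b} = \alpha_i - \epsilon^{(\alpha)}_i/2$ and $\maxwp_{i,b} - \maxwp_{i,c} = \beta_i - \epsilon^{(\beta)}_i/2$, exclude $b$ and $c$ via Claim~\ref{clm:clm1} using the hypothesis witnesses, and exclude each extra alternative $d$ by combining nonemptiness of $\toposet(a,d)$ with Claim~\ref{clm:clm1}, exactly as in the paper's proof. The only difference is bookkeeping (you dominate the $d$'s by setting $\maxwp_{i,d}=0$ and shifting $\{a,b,c\}$ by a large $M_i$, whereas the paper directly prescribes $\maxwp_a - \maxwp_d = \delta^{(w)} + \epsilon$), which is immaterial.
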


\begin{proof} 
Choose any $u \in \utilSpace$ s.t. $\maxwp_a - \maxwp_b = \alpha - \epsilon^{(\alpha)}/2$ and $\maxwp_b - \maxwp_c = \beta -\epsilon^{(\beta)}/2$. We know from Claim~\ref{clm:clm1} that $x(u) \neq b$ and $x(u) \neq c$. If the total number of alternatives is more than three, then for all $d \neq a,b,c$, fix some $\delta^{(w)} \in \toposet(a,d)$ and some $\epsilon \in \setR^n$, $\epsilon > 0$, and let $\maxwp_a - \maxwp_d = \delta^{(w)} + \epsilon$. Again by Claim~\ref{clm:clm1}, we know that $x(u) \neq d$. Therefore $x(u) = a$ must hold, and therefore $\alpha + \beta - (\epsilon^{(\alpha)} +\epsilon^{(\beta)} )/2 \in \toposet(a,c)$.
\end{proof}

With the same argument as in the proof for quasi-linear Roberts' theorem, we know that if $\vec{0} \in \toposet(a,b)$ for all pairs $(a,b) \in A\times A$, then the interior of all $\toposet(a,b)$ must be equal. However, $\toposet(a,b)$ does not necessarily include $\vec{0}$ thus needs to be ``shifted" to contain this point. We can similarly define
\begin{align*}
	\gamma(x,y) = \inf \{ q \in \setR~|~q \cdot \vec{1} \in \toposet(a,b) \}.
\end{align*}
It is easy to see that the infimum exists. We first argue that the set is not empty. Given that the social choice rule is onto, we can find some $u \in \utilSpace$ s.t. $x(u) = a$. Let $q = \max_i \{ \maxwp_{i,a} - \maxwp_{i,b} \}$ and find $u'$ s.t. $\maxwp_{i,j}' = \maxwp_{i,j}$ for all $i$ and $j \neq a$, and $\maxwp_{i,a}' = \maxwp_{i,b}+ q$ for all $i$. We know from PAD that $x(u') = a$ must still hold, thus this shows that $q \cdot \vec{1} \in \toposet(a,b)$ thus the set is non-empty. To show that the set of $q$'s is lower-bounded, observe that if this is not the case, the corresponding set $\toposet(b,a)$ would be empty, which contradicts ontoness. 

The rest of the proof in~\citet{lavi2009two} for Roberts' theorem for quasi-linear utility domains analyzes of the sets $\toposet(a,b)$ given the above lemmas and claims. Since these analysis does not depend on the values or the utility functions, the same identical arguments follow through for the parallel domain. This completes the proof of the Robert's theorem on unrestricted parallel domains.

\if 0

\hma{Is this enough of details or do we need to provide the following steps too?}

\subsubsection{Proof for Additional Steps --- Shouldn't Include in Submission}

Here are the generalization of the proofs of the remaining steps of Roberts' theorem. There's no need to include them in our submission.

\begin{claim} \label{clm:clm4}
For all $a,b,c \in \alts$, the following holds:
\begin{enumerate}[1.]
	\item $\gamma(a,b) = -\gamma(b,a).$
	\item $\gamma(a,c) = \gamma(a,b) + \gamma(b,c).$
\end{enumerate}
\end{claim}

\begin{proof}
For part~1, assume that $\gamma(a,b) = q^\ast$. We know that for any $\epsilon > 0$, $(q^\ast + \epsilon/2) \cdot \vec{1} \in \toposet(a,b)$. Claim~\ref{clm:clm2} then implies that $(-q^\ast - \epsilon) \cdot \vec{1} \notin \toposet(b,a)$. On the other hand, we know by definition that $(q^\ast - \epsilon)\cdot \vec{1} \notin \toposet(a,b)$. Claim~\ref{clm:clm2} again implies that $-(q^\ast - \epsilon)\cdot \vec{1} \in \toposet(b,a)$. Combining these two, we conclude that $-q^\ast = \inf \{ q| q\cdot \vec{1} \in \toposet(b,a) \}$ must hold.

For part~2, fix some $\epsilon > 0$. Since $(\gamma(a,b) + \epsilon) \cdot \vec{1} \in \toposet(a,b)$, $(\gamma(b,c) + \epsilon) \cdot \vec{1} \in \toposet(b,c)$, it follows from Claim~\ref{clm:clm3} that  $(\gamma(a,b) + \gamma(b,c) + \epsilon)\cdot \vec{1} \in \toposet(a,c)$. This implies that $\gamma(a,c) \leq \gamma(a,b) + \gamma(b,c)$. Exchanging the variables, we get $\gamma(c,a) \leq \gamma(c,b) + \gamma(b,a)$. From part $1$ of this claim, we get $-\gamma(a,c) \leq -\gamma(b,c) -\gamma(a,b) \Leftrightarrow \gamma(a,c) \geq \gamma(a,b) + \gamma(b,c)$. Combining the two inequalities, we get $\gamma(a,c) = \gamma(a,b) + \gamma(b,c)$. 
\end{proof}

Define the ``shifted" sets as
\begin{align*}
	C(a,b) = \toposet(a,b) - \gamma(a,b)\cdot\vec{1},
\end{align*}
and let the interior of $C(a,b)$ be denoted as
\begin{align*}
	\mathring{C}(a,b) \triangleq \{ \alpha \in C(a,b) | \alpha - \epsilon \in C(a,b) \text{ for some } \epsilon > 0\}.
\end{align*}

\begin{claim} \label{clm:clm5}
$\mathring{C}(a,b) = \mathring(c,d)$, for every $a,b,c,d\in \alts$ s.t. $a\neq b$ and $c \neq d$. 
\end{claim}

\begin{proof}
\hma{To Add}
\end{proof}

Denote such sets as $C \triangleq \mathring{C}(a,b)$. 

\begin{claim} \label{clm:clm6}
$C$ is convex.
\end{claim}

\begin{proof}
\hma{To Add}
\end{proof}

We can now complete the proof of the generalization of Roberts' theorem. Fist, notice that $\vec{0} \notin C$, since $(\gamma(a,v) - \epsilon) \cdot \vec{1} \notin \toposet(a,b)$ for any $\epsilon> 0$ thus $\vec{0}$ cannot be in the interior. By the \emph{separating hyperplane theorem},  there exists $\affcoeff \in \setR^n$ s.t. for any $\alpha \in \bar{C}$, $\affcoeff \cdot \alpha \geq 0$ (where $\bar{C}$ is the closure of $C$).  Now, fix some $a_0 \in \alts$, define $\gamma(a_0,a_0) \triangleq 0$ and let $\affcnst_a$ for any $a \in \alts$ be defined as
\begin{align*}
	\affcnst_a = \sum_{i = 1}^n \affcoeff_i \gamma(a_0, a).
\end{align*}

We now show that the theorem holds with such $\affcoeff_i$ and $\affcnst_a$. Suppose $x(u) = a$ for some $u \in \utilSpace$, and take any $b \neq a$. Clearly, $\maxwp_a - \maxwp_b \in \toposet(a,b)$. Let $\alpha = \maxwp_a - \maxwp_b - \gamma(a,b)\cdot \vec{1}$, and so $\alpha \in \bar{C}$. We know that $\affcoeff \cdot \alpha \geq 0$. Replacing $-\gamma(a,b) = -\gamma(a_0,a) - \gamma(a_0, b)$ and rearranging the terms, we get $\affcoeff \cdot \maxwp_a + \affcnst_a \geq \affcoeff \cdot \maxwp_b + \affcnst_b$. This implies that the selected alternative must be an maximizer of the linear affine function $\affcoeff \cdot \maxwp_a + \affcnst_a$. 

\hma{Still need to comment on the non-negativity of $\affcnst$}

\fi

\subsection{Proof of Theorem~\ref{thm:dictatorship}} \label{appx:proof_dictatorship}

\ThmDictatorship*

Before proving the main theorem, we first prove two lemmas. The following first lemma provides a characterization of agent-independent prices for mechanisms that satisfy (P1)-(P5).

\begin{lemma} \label{lem:price_characterization}
Fix any social choice mechanism on an unrestricted parallel domain $\utilSpace$ with choice rule $x(u) \in \arg \max_{a \in \alts} \sum_{i \in N} \affcoeff_i \maxwp_{i,a} + \affcnst_a, ~\forall u \in \utilSpace.$ If the mechanism satisfies (P1)-(P5), the agent-independent prices are determined by: 
\begin{enumerate}[(i)]
	\item for $i\in N$ s.t. $\affcoeff_i > 0$, for any $ u_{-i} \in \utilSpace_{-i}$, let $a_{-i}^\ast \in \arg \max_{a \in \alts} \sum_{j \neq i} \affcoeff_j \maxwp_{j, a} + \affcnst_{a}$, we have $t_{i,a}(u_{-i}) = 1/\affcoeff_i ( \sum_{j \neq i} \affcoeff_{j} \maxwp_{j, a^\ast_{-i}} + \affcnst_{a^\ast_{-i}} - \sum_{j \neq i} \affcoeff_{j} \maxwp_{j, a} - \affcnst_{a})$ for all $a \in \alts$.
	\item for $i \in N$ s.t. $\affcoeff_i = 0$, for any $ u_{-i} \in \utilSpace_{-i}$, s.t. there exists $a_{-i}^\ast\in \alts$ that satisfies $\sum_{j \neq i} \affcoeff_j \maxwp_{j, a} + \affcnst_a < \sum_{j \neq i} \affcoeff_j \maxwp_{j, a^\ast} + \affcnst_{a^\ast}$ for all $ a \neq a_{-i}^\ast$, we have $t_{i,a_{-i}^\ast}(u_{-i}) = 0 \text{ and } t_{i,a}(u_{-i}) = +\infty,~\forall a \neq a_{-i}^\ast$.
\end{enumerate}
\end{lemma}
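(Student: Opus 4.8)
The plan is to exploit the standard equivalence that a deterministic DSIC mechanism is agent-independent and agent-maximizing, and then to read off the agent-independent prices from the two lemmas already proved. Fix an agent $i$ and a profile $u_{-i}$, and abbreviate $S_a \triangleq \sum_{j \neq i}\affcoeff_j \maxwp_{j,a} + \affcnst_a$, so the given choice rule selects a maximizer of $\affcoeff_i \maxwp_{i,a} + S_a$ over $a \in \alts$. By Lemma~\ref{lem:aiprice_characterization} the prices $\{t_{i,a}(u_{-i})\}_{a}$ are standard (nonnegative, with at least one equal to zero), and by Lemma~\ref{lem:maxwp_as_values}(ii), agent-maximization for $i$ means the selected alternative lies in $\arg\max_{a}(\maxwp_{i,a} - t_{i,a}(u_{-i}))$.

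For case~(i), with $\affcoeff_i > 0$, first note that since the domain is unrestricted and $\affcoeff_i > 0$, agent $i$ can force any alternative to be selected by making its willingness to pay large, so every $t_{i,a}(u_{-i})$ is finite. Writing $q_a \triangleq t_{i,a}(u_{-i})$, I would show $q_a - q_b = (S_b - S_a)/\affcoeff_i$ for every pair $a,b$. To do this, fix $a,b$, set $\maxwp_{i,c} = 0$ for all $c \neq a,b$ (possible since $m \geq 3$, keeping the willingness-to-pay vector valid with minimum zero), and take $\maxwp_{i,b} = M$, $\maxwp_{i,a} = M + \delta + \eta$ with $\delta \triangleq (S_b - S_a)/\affcoeff_i$ and $M$ large. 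Then $a$ and $b$ dominate all other alternatives in the choice rule, and the $a$-versus-$b$ comparison reduces to the sign of $\eta$. For $\eta > 0$ the rule selects $a$, so agent-maximization gives $\maxwp_{i,a} - q_a \geq \maxwp_{i,b} - q_b$, i.e. $q_a - q_b \leq \delta + \eta$; letting $\eta \to 0^+$ yields $q_a - q_b \leq \delta$. The symmetric choice $\eta < 0$ selects $b$ and gives $q_a - q_b \geq \delta$, so $q_a - q_b = \delta$. Hence $q_a = -S_a/\affcoeff_i + D$ for a constant $D = D(u_{-i})$. Finally, standardness forces the minimum price to be zero; since $\affcoeff_i>0$, the minimum of $q_a$ is attained where $S_a$ is largest, i.e. at $a_{-i}^\ast$, so $D = S_{a_{-i}^\ast}/\affcoeff_i$, giving exactly the claimed formula.

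For case~(ii), with $\affcoeff_i = 0$, agent $i$'s report does not enter $\sum_{j \in N}\affcoeff_j \maxwp_{j,a} + \affcnst_a = S_a$, so whenever $a_{-i}^\ast$ is the unique maximizer of $S_a$ the mechanism selects $a_{-i}^\ast$ for every report of $i$. By agent-independence, $a_{-i}^\ast$ is then the only alternative ever selected, whence $t_{i,a}(u_{-i}) = +\infty$ for all $a \neq a_{-i}^\ast$ by definition, and the single finite price $t_{i,a_{-i}^\ast}(u_{-i})$ must be the one that equals zero under standardness (Lemma~\ref{lem:aiprice_characterization}(ii)).

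The main obstacle is the pinning-down of the price differences in case~(i): I must verify that the constructed willingness-to-pay vectors stay valid members of the unrestricted parallel domain (nonnegative, minimum zero) while simultaneously forcing a clean two-way comparison between $a$ and $b$, and that the limiting argument as $\eta \to 0^\pm$ is legitimate for the deterministic (possibly tie-broken) choice rule. Taking $M$ large to dominate the remaining alternatives and keeping $\eta \neq 0$ to avoid ties handles both concerns; everything else is a direct application of Lemmas~\ref{lem:aiprice_characterization} and~\ref{lem:maxwp_as_values}.
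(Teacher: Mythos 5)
Your proof is correct and takes essentially the same approach as the paper's: both rest on Lemma~\ref{lem:aiprice_characterization} (standard prices) and Lemma~\ref{lem:maxwp_as_values} (agent-maximization in terms of willingness to pay), both pin down the pairwise price differences by constructing types in the unrestricted parallel domain where $a$ and $b$ dominate all other alternatives (willingness to pay zero elsewhere, large on $a,b$), both then use standardness to fix the additive constant at $a^\ast_{-i}$, and part~(ii) is argued identically. The only cosmetic difference is that you obtain the equality of price differences by a limiting argument in $\eta \to 0^\pm$ over a family of valid inequalities, whereas the paper assumes a strict deviation of the price difference and places a willingness-to-pay difference strictly inside the resulting gap to contradict DSIC --- the same construction in contrapositive form.
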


In other words, for agents s.t. $\affcoeff_i \neq 0$, and for agents s.t. $\affcoeff_i = 0$ but when there is no tie in the choice rule, the agent-independent prices must be determined by \eqref{equ:ai_prices_gwVCG} and \eqref{equ:ai_prices_gwVCG_k0}. As a consequence, when there is no tie, the outcome of the mechanism must be the same as that under the generalized weighted VCG mechanism. 

\begin{proof}

Fix the choice rule as $x(u) \in \arg\max_{a \in \alts} \sum_{i=1}^n \affcoeff_i \maxwp_{i,a} + C_a$, $\forall u \in \utilSpace$ for some non-negative coefficients $\{ \affcoeff_i\}_{i \in N}$ and real constants $\{ \affcnst_a \}_{a \in \alts}$. 

\smallskip 

\noindent{}\textit{Part (i):} Consider any agent $i \in N$ s.t. $\affcoeff_i > 0$.
For all $u_{-i} \in \utilSpace_{-i}$, for any two alternatives $a, b \in \alts$, there exists $u_i \in \utilDomain_i$ such that both $\sum_{j \in N}\affcoeff_j \maxwp_{j,a} + \affcnst_a > \sum_{j \in N} \affcoeff_j \maxwp_{j,c} + \affcnst_c$ and $\sum_{j \in N}\affcoeff_j \maxwp_{j,b} + \affcnst_b > \sum_{j \in N}\affcoeff_j \maxwp_{j,c} + \affcnst_c$ hold, since $\utilDomain_i$ is unrestricted and $k_i > 0$. We know from the choice rule that only alternatives $a$ and $b$ can be selected, and $a$ is selected if
\begin{align}
	& \sum_{j \in N} \affcoeff_j \maxwp_{j,a} + \affcnst_a > \sum_{j \in N}\affcoeff_j \maxwp_{j,b} + \affcnst_b  
	\Leftrightarrow &  \maxwp_{i,a} - \maxwp_{i,b} >  \frac{1}{\affcoeff_i} \left( \sum_{j\neq i} \affcoeff_{j} \maxwp_{j,b} + \affcnst_{b} -  \sum_{j\neq i} \affcoeff_{j} \maxwp_{j,a} - \affcnst_{a} \right). \label{equ:diff_maxwp_1}
\end{align}
From Lemma~\ref{lem:maxwp_as_values}, we also know that agent-maximization requires that $a$ cannot be selected if
\begin{align}
	\maxwp_{i,a} - t_{i,a}(u_{-i}) < \maxwp_{i,b} - t_{i,b}(u_{-i})  \Leftrightarrow \maxwp_{i,a} - \maxwp_{i,b} < t_{i,a}(u_{-i}) - t_{i,b}(u_{-i}). \label{equ:diff_maxwp_2}
\end{align}
%
If the differences in the agent-independent prices satisfies
$
	t_{i,a}(u_{-i}) - t_{i,b}(u_{-i}) > \frac{1}{\affcoeff_i} ( \sum_{j\neq i} \affcoeff_{j} \maxwp_{j,b} + \affcnst_{b} -  \sum_{j\neq i} \affcoeff_{j} \maxwp_{j,a} - \affcnst_{a} ), 
$
there exists $u_i \in \utilDomain_i$ s.t. $\maxwp_{i,a} - \maxwp_{i,b}  \in  ( \frac{1}{\affcoeff_i}( \sum_{j\neq i} \affcoeff_{j} \maxwp_{j,b} + \affcnst_{b} -  \sum_{j\neq i} \affcoeff_{j} \maxwp_{j,a} - \affcnst_{a} ) ,~t_{i,a}(u_{-i}) - t_{i,b}(u_{-i}) )$  in which case both \eqref{equ:diff_maxwp_1} and \eqref{equ:diff_maxwp_2} hold. This is a contradiction. Similarly, we can show that it cannot be the case that 
$
	t_{i,a}(u_{-i}) - t_{i,b}(u_{-i}) < \frac{1}{\affcoeff_i} t( \sum_{j\neq i} \affcoeff_{j} \maxwp_{j,b} + \affcnst_{b} -  \sum_{j\neq i} \affcoeff_{j} \maxwp_{j,a} - \affcnst_{a} ), 
$
thus the price difference must be
\begin{align}
	t_{i,a}(u_{-i}) - t_{i,b}(u_{-i}) = \frac{1}{\affcoeff_i} \left( \sum_{j\neq i} \affcoeff_{j} \maxwp_{j,b} + \affcnst_{b} -  \sum_{j\neq i} \affcoeff_{j} \maxwp_{j,a} - \affcnst_{a}  \right).  \label{equ:price_diff}
\end{align}

Since the choice of $a$ and $b$ are arbitrary, all prices are pinned-down up to a constant, given that the differences between any pair of prices are determined.
Now observe $t_{i,a}(u_{-i})$ is smaller if $\sum_{j\neq i} \affcoeff_{j} \maxwp_{j,a} + \affcnst_{a} $ is larger. From Lemma~\ref{lem:aiprice_characterization} we know that all prices must be non-negative, and that one of the prices must be zero, thus $t_{i,a_{-i}^\ast}(u_{-i})$, the smallest of all, must be exactly zero. Therefore we get:
\begin{align*}
	t_{i,a_{-i}^\ast}(u_{-i}) = 0 = \frac{1}{\affcoeff_i} \left( \sum_{j\neq i} \affcoeff_{j} \maxwp_{j,a^\ast_{-i}} + \affcnst_{a_{-i}^\ast} -  \sum_{j\neq i} \affcoeff_{j} \maxwp_{j,a_{-i}^\ast} - \affcnst_{a_{-i}^\ast}  \right),
\end{align*}
and
\begin{align*}
	t_{i,a}(u_{-i}) = &t_{i,a_{-i}^\ast} + \frac{1}{\affcoeff_i} \left( \sum_{j\neq i} \affcoeff_{j} \maxwp_{j,a^\ast_{-i}} + \affcnst_{a_{-i}^\ast} -  \sum_{j\neq i} \affcoeff_{j} \maxwp_{j,a} - \affcnst_{a}  \right) \\
		=& \frac{1}{\affcoeff_i} \left( \sum_{j\neq i} \affcoeff_{j} \maxwp_{j,a^\ast_{-i}} + \affcnst_{a_{-i}^\ast} -  \sum_{j\neq i} \affcoeff_{j} \maxwp_{j,a} - \affcnst_{a}  \right) .
\end{align*}


\noindent{}\textit{Part (ii):} Now consider $i \in N$ s.t. $\affcoeff_i = 0$. For all $u_{-i} \in \utilSpace_{-i}$ such that $\exists a_{-i}^\ast \in \alts$ that satisfies $\forall a \neq a_{-i}^\ast$, $\sum_{j \neq i} \affcoeff_j \maxwp_{j, a} + \affcnst_a < \sum_{j \neq i} \affcoeff_j \maxwp_{j, a^\ast} + \affcnst_{a^\ast}$,
 we know that no type of agent $i$ in the parallel domain $u_{i} \in \plDomain$ will result in $x(u_i, u_{-i}) = a$ for $a \neq a_{-i}^\ast$. 
Therefore alternative $a$ cannot be the unique agent-maximizing alternative for any $u_i \in \utilDomain_i$, which implies $t_{i,a}(u_{-i}) = +\infty$ must hold. Now we know from Lemma~\ref{lem:aiprice_characterization} that $t_{i,a_{-i}^\ast}(u_{-i}) = 0$ must be true, since one of the prices must be zero. 
\end{proof}

Given a utility domain $\utilSpace = \prod_{i=1}^n \utilDomain_i$, denote the parallel sub-domain of each agent $i \in N$ as $\udomainintpl_i \triangleq \utilDomain_i \cap \plDomain$, and let $\uspaceintpl \triangleq \utilSpace \cap \plSpace$ be the subspace of $\utilSpace$ containing all parallel type profiles. 
If the utility domain $\utilSpace$ satisfies (C1), i.e. when $\utilDomain_i$ contains an unrestricted parallel domain, then each of the $\udomainintpl_i $ is an unrestricted parallel domain, and $\uspaceintpl$ is also unrestricted.
For any social choice mechanism $(x,t)$ on $\utilSpace$, we show that its restriction on $\uspaceintpl$ must inherit its good properties.

\begin{lemma} \label{lem:subdomain_onto} Fix any social choice mechanism $(x,t)$ under (P1)-(P3) on a utility domain $\utilSpace$ that satisfies (C1). The restriction of $(x,t)$ on the parallel subdomain $\uspaceintpl = \utilSpace \cap \plSpace$ also satisfies (P1)-(P3). Moreover, if $(x,t)$ also satisfies (P4) and (P5), then (P4) and (P5) are also satisfied by its restriction on $\uspaceintpl$. 
\end{lemma}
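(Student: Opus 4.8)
The plan is to dispose of the four ``universally quantified'' properties immediately and to spend essentially all of the effort on ontoness, which is the only property that can genuinely be lost by shrinking the domain. For (P1), note that the incentive constraints defining DSIC on $\uspaceintpl$ form a subset of those defining DSIC on $\utilSpace$: the true type, the misreport, and the opponents' profile now range over the smaller sets $\udomainintpl_i \subseteq \utilDomain_i$ and $\uspaceintpl_{-i} \subseteq \utilSpace_{-i}$, so every constraint that must be verified on the restriction already holds on the full domain. (P2) determinism is immediate. For the second part of the lemma, (P4) IR and (P5) no subsidy are conditions asserted for every profile in $\utilSpace$, hence in particular for every profile in $\uspaceintpl \subseteq \utilSpace$, and so are inherited with no work.

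For (P3) the plan is to show that the range of $x$ is unchanged by the restriction. Since $(x,t)$ is DSIC and deterministic, it is agent-independent and agent-maximizing (the characterization recalled in Section~\ref{sec:PD}), and so is its restriction. Fix an alternative $a$; by ontoness on $\utilSpace$ there is $u \in \utilSpace$ with $x(u) = a$. I would then replace the agents' types by parallel ones one agent at a time, maintaining that $a$ is selected. Concretely, write $w_{-k}$ for the opponents' profile in which agents $1,\dots,k-1$ have already been replaced and agents $k+1,\dots,n$ are still at their original types; when agent $k$ is swapped, the profile $w_{-k}$ is the same before and after the swap, so agent $k$ faces one and the same vector of agent-independent prices $\{t_{k,a'}(w_{-k})\}_{a'\in\alts}$. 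By agent-maximization the selected alternative is a common favorite at these prices, so it suffices to choose the replacement parallel type $u_k^\ast \in \udomainintpl_k$ for which $a$ is agent $k$'s \emph{unique} favorite at $\{t_{k,a'}(w_{-k})\}$: any common favorite must then equal $a$, and $a$ remains selected.

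To force $a$ to be agent $k$'s strict favorite I would invoke Lemma~\ref{lem:maxwp_as_values}: for a parallel type the ranking of the $u_{k,a'}(t_{k,a'})$ is governed by the differences $\maxwp_{k,a'} - t_{k,a'}$, so it is enough to pick a parallel type whose willingness to pay for $a$ exceeds its willingness to pay for every other alternative by more than the fixed price gaps $t_{k,a}(w_{-k}) - t_{k,a'}(w_{-k})$. Since $\udomainintpl_k$ has unrestricted willingness to pay, such a type exists, and the comparison is shape-independent because it rests only on the parallel identity \eqref{equ:parallel_property} rather than on the precise utility curves. Iterating over $k = 1,\dots,n$ produces a fully parallel profile in $\uspaceintpl$ at which $a$ is chosen; as $a$ was arbitrary, the restriction is onto.

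The step that needs care — and which I expect to be the main obstacle — is guaranteeing that the willingness-to-pay comparison of Lemma~\ref{lem:maxwp_as_values} actually applies at the relevant prices, i.e. that the prices $t_{k,a'}(w_{-k})$ fall in the ranges on which the parallel identity is available. Under the hypotheses of the second part of the lemma this is free: Lemma~\ref{lem:aiprice_characterization} shows the agent-independent prices are standard (nonnegative, with at least one equal to zero), so Lemma~\ref{lem:maxwp_as_values} applies directly, and any alternative whose price exceeds agent $k$'s (inflated) willingness to pay — in particular any alternative priced at $+\infty$ — is dominated outright. Under (P1)--(P3) alone the agent-independent prices need not be standard and could even be negative, so I would treat such alternatives separately, driving agent $k$'s willingness to pay for $a$ to $+\infty$ and using (S1)--(S2) to ensure that $a$ still strictly dominates; pinning down this possibly-negative-price case is the delicate part of the argument.
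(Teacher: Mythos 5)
Your treatment of (P1), (P2), (P4) and (P5) is exactly the paper's: these properties are universally quantified over profiles, so they pass to any subdomain with no work. For (P3) your argument is organized differently but rests on the same crux. You run a forward induction: starting from $u \in \utilSpace$ with $x(u)=a$, you replace types one agent at a time by parallel types, keeping $a$ selected by making $a$ the \emph{unique} agent-maximizing alternative of the replacement type at the (unchanged) agent-independent prices $\{t_{k,a'}(w_{-k})\}_{a' \in \alts}$. The paper instead argues by contradiction: if $a^\ast$ were never chosen on $\uspaceintpl$, it proves by induction on the number of non-parallel opponents that $t_{i,a^\ast}(u_{-i}) = +\infty$ for every $i$ and every $u_{-i} \in \utilSpace_{-i}$, so that $a^\ast$ can never be agent-maximizing for anyone and hence is never selected on all of $\utilSpace$, contradicting ontoness of $(x,t)$. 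Both inductions turn on the identical key step --- at finite prices, an unrestricted parallel domain contains a type for which a prescribed alternative is the strict favorite --- so in the regime where prices are standard (in particular for the second claim of the lemma, where Lemma~\ref{lem:aiprice_characterization} applies and then Lemma~\ref{lem:maxwp_as_values} does the work), your proof is complete and essentially equivalent to the paper's.

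The gap is exactly where you place it, but your proposed repair does not work. For the (P1)--(P3)-only claim, prices need not be standard, and ``driving $\maxwp_{k,a}$ to $+\infty$'' buys you nothing: willingness to pay governs utility comparisons only through the parallel identity \eqref{equ:parallel_defn}, which constrains the curves only on payments in $[0, \maxwp_{i,b}]$; at negative payments the curves of a parallel type are arbitrary, and unrestricted willingness to pay guarantees only \emph{one} representative type per willingness-to-pay vector, with no control over its shape there. Concretely, take an unrestricted parallel domain whose representative types all rise so steeply at negative payments that $u_{k,b}(-1) > \max_{a' \in \alts} v_{k,a'}$ for every alternative $b$; if the mechanism's prices happen to satisfy $t_{k,b}(w_{-k}) = -1$ and $t_{k,a}(w_{-k}) \geq 0$, then every type in the domain strictly prefers $b$ to $a$, no matter how large $\maxwp_{k,a}$ is, and your replacement step fails. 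Closing this would require showing that such price configurations cannot arise under (P1)--(P3) alone, which you do not do --- and, for what it is worth, neither does the paper: its own proof makes the same silent leap, applying the willingness-to-pay comparison of Lemma~\ref{lem:maxwp_as_values} at prices that are not known to be standard. In the paper's two applications of the lemma the issue is harmless, since in Theorem~\ref{thm:dictatorship} properties (P4)--(P5) hold, and in Theorem~\ref{thm:two_slope_dictatorship} the restriction is to a strictly parallel subdomain, where the comparison is valid at arbitrary prices; your write-up would be airtight if you likewise scoped the ontoness claim to those two situations.
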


\begin{proof} For any mechanism that satisfies (P1)-(P3), it is immediate that (P1) DSIC (P2) deterministic must also hold for its restriction on any subdomain. Similarly, for a mechanism with (P1)-(P5), its restriction on any subdomain must also satisfy (P4) IR and (P5) no subsidy. What is left to show is that any mechanism that with (P1)-(P3) on a domain $\utilSpace$ that satisfies condition (C1), its restriction on $\uspaceintpl$ must also be onto. 
Assume toward a contradiction, that there exists $a^\ast \in \alts$ s.t. $\forall u \in \uspaceintpl$, $x(u) \neq a^\ast$. 
We prove by induction that the following statement holds for all $\ell \leq n-1$:
\begin{itemize}
	\item[$\FF_\ell$:] $\forall i \in N$, $\forall u_{-i} \in \utilSpace_{-i}$ such that $|\{ i' \in N ~|~ i' \neq i,~u_{i'} \notin \udomainintpl_i \}| \leq \ell$, we have $t_{i,a^\ast}(u_{-i}) = \infty$.
\end{itemize}
This implies that $a^\ast$ cannot be the agent-maximizing alternative for any agent under any type profile $u \in \utilSpace$, thus $a^\ast$ cannot be selected, and this violates the ontoness of $(x,t)$ on $\utilSpace$.

We first prove $\FF_0$. When $| \{ i' \in N ~|~ i' \neq i,~u_{i'} \notin \udomainintpl_i \}| \leq 0$, we know $u_{-i} \in \uspaceintpl_{-i}$.
For any $i \in N$ and any $u_{-i} \in \uspaceintpl_{-i}$, assuming $t_{i,a^\ast}(u_{-i}) < \infty$, there exists $u_i \in \udomainintpl_i$ s.t. $\maxwp_{i,a^\ast} - t_{i,a^\ast}(u_{-i}) > \maxwp_{i,a} - t_{i,a}(u_{-i})$ for all $a \neq a^\ast$, given that $\udomainintpl_i$ is unrestricted. In this case, given the parallel profile $(u_i, u_{-i})$, $a^\ast$ is the unique agent-maximizing alternative for agent $i$ thus has to be selected. This contradicts the assumption that $x(u) \neq a^\ast$ for all $u \in \uspaceintpl$, thus $t_{i,a^\ast}(u_{-i}) = \infty$ must hold. 

Now assume $\FF_{\ell-1}$ holds for some $\ell$ s.t. $1 \leq \ell \leq  n-2$, we show that $\FF_{\ell}$ also holds. W.l.o.g., we consider agent $1$ and some $u_{-1}$ s.t. $u_i \in \udomainintpl_i$ for all $i \geq \ell + 2$. In this case, only $u_2, \dots, u_{\ell+1}$ can be non-parallel. Now consider agent $2$, and any $u_1 \in \udomainintpl_i$, we know in $u_{-2} = (u_1, u_{-1,-2})$, $|\{ i \in N | i \neq 2, u_i \notin \udomainintpl_i \}| \leq \ell - 1$. As a result, $\FF_{\ell-1}$ implies that $t_{2, a^\ast} (u_{-2}) = \infty$, thus alternative $a^\ast$ cannot be agent-maximizing for agent $2$ and therefore cannot be selected in the economy $(u_1, u_{-1})$. Since this holds for any $u_1 \in \udomainintpl_1$, with the same arguments that we proved $\FF_0$, we conclude $t_{1,a^\ast}(u_{-1}) = \infty$ must hold as well, since otherwise $a^\ast$ would be the unique agent-maximizing alternative for some parallel type $u_1$ and this violates DSIC. This proves $\FF_{\ell-1} \Rightarrow \FF_\ell$, and therefore completes the proof of this lemma.
\end{proof}



We are now ready to prove the main theorem.

\begin{proof} [Proof of Theorem~\ref{thm:dictatorship}]
Recall that $\udomainintpl_i \triangleq \utilDomain_i \cap \plDomain$ for each $i \in N$, and $\uspaceintpl \triangleq \utilSpace \cap \plSpace$. For any social choice mechanism $(x,t)$ on $\utilSpace$ that satisfies (P1)-(P5), Lemma~\ref{lem:subdomain_onto} implies that its restriction on $\uspaceintpl$ must also satisfy (P1)-(P5). Theorem~\ref{thm:roberts} then guarantees that there exist non-negative coefficients $\{ \affcoeff_i \}_{i \in N}$, not all of them zero, and real constants $\{ \affcnst_a \}_{a \in \alts}$ s.t. $x(u) \in \arg\max_{a \in \alts} \sum_{i\in N} \affcoeff_i \maxwp_{i,a} + \affcnst_a$ for all $u \in \uspaceintpl$. 
Although this does not immediately determine the outcome of the mechanism for any non-parallel type profile $u \in \utilSpace \backslash \uspaceintpl$, we do know from agent independence that for any agent $i$, and any parallel profile for the other agents $u_{-i} \in \uspaceintpl_{-i}$, the agent-independent prices agent $i$ faces $\{t_{i,a}(u_{-i}) \}_{a \in \alts}$ must be characterized as in Lemma~\ref{lem:price_characterization}. 
We use this characterization and condition (C2) $\utilDomain_i \not\subset \plDomain$ for at least $n-1$ agents to prove the dictatorship result in the following two steps:
\begin{itemize}
	\item Step 1: the number of agents s.t. $\affcoeff_i \neq 0$ is  exactly one.
	\item Step 2: the only agent with $\affcoeff_i \neq 0$ must be a fixed price dictator.
\end{itemize}

\medskip

\noindent{}\textit{Step 1:} 
We know from Theorem~\ref{thm:roberts} that there exists at least one agent with $\affcoeff_i > 0$. 
Assume towards a contradiction, that there exist at least two agents, which we name agent $1$ and agent $2$, for whom $\affcoeff_1,\:\affcoeff_2 > 0$. For at least one of them, say agent 2, $\utilDomain_2 \subseteq \plDomain$ does not hold due to condition (C2), thus there exists a non-parallel type $u_2^\ast \in \utilDomain_2 \backslash \udomainintpl_2$. 
We prove that for some parallel profile for the rest of the agents $u_{-1,-2}^\ast \in \uspaceintpl_{-1,-2}$, there do not exist agent-independent prices $\{ t_{1,a}(u_2^\ast,u_{-1,-2}^\ast) \}_{a \in \alts}$ for agent $1$, such that for any $u_1 \in \udomainintpl_1$, there exists an alternative that is agent-maximizing for all agents in economy $(u_1, u_2^\ast, u_{-1,-2}^\ast)$. This contradicts DSIC, therefore there is exactly one agent s.t. $\affcoeff_i > 0$.

We assume w.l.o.g. that  alternative $a$ is one of agent 2's favorite alternatives at zero payment: $a \in \arg\max_{a'\in \alts} v_{2,a'}^\ast$ (where $v^\ast_{2,a} \triangleq u_{2,a}^\ast(0)$ for all $a \in \alts$). For any parallel type $u_2 \in \plDomain$ s.t. $a \in \arg \max_{a'\in \alts} \{ v_{2,a'} \}$, we know from the definition of the parallel domain that for any alternative $a' \in \alts$, $u_{2,a'}(z) =  u_{2,a}(z + \maxwp_{2,a} - \maxwp_{2,a'}) = u_{2,a}(z + u_{2,a}^{-1}(v_{2,a'}))$ must hold for all $z \leq \maxwp_{2,a'}$. 
$u_2^\ast \notin \plDomain$ implies that there exists some alternative  $b \notin \arg\min_{a' \in \alts} v_{2,a'}^\ast$ and some price $\payment^\ast \in (0, \maxwp_{2,b}^\ast]$ s.t. $u_{2,b}^\ast(\payment^\ast) \neq u_{2,a}^\ast(\payment^\ast + (u_{2,a}^\ast)^{-1}(v_{2,b}^\ast))$. 
Let $w \triangleq u_{2,b}^\ast(\payment^\ast)$ and define $\Delta_1 \triangleq (u_{2,a}^\ast)^{-1}(v_{2,b}^\ast)$ and $\Delta_2 \triangleq (u_{2,a}^\ast)^{-1}(w) - \payment^\ast$. We assume $u_{2,b}^\ast(\payment^\ast) < u_{2,a}^\ast(\payment^\ast + \Delta_1)$, in which case $\Delta_2 > \Delta_1 \geq 0$, as illustrated in Figure~\ref{fig:thmDictatorProof_Step4}. The other direction can be proved in the same way.

\begin{figure}[t!]
\vspace{-0.0em}
\centering     

\begin{tikzpicture}[scale = 1][font = \normalsize]
\draw[ ->] (-0.1,0) -- (8.5,0) node[anchor=north] {$\payment$};

\draw[ ->] (0,-0.4) -- (0, 3.7) node[anchor=west] {$u_{2,a'}^\ast(\payment)$};

\draw  [line width=0.2mm, -]  (0, 3.2) to[out=-45, in=-200] (5, 0.8) to[out=-20, in=-190] (7.5, 0.2);

\draw  [line width=0.2mm, dashed]  (0, 2.2) to[out=-55, in=-200] (3.5, -0.2); 

\draw[line width=0.2mm, dashdotted](0, 0.3) parabola[bend at end] (2, -0.2);

\draw[dotted](0, 2.2) -- (1.8, 2.2) ;
\draw[dotted](1.35, 2.4) -- (1.35, 1.8) ;

\draw[decoration={brace,mirror,raise=5pt},decorate]
  (0,2.4) -- node[below=5pt] { { \footnotesize $\Delta_1$}} (1.35,2.4);
  
\draw[dotted](-0.1, 0.9) -- (5.2, 0.9) ;
\draw[dotted](1.2, 1.2) -- (1.2, -0.15) ;
\draw (1.2, -0.1) node[anchor=north] { {  $\payment^\ast$ } };
\draw (0, 0.9) node[anchor=east] { {  $w$ } };
  
\draw[decoration={brace,mirror,raise=5pt},decorate]
  (1.25,1.1) -- node[below=5pt] { { \footnotesize $\Delta_2$}} (4.7,1.1);

\draw[dotted](-0.1, 0.3) -- (7.4, 0.3) ;
\draw[dotted](2.25, 0.5) -- (2.25, -0.25) ;
\draw[dotted](6.85, 0.5) -- (6.85, -0.25) ;
\draw (2.25, -0.18) node[anchor=north] { {  $\maxwp_{2,b}^\ast$ } };
\draw (6.85, -0.18) node[anchor=north] { {  $\maxwp_{2,a}^\ast$ } };

\draw[dotted](2.7, 1.9) -- (2.7, 1.1) ;
\draw (3, 1.8) node[anchor=south] { {$\payment^\ast + \Delta_1$ } };
\draw[dotted](-0.1, 1.55) -- (2.9, 1.55) ;
\draw (0, 1.5)node[anchor = east] {{ \small $u_{2,a}^\ast(\payment^\ast + \Delta_1)$}};

\draw (0, 3.2)node[anchor = east] {{  $v_{2,a}^\ast$}};
\draw (0, 2.2)node[anchor = east] {{  $v_{2,b}^\ast$}};
\draw (0, 0.25)node[anchor = east] {{  $v_{2, \worstalt_2}^\ast$ }};

\draw[line width=0.2mm, -] (6.2, 3.5) -- (6.85, 3.5) node[anchor=west] {$u_{2,a}^\ast(\payment)$};
\draw[line width=0.2mm, dashed] (6.2, 2.75) -- (6.85, 2.75) node[anchor=west] {$u^\ast_{2,b}(\payment)$};
\draw[line width=0.2mm, dashdotted] (6.2, 2) -- (6.85,2) node[anchor=west] {$u^\ast_{2, \worstalt_2}(\payment)$};

\end{tikzpicture}
\caption{Illustration of $u_2^\ast \in \utilDomain_2 \backslash \udomainintpl_2$, for Step~1 of the proof of Theorem~\ref{thm:dictatorship}.
\label{fig:thmDictatorProof_Step4}} 
\end{figure}

Fixing the type of the agents other than 1 and 2 to be parallel with zero willingness to pay on all alternatives, i.e. for all $i \geq 3$, let $u_i^\ast \in \udomainintpl_i$  be such that $\maxwp_{i,a'}^\ast =0$ for all $a' \in \alts$. Such types exist since $\udomainintpl_i$ is unrestricted for each $i \in N$. Denote $u^\ast_{-1} = (u_2^\ast, u_3^\ast, \dots, u_n^\ast)$, and let $\epsilon$ be some small positive number s.t. $ 0 < \epsilon < (\Delta_2 - \Delta_1)/2$. 
We prove:
\begin{itemize}
	\item Step 1.1: $t_{1,b}(u_{-1}^\ast) - t_{1,a}(u_{-1}^\ast) \leq \affcoeff_2/\affcoeff_1 (\Delta_1 + \epsilon) + (\affcnst_a - \affcnst_b)/ \affcoeff_1 $,
	\item Step 1.2: $t_{1,b}(u_{-1}^\ast) - t_{1,a}(u_{-1}^\ast) \geq \affcoeff_2/\affcoeff_1 (\Delta_2 - \epsilon) + (\affcnst_a - \affcnst_b)/ \affcoeff_1 $.
\end{itemize}

Since $\affcoeff_2/\affcoeff_1 (\Delta_2 - \epsilon) + (\affcnst_a - \affcnst_b)/ \affcoeff_1 > \affcoeff_2/\affcoeff_1 (\Delta_1 + \epsilon) + (\affcnst_a - \affcnst_b)/ \affcoeff_1$, we know that this is a contradiction, thus the number of agents for whom $\affcoeff_i > 0$ cannot be more than one. 

\medskip

\noindent{}\textit{Step 1.1:} Assume towards a contradiction that $t_{1,b}(u_{-1}^\ast) - t_{1,a}(u_{-1}^\ast) > \affcoeff_2/\affcoeff_1 (\Delta_1 + \epsilon) + (\affcnst_a - \affcnst_b)/ \affcoeff_1 $ and consider a parallel type $u_1 \in \udomainintpl_1$ of agent 1 with the following willingness to pay:
\begin{align}
	\maxwp_{1,c} &= 0,~\forall c \neq a,b, \label{equ:p_1c_thm_dic_proof} \\
	\maxwp_{1,b} &= \max_{c \neq a,b} \left\lbrace \frac{\affcoeff_2}{\affcoeff_1} \maxwp_{2,c}^\ast + \frac{\affcnst_c - \affcnst_b}{\affcoeff_1}  \right\rbrace + \delta, \label{equ:p_1b_thm_dic_proof} \\
	\maxwp_{1,a} &= \maxwp_{1,b} - \left(\frac{\affcoeff_2}{\affcoeff_1} (\Delta_1 + \epsilon/2) + \frac{ \affcnst_a - \affcnst_b }{\affcoeff_1}  \right),  \label{equ:p_1a_thm_dic_proof}
\end{align}
where $\delta$ is strictly positive, and large enough s.t. $\maxwp_{1,a}$ and $\maxwp_{1,b} $ as defined are both non-negative. Such $u_1$ is guaranteed to exist since $\udomainintpl_1$ is unrestricted. We know from \eqref{equ:p_1a_thm_dic_proof} and the assumption $t_{1,b}(u_{-1}^\ast) - t_{1,a}(u_{-1}^\ast) > \affcoeff_2/\affcoeff_1 (\Delta_1 + \epsilon) + (\affcnst_a - \affcnst_b)/ \affcoeff_1 $ that:
\begin{align*}
	 & \maxwp_{1,a} - t_{1,a}(u_{-1}^\ast) - \left(\maxwp_{1,b} - t_{1,b}(u_{-1}^\ast) \right) \\
	= & \maxwp_{1,a} - \maxwp_{1,b} +  \left( t_{1,b}(u_{-1}^\ast) - t_{1,a}(u_{-1}^\ast)\right)  \\
	> & - \left(\frac{\affcoeff_2}{\affcoeff_1} (\Delta_1 + \epsilon/2) + \frac{ \affcnst_a - \affcnst_b }{\affcoeff_1}  \right)  +  \left(\frac{\affcoeff_2}{\affcoeff_1} (\Delta_1 + \epsilon) + \frac{ \affcnst_a - \affcnst_b }{\affcoeff_1}  \right) \\
	 = &\frac{\affcoeff_2}{2\affcoeff_1}\epsilon > 0,
\end{align*}
thus $\maxwp_{1,a} - t_{1,a}(u_{-1}^\ast) > \maxwp_{1,b} - t_{1,b}(u_{-1}^\ast)$. We conclude according to Lemma~\ref{lem:maxwp_as_values} that with prices $ \{ t_{1,a'}(u_{-1}^\ast) \}_{a'\in \alts}$, $b$ cannot be an agent-maximizing alternative for agent $1$.

We now prove that $b$ is the only agent-maximizing alternative for agent 2, therefore no alternative can be agent-maximizing for both agents, leading to a contradiction to DSIC. 
First, by assumption, $\maxwp_{i,a'}^\ast = 0$ for all $i \neq 1,2$ and all $a' \in \alts$, thus
\if 0
\begin{align*}
	\affcoeff_1 \maxwp_{1,a'} + \sum_{i \geq 3} \affcoeff_i \maxwp_{i,a'}^\ast + \affcnst_{a'} = \affcoeff_1 \maxwp_{1,a'} + \affcnst_{a'}, ~\forall a' \in \alts.
\end{align*}
This implies that 
\fi
when the type profile of the rest of the economy is given by $(u_1, u_3^\ast,\dots, u_n^\ast)$, we have
\begin{align*}
	\arg \max_{a'\in \alts} \left\lbrace \affcoeff_1 \maxwp_{1,a'} + \sum_{i \geq 3} \affcoeff_i \maxwp_{i,a'}^\ast + \affcnst_{a'} \right\rbrace = \arg \max_{a' \in \alts} \left\lbrace \affcoeff_1 \maxwp_{1,a'} + \affcnst_{a'} \right\rbrace.
\end{align*}
We can now check that $b 
\in \arg\max_{a' \in \alts} \left\lbrace \affcoeff_1 \maxwp_{1,a'} + \affcnst_{a'} \right\rbrace $. From \eqref{equ:p_1a_thm_dic_proof} we know
\begin{align*}
	&\affcoeff_1 \maxwp_{1,b} + \affcnst_b - ( \affcoeff_1 \maxwp_{1,a} + \affcnst_{a} )  = \affcoeff_1(\maxwp_{1,b} - \maxwp_{1,a})  + \affcnst_b - \affcnst_a \\
	=& \affcoeff_2(\Delta_1 + \epsilon/2) + (\affcnst_a - \affcnst_b) - (\affcnst_a - \affcnst_b)  = \affcoeff_2(\Delta_1 + \epsilon/2) > 0,
\end{align*}
thus $\affcoeff_1 \maxwp_{1,b} +\affcnst_b  > \affcoeff_1 \maxwp_{1,a} + \affcnst_{a}$. Moreover, for any $c \neq a,b$, we know from \eqref{equ:p_1c_thm_dic_proof} and \eqref{equ:p_1b_thm_dic_proof} that
\begin{align*}
	&\affcoeff_1 \maxwp_{1,b} + \affcnst_b - ( \affcoeff_1 \maxwp_{1,c} + \affcnst_{c} )  = \affcoeff_1(\maxwp_{1,b} - \maxwp_{1,c})  + \affcnst_b - \affcnst_c \\
	> & \affcoeff_1 \left( \frac{\affcoeff_2}{\affcoeff_1} \maxwp_{2,c}^\ast + \frac{\affcnst_c - \affcnst_b}{\affcoeff_1}  \right) + \affcnst_b - \affcnst_c = \affcoeff_2 \maxwp_{2,c}^\ast \geq 0,
\end{align*}
therefore, $\affcoeff_1 \maxwp_{1,b} +\affcnst_b > \affcoeff_1 \maxwp_{1,c} + \affcnst_{c}$ holds for all $c \neq a,b$. Now we know $b \in \arg\max_{a' \in \alts} \left\lbrace \affcoeff_1 \maxwp_{1,a'} + \affcnst_{a'} \right\rbrace $ which implies $t_{2,b}(u_1, u_{-1,-2}^\ast) = 0$, according to Lemma~\ref{lem:price_characterization}. Thus we know the utility agent 2 gets from alternative $b$ at the current price is: $u_{2,b}^\ast(t_{2,b}(u_1, u_{-1,-2}^\ast))=u_{2,b}^\ast(0)  = v_{2,b}^\ast$. 


For alternative $a$, we know from \eqref{equ:price_diff} and \eqref{equ:p_1a_thm_dic_proof} that
\begin{align*}
	& t_{2,a}(u_1, u_{-1,-2}^\ast)  =  t_{2,b}(u_1, u_{-1,-2}^\ast) + \frac{1}{\affcoeff_2} \left( \affcoeff_1(\maxwp_{1,b} - \maxwp_{1,a}) + \affcnst_b - \affcnst_a \right) \\
	=& 0 +  \frac{1}{\affcoeff_2} \left(\affcoeff_2 (\Delta_1 + \epsilon/2) +\affcnst_a - \affcnst_b  +  \affcnst_b - \affcnst_a  \right) = \Delta_1 + \epsilon/2 > \Delta_1.
\end{align*}
Therefore, $u_{2,a}^\ast(t_{2,a}(u_1, u_{-1,-2}^\ast)) < u_{2,a}^\ast(\Delta_1) = v_{2,b}^\ast$. For all other alternatives $c \neq a,b$, we know from \eqref{equ:price_diff}, \eqref{equ:p_1c_thm_dic_proof}  and \eqref{equ:p_1b_thm_dic_proof} that
\begin{align*}
	& t_{2,c}(u_1, u_{-1,-2}^\ast) =  t_{2,b}(u_1, u_{-1,-2}^\ast) + \frac{1}{\affcoeff_2} \left( \affcoeff_1(\maxwp_{1,b} - \maxwp_{1,c}) + \affcnst_b - \affcnst_c\right) \\
	> & 0 + \frac{1}{\affcoeff_2} \left( \affcoeff_1 \left( \frac{\affcoeff_2}{\affcoeff_1} \maxwp_{2,c}^\ast + \frac{\affcnst_c - \affcnst_b}{\affcoeff_1} \right) + \affcnst_b - \affcnst_c\right) = \maxwp_{2,c}^\ast.
\end{align*}
Therefore, $u_{2,c}^\ast(t_{2,c}(u_1, u_{-1,-2}^\ast)) < \min_{a' \in \alts} v_{2,a'}^\ast \leq v_{2,b}^\ast$ for all $c \neq a,b$. This proves that $\{b\} = \arg\max_{a'\in \alts} u_{2,a'}^\ast(t_{2,a'}(u_1, u_{-1,-2}^\ast)) $, thus completes the proof of part Step 1.1.

\smallskip

\noindent{}\textit{Step 1.2:} Assume for contradiction that $t_{1,b}(u_{-1}^\ast) - t_{1,a}(u_{-1}^\ast) < \affcoeff_2/\affcoeff_1 (\Delta_2 - \epsilon) + (\affcnst_a - \affcnst_b)/ \affcoeff_1 $. As discussed above, $b$ cannot be the least preferred alternative at zero price according to $u_2^\ast$, thus we assume w.l.o.g. $m \in \arg\min_{a' \in \alts} v_{2,a'}^\ast$. Consider the type $u_1 \in \udomainintpl_1$ s.t. 
\begin{align*}
	& \maxwp_{1,c} = 0,~\forall c \neq a,b,m, \\
	& \maxwp_{1,a} =  \max_{c \neq a,b,m} \left\lbrace \frac{\affcoeff_2}{\affcoeff_1} \maxwp_{2,c}^\ast + \frac{\affcnst_c - \affcnst_a}{\affcoeff_1}  \right\rbrace + \delta, \\
	& \maxwp_{1,b} = \maxwp_{1,a} + \left(\frac{\affcoeff_2}{\affcoeff_1} (\Delta_2 - \epsilon/2) + \frac{ \affcnst_a - \affcnst_b }{\affcoeff_1}  \right),  \\
	& \maxwp_{1,m} = \maxwp_{1,b} + \frac{1}{\affcoeff_1}(\affcoeff_2 z^\ast + \affcnst_b - \affcnst_m),
\end{align*} 
where $\delta $ is some non-negative number such that $\min\{ \maxwp_{1,a},\maxwp_{1,b},\maxwp_{1,m} \} \geq 0$.\footnote{If the number of alternatives is exactly 3, then we may set $\maxwp_{1,a} = \delta$ where $\delta \in \setR$ guarantees $\min\{ \maxwp_{1,a},\maxwp_{1,b},\maxwp_{1,m} \} = 0$, so that the smallest willingness to pay among all alternatives is zero. The rest of the proof remains the same.} Similar to the proof of part Step 1.1, for $u_1$ as constructed, we can show that $a$ cannot be the agent-maximizing alternative for agent 1 given prices $\{t_{1,a'}(u_{-1}^\ast)\}_{a'\in \alts}$ since $\maxwp_{1,b} - t_{1,b}(u_{-1}^\ast) > \maxwp_{1,a} - t_{1,a}(u_{-1}^\ast) $ thus $a$ is not the maximizer of $\maxwp_{1,a'} - t_{1,a'}(u_{-1}^\ast)$.
\if 0
\hma{Cutting the derivation to save space}
\begin{align*}
	& \maxwp_{1,b} - t_{1,b}(u_{-1}^\ast) - (\maxwp_{1,a} - t_{1,a}(u_{-1}^\ast))\\
	 =& (\maxwp_{1,b} -\maxwp_{1,a}) - (t_{1,b}(u_{-1}^\ast) - t_{1,a}(u_{-1}^\ast)) \\
	> &  \left(\frac{\affcoeff_2}{\affcoeff_1} (\Delta_2 - \frac{\epsilon}{2}) + \frac{ \affcnst_a - \affcnst_b }{\affcoeff_1}  \right) - \left(  \frac{\affcoeff_2}{\affcoeff_1} (\Delta_2 - \epsilon) + \frac{\affcnst_a - \affcnst_b}{\affcoeff_1}  \right)\\
	=& \frac{\affcoeff_2}{2\affcoeff_1} \epsilon > 0.
\end{align*} 
\fi
Moreover, we can prove that $a$ \emph{is} the unique agent-maximizing alternative for agent 2, by showing:
\begin{enumerate}[1)]
	\item  $m \in \arg\max_{a' \in \alts} \{ \affcoeff_1 \maxwp_{1,a'} + \affcnst_{a'} \}$ which implies $t_{2,m}(u_1, u_{-1,-2}^\ast) = 0$, $t_{2,a'}(u_1, u_{-1,-2}^\ast) = 1/\affcoeff_2(\affcoeff_1 \maxwp_{1,m} + \affcnst_m - \affcoeff_1 \maxwp_{1,a'} - \affcnst_{a'})$ for all $a' \in \alts$, and  $u_{2,m}^\ast(t_{2,m}(u_1, u_{-1,-2}^\ast)) = \min_{a' \in \alts}v_{2,a'}^\ast \leq w$.
	\item $t_{2,b}(u_1, u_{-1,-2}^\ast) = \payment^\ast$ thus $u_{2,b}^\ast(t_{2,b}(u_1, u_{-1,-2}^\ast)) = u_{2,b}^\ast(\payment^\ast) = w$,
	\item $t_{2,c}(u_1, u_{-1,-2}^\ast) > \maxwp_{2,c}^\ast$, thus $u_{2,c}^\ast(t_{2,c}(u_1, u_{-1,-2}^\ast)) < \min_{a' \in \alts}v_{2,a'}^\ast \leq w$ for all $ c \neq a,b,m$, and
	\item $t_{2,a}(u_1, u_{-1,-2}^\ast) = \Delta_2 + \payment^\ast - \epsilon/2 < \Delta_2+ \payment^\ast$,  therefore $u_{2,a}^\ast(t_{2,a}(u_1, u_{-1,-2}^\ast))>  u_{2,a}^\ast(\Delta_2+ \payment^\ast) = w \geq \max_{a' \neq a } \{ u_{2,a'}^\ast(t_{2,a'}(u_1, u_{-1,-2}^\ast)) \}$.
\end{enumerate}
This shows that no alternative is agent-maximizing for both agents 1 and 2, and completes the proof of this Step 1.2, and also Step 1.

\bigskip

\noindent{}\textit{Step 2:} The mechanism must be a fixed price dictatorship.  

So far, we have proved  that for any mechanism  $(x,t)$ satisfying (P1)-(P5), its restriction on the parallel subdomain $\uspaceintpl$  must be an affine maximizer of willingness to pay with coefficients $\{\affcoeff_i\}_{i \in N} $ and constants $\{\affcnst_a\}_{a \in \alts}$, where $\affcoeff_i > 0$ for exactly one agent. Let's name her agent 1 and let $\vec{\payment}$ be a vector of fixed prices in $ \setR_{\geq 0}^m$ s.t. 
\begin{align}
	z_a \triangleq \frac{1}{\affcoeff_1} \left( \max_{a' \in \alts} \{ \affcnst_{a'} \} - \affcnst_a \right), ~\forall a \in \alts. \label{equ:fixed_prices}
\end{align}
To show that agent $1$ is a fixed price dictator, i.e. $x(u) \in \arg \max_{a \in \alts} u_{1,a}(\payment_a)$ and $ t_{1}(u) =\payment_{x(u)}$ for all $u \in \utilSpace$, agent-maximization implies that it is sufficient to show for all $u_{-1} \in \utilSpace_{-1}$, $t_{1,a}(u_{-1}) = \payment_{a}$ holds. We prove this by induction on the number of agents whose types are not parallel in the profile $u_{-1}$.
For any $\ell = 0, 1, \dots, n-1$, let the induction statements be 
\begin{enumerate}
	\item [$\GG_\ell$:] For all $u_{-1} \in \utilSpace_{-1}$ s.t. $| \{ i \in N |  i \neq 1, u_i \notin \plDomain \}|\leq \ell$, $t_{1,a}(u_{-1}) = \payment_{a}$ holds for all $ a \in \alts$.
	\item [$\HH_\ell$:] For all $ i \neq 1$, for all $ u_{-i} \in \utilSpace_{-i}$ such that (I) $| \{ j \in N | j \neq i,~u_j \notin \plDomain \}| \leq \ell$, and (II) $\exists a^\ast$ s.t. $u_{1,a^\ast}(\payment_{a^\ast}) > u_{1,a}(\payment_a)$ for all $a \neq a^\ast$, we have $t_{i,a^\ast}(u_{-i}) = 0$, and $t_{i,a}(u_{-i}) = +\infty$ for all $a \neq a^\ast$.
\end{enumerate}

We first observe that Lemma~\ref{lem:price_characterization} implies $\GG_0$. For $u_{-1}$ s.t. $| \{ i \in N |  i \neq 1, u_i \notin \plDomain \}| = 0$, $u_{-1} \in \uspaceintpl_{-1}$, thus part (i) of Lemma~\ref{lem:price_characterization}, agent-independence and the fact $\affcoeff_i = 0$ for $i \neq 1$ imply that $\forall a \in A$:
%
\begin{align*}
	t_{1,a}(u_{-1}) =& \frac{1}{\affcoeff_1} \max_{a' \in \alts} \left\lbrace \sum_{i \neq 1} \affcoeff_i \maxwp_{i,a'} + \affcnst_{a'} \right\rbrace  - \frac{1}{\affcoeff_1}  \left( \sum_{i \neq 1} \affcoeff_i \maxwp_{i,a} + \affcnst_{a} \right) = \frac{1}{\affcoeff_1} \left(  \max_{a' \in \alts} \affcnst_{a'} - \affcnst_a \right) = \payment_a.
\end{align*}
%
%
We show in the following two steps that  $\GG_{\ell} \Rightarrow \HH_{\ell}$ and  $\HH_{\ell-1} \Rightarrow \GG_{\ell}$. This implies that $\GG_\ell$ holds for $\ell = n-1$, and completes the proof of the theorem. $\HH_{n-1}$ also implies that when the dictator has a unique most preferred alternative at the fixed prices $\{ \payment_a \}_{a \in \alts}$, the rest of the agents cannot be charged any payment. 

\medskip

\noindent{}\emph{Step 2.1: $\GG_{\ell} \Rightarrow \HH_{\ell}$ for $0 \leq \ell \leq n-1$.}

W.l.o.g., consider agent $i=2$, and some $u_{-2} \in \utilSpace_{-2}$ s.t. $|\{ j \in N | j \neq 2, u_j \notin \plDomain \}| = \ell$. For any $u_2 \in \udomainintpl_2$, in the economy $(u_2, u_{-2})$, the number of agents in the profile $u_{-1}$ with types outside of the parallel domain is at most $\ell$. Since $|\{ j \in N | j \neq 2, u_j \notin \plDomain \}| = \ell$ and $u_2$ is parallel, $|\{ j \in N | j \neq 1, u_j \notin \plDomain \}| = \ell-1$ if $u_1 \notin \plDomain$, and $|\{j \in N | j \neq 1, u_j \notin \plDomain \}| = \ell$ if $u_1 \in \plDomain$. We know from $\GG_\ell$ that $t_{i,a}(u_{-1}) = \payment_a$ for all $a \in \alts$.

If there exists a unique agent-maximizing alternative for agent $1$ given fixed prices $\vec{z}$, i.e. $\exists a^\ast$ s.t. $  u_{1,a^\ast}(\payment_{a^\ast}) > u_{1,a}(\payment_a)$, $\forall a \neq a^\ast$, the only alternative that can be selected in the economy $(u_2, u_{-2})$ is $a^\ast$. This implies that for all $u_2 \in \udomainintpl_2$, $x(u_2, u_{-2}) = a^\ast$ must hold. 
Since $\udomainintpl_2$ is unrestricted, in order for $a^\ast$ to be agent-maximizing for agent $2$ for any $u_2 \in \udomainintpl_2$, we must have $t_{2,a}(u_{-2}) = \infty$ for all $a \neq a^\ast$. Since prices must be standard, $t_{2,a^\ast}(u_{-2}) = 0$ and this completes the proof of $\GG_{\ell} \Rightarrow \HH_{\ell}$. 

\smallskip
\noindent{}\emph{Step 2.2: $\HH_{\ell-1} \Rightarrow \GG_{\ell}$ for all $1 \leq \ell \leq n-1$.}

Let there be $\ell$ entries in $u_{-1}$ that are outside of the parallel domain, and w.l.o.g. assume that $u_2, \dots, u_{\ell+1} \notin \plDomain$. Assume that $\exists a \in \alts$ s.t. $t_{1,a}(u_{-1}) \neq \payment_a$, we first show a contradiction for the case that $t_{1,a}(u_{-1}) > \payment_a$, and then show that the other direction cannot hold either.
First, note that it cannot be the case if $t_{1,a'}(u_{-1}) > \payment_{a'}$ for all $a' \in \alts$. This is because $\vec{z}$ as defined in \eqref{equ:fixed_prices} is a vector of standard prices with the minimal entry equal to $0$. If $t_{1,a'}(u_{-1}) > \payment_{a'}$ for all $a' \in \alts$, we know $t_{1,a'}(u_{-1}) > 0$ for all $a'$, and this violates Lemma~\ref{lem:aiprice_characterization}. W.l.o.g., we assume $t_{1,b}(u_{-1}) \leq \payment_{b}$.

Denote $\epsilon \triangleq t_{1,a}(u_{-1}) - \payment_a$, we know $\epsilon > 0$. Consider some parallel type of agent 1, $u_1 \in \udomainintpl_1$, where the willingness to pay is of the form:
\begin{align*}
	\maxwp_{1,a} =&  \payment_a + \epsilon / 2, \\
	\maxwp_{1,b} =&  \payment_b + \epsilon / 3,  \\
	\maxwp_{1,c} =&  0, ~\forall c \neq a,b.
\end{align*}
This gives us $\maxwp_{1,b} - \payment_b = \epsilon / 3$, $\maxwp_{1,c} - \payment_c = - \payment_c , ~\forall c \neq a,b$, and therefore $\maxwp_{1,a} - \payment_a = \epsilon / 2 > \maxwp_{1,a'} - \payment_{a'} $  for all $a' \neq a$. Given Lemma~\ref{lem:maxwp_as_values}, we know that $a$ is the unique agent-maximizing alternative for agent $1$ under the vector of prices $\vec{\payment}$.

\if 0

We know that ,  and the fact that $a$ is the unique maximizer of $\maxwp_{i,a'} - \payment_{a'}$:
\begin{align*}
	\maxwp_{1,a} - \payment_a &= \epsilon / 2, \\
	\maxwp_{1,b} - \payment_b &= \epsilon / 3, \\
	\maxwp_{1,c} - \payment_c &= - \payment_c , ~\forall c \neq a,b.
\end{align*}
\fi

In the economy $(u_1, u_2, \dots, u_{\ell+1}, u_{\ell+2}, \dots, u_n)$, there are $\ell-1$ entries in the profile $u_{-2}$ that are outside of the parallel domain: $|\{i \in N | i \neq 2,~u_i \notin \plDomain \}| = \ell-1$.
$\HH_{\ell-1}$ implies that $t_{2,a}(u_{-2}) = 0$, $t_{2,a'}(u_{-2}) = +\infty$ for all $a' \neq a$ thus $a$ is the unique agent-maximizing alternative for agent 2. However, $a$ cannot be agent-maximizing for agent $1$, since 
\begin{align*}
	& \maxwp_{1,a} - t_{1,a}(u_{-1}) - (\maxwp_{1,b} - t_{1,b}(u_{-1})) 
	 \leq   \maxwp_{1,a} - (\payment_a + \epsilon) - \maxwp_{1,b} + \payment_b \\
	 = & \payment_a + \epsilon/2 - (\payment_a + \epsilon) - (\payment_b + \epsilon/3) + \payment_b = - 5/6 \epsilon < 0.
\end{align*}

This contradicts DSIC, thus we conclude $t_{1,a}(u_{-1}) > \payment_a$ cannot be true. Similarly, if $t_{1,a}(u_{-1}) < \payment_a$, the price being standard requires that $\payment_a > t_{1,a}(u_{-1}) \geq 0$ and that there exists $b \in \alts$ s.t. $t_{1,b}(u_{-1}) \geq \payment_b$. Let $\epsilon \triangleq \payment_a - t_{1,a}(u_{-1}) > 0$, and let $u_1$ be a parallel type with willingness to pay 
\begin{align*}
	\maxwp_{1,a} &= \payment_a - \epsilon/2, \\
	\maxwp_{1,b} &= \payment_b + \epsilon / 3, \\
	\maxwp_{1,c} &= 0, ~\forall c \neq a,b.
\end{align*}
We can check that $b$ is the unique agent-maximizing alternative for agent $1$ under prices $\vec{\payment}$ thus $\HH_{\ell-1}$ implies that $t_{2,a}(u_{-2}) = \infty$, thus $a$ cannot be agent-maximizing for agent 2 thus cannot be selected in the economy $(u_1, u_2,\dots, u_n)$. However, $a$ is the unique agent-maximizing alternative for agent $1$, since $\maxwp_{1,a} - t_{1,a}(u_{-1}) = \epsilon/2$, whereas  $\maxwp_{1,b} - t_{1, b} (u_{-1}) \leq \maxwp_{1,b} - \payment_b =  \epsilon/3$ and $\maxwp_{1,c} - t_{1, c} (u_{-1}) \leq 0$ for all $c \neq a,b$. This is a contradiction, thus $t_{1,a}(u_{-1}) = \payment_a$ must hold. 

This completes the proof of this theorem.
\end{proof}

From $\GG_{n-1}$ and $\HH_{n-1}$, we know that when the dictator has a unique most preferred alternative under the fixed prices $\vec{a}$, the rest of the agents do not make any payment to the mechanism. However, when there are multiple most preferred alternatives that are tied for the dictator given the prices $\vec{\payment}$, $\HH_{n-1}$ does not specify what must happen to the payments from the rest of the agents. We may consider various tie-breaking mechanisms among the alternatives toward which the dictator is indifferent, for example another fixed price dictatorship, or some generalized weighted VCG mechanism between two alternatives that are tied (in which case the general non-quasi-linear utility domain  $\nqlSpace$ is parallel). These mechanism would still satisfy (P1)-(P5), and would charge the non-dictators some non-zero payments in the degenerate case when the dictator is indifferent.

\subsection{Relaxing IR and No Subsidy}  \label{appx:relax_ir_np}

Conditions (C1) and (C2) in Theorem~\ref{thm:dictatorship} require the utility domain to deviate very minimally from the parallel domain,  however, the negative result no longer holds if one of (P4) or (P5) is relaxed. 
(P4) IR and (P5) No subsidy require prices to be standard (Lemma~\ref{lem:aiprice_characterization}). With standard prices, the shapes of an agent's utility functions where the prices are negative, or where the utilities are below $\min_{a\in\alts}v_{i,a}$, are irrelevant to which alternative is agent-maximizing for this agent. The parallel domain only requires that the utility functions in the range that is relevant to be horizontal translations of each other.

As we have shown in the proof of Lemma~\ref{lem:price_characterization}, 
an affine maximizer as the choice rule together with DSIC determine the agent-independent prices that each agent faces up to a constant (when there are no ties). The requirement that prices being standard then fully pins down the agent-independent prices. Without (P4), (P5), a local violation of ``the relative willingness to pay between any two alternatives remains constant" (e.g. $u_2^\ast$ as in Figure~\ref{fig:exmp_nqltype_u2_ast}) can be made irrelevant by setting non-standard prices carefully and we still get an incentive compatible mechanism.

Here we provide a family of such mechanisms that violates only one of (P4) or (P5). This shows that the parallel domain as we defined in Section~\ref{sec:PD} is not the only maximal utility domain where mechanisms that satisfy (P1)-(P3) exist. 
We say a utility domain is parallel w.r.t. price $\payment^\ast$ if the relative willingness to pay remains the same in the range where (I) payments are at least $\payment^\ast$, and (II) the utilities are weakly above $\min_{a' \in \alts} \{ u_{i,a'}(\payment^\ast) \}$, as illustrated in Figure~\ref{fig:example_pd_z}. Denote $\maxwp_{i,a}(\payment^\ast) \triangleq u_{i,a}^{-1}(\min_{a' \in \alts} \{ u_{i,a'}(\payment^\ast)\})$.

\begin{figure}[t!]

\centering     

\begin{tikzpicture}[scale = 0.9][font = \normalsize]
\draw[->] (-0.7,0) -- (8.2,0) node[anchor=north] {$\payment$};

\draw[->] (-0.5,-0.4) -- (-0.5, 3.8) node[anchor=west] {$u_{i,a'}(\payment)$};


\draw  [-]  (-0.7, 3.7) to [out=-45, in=-200] (5, 0.8) to[out=-20, in=-190] (7.5, 0.2);
\draw  [dashed]   (-0.7, 2.35) to[out=-25, in=-205]  (0, 1.97) to[out=-25, in=-200] (3.2, 0.8) to[out=-20, in=-190] (5.5, -0.2);
\draw[dashdotted]  (-0.7, 1.6) parabola[bend at end] (2, -0.2);

\draw[dotted](0, 3.2) -- (0, -0.2) node[anchor = north] {$\payment^\ast$};

\draw[dotted](-0.5, 0.8) -- (3.2, 0.8) ;
\draw[dotted](5, 0.8) -- (5.5, 0.8) ;
\draw [dotted](3.2, 1.1) -- (3.2, -0.3);
\draw (3.2, -0.2) node[anchor=north] { { \normalsize $\maxwp_{i,b}(\payment^\ast)$ }};

\draw [dotted](5, 1.1) -- (5, -0.3);
\draw (5, -0.2) node[anchor=north] { { \normalsize $\maxwp_{i,a}(\payment^\ast)$ }};

\draw (-0.55, 3.5)node[anchor = east] {$v_{i,a}$};
\draw (-0.55, 2.1)node[anchor = east] {$v_{i,b}$};
\draw (-0.4, 1.4)node[anchor = east] {$v_{i, \worstalt_i}$};
\draw (-0.4, 0.6)node[anchor = east] {$u_{i, \worstalt_i}{\tiny (\payment^\ast)}$};

\draw[-] (6, 3) -- (6.5, 3) node[anchor=west] {$u_{i,\hspace{0.1em}a}\hspace{0.1em}(\payment)$};
\draw[dashed] (6, 2.4) -- (6.5, 2.4) node[anchor=west] {$u_{\hspace{0.1em}i,\hspace{0.1em}b\hspace{0.1em}}(\payment)$};
\draw[dashdotted] (6, 1.8) -- (6.5, 1.8) node[anchor=west] {$u_{i, \worstalt_i} (\payment)$};

\fill [pattern = north east lines, pattern color = black!20](0, 1.97) to[out=-25, in=-200] (3.2, 0.8) -- (5, 0.8) to[out=-25, in=-200](1.8, 1.97) -- (0, 1.97) ;	

\draw[dotted, thick,<->](0, 1.97) -- (1.8, 1.97) ;
\draw[dotted, thick,<->](1.8, 1.3) -- (3.6, 1.3) ;
\draw[dotted, thick,<->](3.2, 0.8) -- (5, 0.8) ;

\end{tikzpicture}
\caption{An example type in the parallel domain w.r.t. $\payment^\ast$. \label{fig:example_pd_z}}
\end{figure}

\begin{definition} A utility domain for an agent $\utilDomain_i \subset \nqlDomain$ is \emph{parallel with respect to $\payment^\ast \in \setR$} if $\forall u_i \in \utilDomain_i$, $\forall a, b \in \alts$ s.t. $u_{i,a}(\payment^\ast) \geq u_{i,b}(\payment^\ast)$, and $\forall \payment \in [\payment^\ast, ~\maxwp_{i,b}(\payment^\ast)]$, $u_{i,a} \left( \payment + (\maxwp_{i,a} (\payment^\ast)   - \maxwp_{i,b} (\payment^\ast) \right) = u_{i,b}(\payment).$
\if 0
\begin{align*}
	u_{i,a} \left( \payment + (\maxwp_{i,a} (\payment^\ast)   - \maxwp_{i,b} (\payment^\ast) \right) = u_{i,b}(\payment).
\end{align*}
\fi
\end{definition}

\begin{definition} The generalized weighted VCG mechanism with fixed payment $\payment^\ast$ parametrized by $\payment^\ast$, non-negative weights $\{ \affcoeff_i \}_{i \in N}$ and real constants $\{ \affcnst_a \}_{a \in \alts}$ collects type profile $\hat{u}$ from the agents, and computes the willingness to pay $ \{ \hat{\maxwp}_{i,a}(\payment^\ast) \}_{i \in N, a \in \alts}$ w.r.t. $\payment^\ast$.
\begin{itemize}
	\item Choice rule: $x(\hat{u})\hsq\hsq = a^\ast$ where $a^\ast \hsq \hsq \in \hsq  \hsq \arg\max_{a \in \alts} \left\lbrace \sum_{i \in N} \affcoeff_i \hat{\maxwp}_{i,a}(\payment^\ast)\hsq  \hsq + \hsq  \affcnst_a \right\rbrace$, breaking ties arbitrarily.
	\item Payment rule: $t_i(\hat{u}) = \payment^\ast$ for $i \in N$ s.t. $\affcoeff_i = 0$; for $i$ s.t. $\affcoeff_i \neq 0$:
	\begin{align*}
		t_i(\hat{u}) \hsq \hsq = \hsq \hsq \frac{1}{\affcoeff_i} \hsq \hsq \left( \sum_{j \neq i} \affcoeff_{j} \hat{\maxwp}_{j, a^\ast_{-i}}(\payment^\ast)  \hsq + \hsq \affcnst_{a^\ast_{-i}}  \hsq \hsq -  \hsq \hsq \sum_{j \neq i} \affcoeff_{j} \hat{\maxwp}_{j, a^\ast}(\payment^\ast) \hsq - \hsq \affcnst_{a^\ast} \right) \hsq \hsq \hsq + \hsq \payment^\ast, 
 	\end{align*}
 	where 
	$a^\ast_{-i} \in \arg\max_{a \in \alts}  \{ \sum_{j \neq i} \affcoeff_{j} \hat{\maxwp}_{j,a}(\payment^\ast) + \affcnst_a \}$.
\end{itemize}
\end{definition}

\begin{proposition} 
Assuming $\utilDomain_i$ is parallel w.r.t. $\payment^\ast$ for all $i\in N$, the generalized weighted VCG mechanism with fixed payment $\payment^\ast$ and at least one non-zero coefficient $\affcoeff_i$ satisfies (P1)-(P3). In addition, (P4) is satisfied if $\payment^\ast$ is non-positive, whereas (P5) is satisfied if $\payment^\ast$ is non-negative. 
\end{proposition}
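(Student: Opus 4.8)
The plan is to recognize this mechanism as the generalized weighted VCG mechanism of Definition~\ref{defn:generalized_weighted_VCG} and Theorem~\ref{thm:pos_result_PD}, but re-expressed in coordinates shifted by $\payment^\ast$: willingness to pay is measured from the reference payment $\payment^\ast$ rather than from $0$, and every agent-independent price is lifted by $\payment^\ast$. First I would record the agent-independent prices. For an agent $i$ with $\affcoeff_i > 0$ reporting so that $a$ is chosen, the price is $t_{i,a}(u_{-i}) = \frac{1}{\affcoeff_i}\left(\sum_{j\neq i}\affcoeff_j \maxwp_{j,a^\ast_{-i}}(\payment^\ast) + \affcnst_{a^\ast_{-i}} - \sum_{j\neq i}\affcoeff_j \maxwp_{j,a}(\payment^\ast) - \affcnst_a\right) + \payment^\ast$, and for $\affcoeff_i = 0$ the agent always pays $\payment^\ast$ with $a^\ast_{-i} = a^\ast$. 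Because $a^\ast_{-i}$ maximizes $\sum_{j\neq i}\affcoeff_j \maxwp_{j,a}(\payment^\ast) + \affcnst_a$, the bracketed term is non-negative and vanishes at $a = a^\ast_{-i}$; hence every agent faces prices that are \emph{standard with respect to} $\payment^\ast$, i.e. $t_{i,a}(u_{-i}) \geq \payment^\ast$ for all $a$, with equality at $a^\ast_{-i}$.

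The key step is a $\payment^\ast$-shifted analogue of Lemma~\ref{lem:maxwp_as_values}: for a type parallel w.r.t. $\payment^\ast$ facing prices that are standard w.r.t. $\payment^\ast$, the agent-maximizing alternative is exactly the maximizer of $\maxwp_{i,a}(\payment^\ast) - t_{i,a}$. This follows from Lemma~\ref{lem:maxwp_as_values} after the substitution $\payment \mapsto \payment - \payment^\ast$, since the definition of parallel w.r.t. $\payment^\ast$ is precisely the defining identity of the parallel domain written in the shifted range $[\payment^\ast,\ \maxwp_{i,b}(\payment^\ast)]$. Granting this, (P1) DSIC reduces to verifying that the selected $a^\ast$ maximizes $\maxwp_{i,a}(\payment^\ast) - t_{i,a}(u_{-i})$ for every agent; the computation is identical to that in the proof of Theorem~\ref{thm:pos_result_PD}, as the two $\payment^\ast$ terms cancel and the difference equals $\frac{1}{\affcoeff_i}\left(\sum_{j\in N}\affcoeff_j \maxwp_{j,a^\ast}(\payment^\ast) + \affcnst_{a^\ast} - \sum_{j\in N}\affcoeff_j \maxwp_{j,a}(\payment^\ast) - \affcnst_a\right) \geq 0$, with the case $\affcoeff_i = 0$ immediate since only $a^\ast$ is reachable at a finite price. (P2) is trivial, and (P3) follows exactly as for the original mechanism under the same richness condition noted after Theorem~\ref{thm:pos_result_PD} (some $\affcoeff_i > 0$, together with enough flexibility of willingness to pay w.r.t. $\payment^\ast$ to make any alternative the affine maximizer).

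Finally I would read off (P4) and (P5) from the sign of $\payment^\ast$, using that the minimum agent-independent price is exactly $\payment^\ast$. For (P5), every realized payment is one of the $t_{i,a}(u_{-i}) \geq \payment^\ast$, so when $\payment^\ast \geq 0$ no agent receives a positive transfer. For (P4), the agent-maximizing alternative yields utility at least $u_{i,a^\ast_{-i}}(\payment^\ast)$ because $a^\ast_{-i}$ is available at price $\payment^\ast$; when $\payment^\ast \leq 0$, monotonicity (S1) gives $u_{i,a^\ast_{-i}}(\payment^\ast) \geq u_{i,a^\ast_{-i}}(0) = v_{i,a^\ast_{-i}} \geq \min_{a} v_{i,a}$, which is the IR bound. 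The main obstacle is establishing the shifted Lemma~\ref{lem:maxwp_as_values} cleanly: one must check that the parallel-w.r.t.-$\payment^\ast$ identity is strong enough to transfer the equivalence between the utility ordering and the ordering of $\maxwp_{i,a}(\payment^\ast) - t_{i,a}$ throughout the relevant price range, and that alternatives priced above their $\payment^\ast$-willingness to pay can be discarded exactly as in the original argument.
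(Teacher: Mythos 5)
Your proposal is correct and follows essentially the same route as the paper's proof: write down the agent-independent prices, observe they are standard shifted by $\payment^\ast$ (minimum price exactly $\payment^\ast$ at $a^\ast_{-i}$), invoke the $\payment^\ast$-shifted analogue of Lemma~\ref{lem:maxwp_as_values} to reduce agent-maximization to maximizing $\maxwp_{i,a}(\payment^\ast) - t_{i,a}(u_{-i})$, repeat the cancellation computation from Theorem~\ref{thm:pos_result_PD}, and read off (P4)/(P5) from the sign of $\payment^\ast$. The change of variables $\payment \mapsto \payment - \payment^\ast$ that you flag as the main obstacle is exactly the step the paper also leaves implicit (its ``similar to'' remark), and it goes through cleanly since shifting utilities, willingness to pay, and prices by $\payment^\ast$ turns a type parallel w.r.t.\ $\payment^\ast$ facing $\payment^\ast$-standard prices into an ordinary parallel type facing standard prices.
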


\if 0
\hma{The proof might be omitted so save space since it's very similar to the proof of Theorem~\ref{thm:pos_result_PD}. We can add the following instead:

The proof of this proposition is very similar to the proof of Theorem~\ref{thm:pos_result_PD}, after observing that given the agent-independent prices prescribed by this mechanism, the agent-maximizing alternative for each agent $i$ is the maximizer of $\maxwp_{i,a}(\payment^\ast) - t_{i,a}(u_{-i})$.
}
\fi

\begin{proof}  
It is immediate that the mechanism is DSIC for agents $i \in N$ s.t. $\affcoeff_i = 0$. Fix an agent $i \in N$ s.t. $\affcoeff_i > 0$. Her agent-independent prices is given by:
 $
 t_{i,a}(\hat{u}_{-i}) =  1/{\affcoeff_i} ( \sum_{j \neq i} \affcoeff_{j} \hat{\maxwp}_{j, a^\ast_{-i}}(\payment^\ast) + \affcnst_{a^\ast_{-i}} - \sum_{j \neq i} \affcoeff_{j} \hat{\maxwp}_{j, a}(\payment^\ast) - \affcnst_{a}) + \payment^\ast,$ which implies $t_{i,a^\ast_{-i}}(\hat{u}_{-i}) = \payment^\ast$ and $t_{i,a}(\hat{u}_{-i}) \geq \payment^\ast$ for all $a$. Similar to Lemma~\ref{lem:aiprice_characterization}, we can show that the agent-maximizing alternative for agent $i$ given $\{t_{i,a}(\hat{u}_{-i})\}_{a \in \alts}$ is the maximizer of $\hat{\maxwp}_{i,a}(\payment^\ast) - t_{i,a}(\hat{u}_{-i})$. We can then examine that $a^\ast$ is a maximizer of $\hat{\maxwp}_{i,a}(\payment^\ast) - t_{i,a}(\hat{u}_{-i})$, and this implies that the choice rule is agent-maximizing and this completes the proof of DSIC.

\if 0
\hma{Omitted the derivation to save space.}

for all $a \in \alts$,
\begin{align*}
	& \maxwp_{i,a^\ast} (\payment^\ast) - t_{i,a^\ast}(u_{-i}) - (\maxwp_{i,a} (\payment^\ast) - t_{i,a}(u_{-i})) \\
	= & \maxwp_{i,a^\ast} (\payment^\ast) - \frac{1}{\affcoeff_i} \left( \sum_{j \neq i} \affcoeff_{j} {\maxwp}_{j, a^\ast_{-i}}(\payment^\ast) + \affcnst_{a^\ast_{-i}} - \sum_{j \neq i} \affcoeff_{j} {\maxwp}_{j, a^\ast}(\payment^\ast) - \affcnst_{a^\ast} \right) \\
	& -\maxwp_{i,a} (\payment^\ast) + \frac{1}{\affcoeff_i} \left( \sum_{j \neq i} \affcoeff_{j} {\maxwp}_{j, a^\ast_{-i}}(\payment^\ast) + \affcnst_{a^\ast_{-i}} - \sum_{j \neq i} \affcoeff_{j} {\maxwp}_{j, a}(\payment^\ast) - \affcnst_{a} \right)\\
	= & \frac{1}{\affcoeff_i} \left( \sum_{j\in N} \affcoeff_{j } {\maxwp}_{j, a^\ast}(\payment^\ast) + \affcnst_{a^\ast} - \sum_{j \in N} \affcoeff_{j} {\maxwp}_{j, a^\ast_{-i}}(\payment^\ast) - \affcnst_{a^\ast_{-i}} \right) \geq 0.
\end{align*}
\fi

When $\payment^\ast \leq 0$, the minimum agent-independent price for  agent $i$ among all alternatives is non-positive, thus the agent-maximizing alternative gives the agent a utility at least $\min_{a \in \alts} \hat{u}_{i,a}(\payment^\ast) \geq \min_{a \in \alts} \hat{u}_{i,a}(0)$ thus (P4) IR is satisfied.
When $\payment^\ast \geq 0$, all agent-independent prices are non-negative, thus the mechanism satisfies (P5) No subsidy.
\end{proof}


\subsubsection{Proof of Theorem~\ref{thm:two_slope_dictatorship}} \label{appx:two_slope_dictatorship}

\ThmTwoSlopeDic*

We first define \emph{strictly parallel domains} as sets of agent types where 
utility curves are horizontal shifts of each other \emph{everywhere}. 
Quasi-linear utility functions, for example, are strictly parallel.

\begin{definition}[Strictly Parallel Domain] 
A utility domain $\utilDomain_i \subset \nqlDomain$ is a \emph{strictly parallel domain} if for all $u_i \in \utilDomain_i$, 
\begin{align}
	u_{i,a} \left( \payment + (\maxwp_{i,a} - \maxwp_{i,b}) \right) = u_{i,b}(\payment), ~\forall \payment \in \setR,~\forall a, b \in \alts. \label{equ:strictly_parallel_defn}
\end{align}
\end{definition}

On any strictly parallel domain, $\maxwp_{i,a} - t_{i,a} \geq \maxwp_{i,b} - t_{i,b} \Leftrightarrow u_{i,a}(t_{i,a}) \geq u_{i,b}(t_{i,b})$ holds for any prices $t_{i,a}, t_{i,b} \in \setR$, therefore, $\arg\max_{a \in \alts}  \{ u_{i,a}(t_{i,a}) \} = \arg \max_{a \in \alts} \{ \maxwp_{i,a} - t_{i,a} \}$ holds for any prices, without the requirement that the prices are standard. This implies that W-Mon is a necessary condition for any mechanism that is deterministic and DSIC for strictly parallel utility domains, and that Roberts' theorem can be generalized without the additional assumptions (P4) and (P5).

\begin{lemma} \label{lem:wmon_ic_strictlyPL} With any strictly parallel utility domain $\utilSpace$, every social choice mechanism that is DSIC and deterministic must satisfy W-Mon in terms of willingness to pay.
\end{lemma}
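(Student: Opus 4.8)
The plan is to follow the proof of Lemma~\ref{lem:wmon_ic} essentially verbatim, with one crucial substitution: where that proof invoked Lemma~\ref{lem:maxwp_as_values} (which characterizes the agent-maximizing alternative only when the prices are \emph{standard}, and hence needed the assumptions (P4) and (P5)), I would instead invoke the strictly-parallel equivalence noted immediately before the lemma statement, namely that $\arg\max_{a\in\alts}\{u_{i,a}(t_{i,a})\} = \arg\max_{a\in\alts}\{\maxwp_{i,a}-t_{i,a}\}$ holds for \emph{arbitrary} prices $t_{i,a}\in\setR$. Because strict parallelism forces the utility curves to be horizontal translates of one another \emph{everywhere}, this equivalence no longer depends on the sign or magnitude of the prices, which is exactly what lets us drop IR and no subsidy from the hypotheses.

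First I would recall the characterization of deterministic DSIC mechanisms for utilities that strictly decrease in payment (stated earlier in the excerpt): any such mechanism is agent-independent and agent-maximizing. Thus, fixing $u_{-i}$, the agent-independent prices $\{t_{i,c}(u_{-i})\}_{c\in\alts}$ are well defined (with the convention $+\infty$ for alternatives that are never selected), and the chosen alternative always simultaneously maximizes every agent's utility at these prices. Then I would proceed exactly as in Lemma~\ref{lem:wmon_ic}: take $u_i,u_i'\in\utilDomain_i$ with $x(u_i,u_{-i})=a$ and $x(u_i',u_{-i})=b$. Agent-maximization for agent $i$ at the type $u_i$ gives $a\in\arg\max_{c}u_{i,c}(t_{i,c}(u_{-i}))$, and by the strictly-parallel equivalence this is the same as $a\in\arg\max_c\{\maxwp_{i,c}-t_{i,c}(u_{-i})\}$; in particular $\maxwp_{i,a}-t_{i,a}(u_{-i})\ge\maxwp_{i,b}-t_{i,b}(u_{-i})$. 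Symmetrically, at the type $u_i'$ the selected alternative $b$ yields $\maxwp_{i,b}'-t_{i,b}(u_{-i})\ge\maxwp_{i,a}'-t_{i,a}(u_{-i})$. Adding the two inequalities, the common price terms $t_{i,a}(u_{-i})$ and $t_{i,b}(u_{-i})$ cancel, leaving $\maxwp_{i,a}+\maxwp_{i,b}'\ge\maxwp_{i,b}+\maxwp_{i,a}'$, i.e.\ $\maxwp_{i,b}'-\maxwp_{i,b}\ge\maxwp_{i,a}'-\maxwp_{i,a}$, which is precisely W-Mon.

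The only real subtlety, and the single place where the argument could break, is the cancellation step: it is legitimate only if $t_{i,a}(u_{-i})$ and $t_{i,b}(u_{-i})$ are finite. Both are finite here, because $a$ is actually selected under $(u_i,u_{-i})$ and $b$ under $(u_i',u_{-i})$, so by agent-independence these two prices equal the corresponding realized payments rather than the $+\infty$ convention reserved for never-selected alternatives. No appeal to standardness, non-negativity, or any bound on the prices is required, which is exactly why strict parallelism suffices in place of the full (P1)--(P5) hypotheses used in Lemma~\ref{lem:wmon_ic}.
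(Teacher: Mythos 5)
Your proof is correct and takes essentially the same approach as the paper: the paper's own justification for this lemma is exactly the observation (stated immediately before it) that on strictly parallel domains $\arg\max_{a\in\alts}\{u_{i,a}(t_{i,a})\}=\arg\max_{a\in\alts}\{\maxwp_{i,a}-t_{i,a}\}$ holds for \emph{arbitrary} prices, after which the adding-inequalities argument from Lemma~\ref{lem:wmon_ic} goes through verbatim with that equivalence substituted for Lemma~\ref{lem:maxwp_as_values}. Your explicit check that the two relevant prices $t_{i,a}(u_{-i})$ and $t_{i,b}(u_{-i})$ are finite (because $a$ and $b$ are actually selected) is a sound treatment of a point the paper leaves implicit.
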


\begin{lemma}[Roberts' Theorem on Strictly Parallel Domains] \label{lem:gen_Roberts_SPD}
With three or more alternatives, and an unconstrained strictly parallel domain $\utilSpace$, for every social choice mechanism that satisfies (P1)-(P3), there exist non-negative weights $\affcoeff_1, \dots, \affcoeff_n$ (not all of them zero) and constants $\affcnst_1, \dots, \affcnst_m$ such that for all $u \in \utilSpace$, $x(u) \in \arg \max_{a \in \alts} \left\lbrace \sum_{i=1}^n \affcoeff_i \maxwp_{i,a} + \affcnst_a  \right\rbrace$.
\end{lemma}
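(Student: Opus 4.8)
The plan is to mirror the proof of Theorem~\ref{thm:roberts} (the parallel-domain version given in Appendix~\ref{appx:proof_roberts}) almost verbatim, with the single decisive substitution of Lemma~\ref{lem:wmon_ic_strictlyPL} in place of Lemma~\ref{lem:wmon_ic}. The reason this substitution is what lets us drop the hypotheses (P4) and (P5) is exactly the stronger equivalence noted just above the statement: on a strictly parallel domain the identity $\arg\max_{a\in\alts} u_{i,a}(t_{i,a}) = \arg\max_{a\in\alts}\{\maxwp_{i,a} - t_{i,a}\}$ holds for \emph{all} price vectors, not merely for standard ones. Since the standard characterization of deterministic DSIC mechanisms (agent-independence together with agent-maximization) follows from (P1) and (P2) alone for any utilities that strictly decrease in payment, combining it with this unrestricted equivalence yields weak monotonicity in terms of willingness to pay (Definition~\ref{defn:wmon}) from (P1)--(P2) only; this is precisely the content of Lemma~\ref{lem:wmon_ic_strictlyPL}, which I would invoke directly.

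With W-Mon in hand, I would first reproduce Lemma~\ref{lem:pad_ic}: its proof uses nothing beyond W-Mon and the ability to realize a type whose willingness-to-pay vector increases $\maxwp_{i,a}$ while holding the other coordinates fixed. On an unconstrained strictly parallel domain every willingness-to-pay profile (normalized so the smallest coordinate is zero) is realizable, so the hybrid-profile induction $u^{(0)}, u^{(1)}, \dots, u^{(n)}$ goes through unchanged and the choice rule satisfies positive association of differences (Definition~\ref{defn:pad}). I would then carry over the set-analysis machinery, defining for each ordered pair $(a,b)$ the set $\toposet(a,b) = \{\alpha \in \setR^n : \exists u \in \utilSpace,\ \maxwp_a - \maxwp_b = \alpha,\ x(u) = a\}$ and re-establishing Claims~\ref{clm:clm1},~\ref{clm:clm2} and~\ref{clm:clm3}. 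Each of these is proved only by constructing type profiles with prescribed \emph{differences} of willingness to pay and invoking PAD; since the differences $\maxwp_{i,a} - \maxwp_{i,b}$ range over all of $\setR$ on an unconstrained strictly parallel domain, the identical constructions (including the renormalization that keeps each agent's minimum willingness to pay at zero) remain valid.

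The final step is the separating-hyperplane argument of \citet{lavi2009two}: the interiors of the shifted sets $\toposet(a,b) - \gamma(a,b)\cdot\vec{1}$ (where $\gamma(a,b) = \inf\{q : q\cdot\vec{1} \in \toposet(a,b)\}$) coincide in a common convex set $C$ not containing the origin, and a supporting hyperplane supplies the non-negative weights $\affcoeff_1, \dots, \affcoeff_n$; setting $\affcnst_a = \sum_i \affcoeff_i \gamma(a_0, a)$ for a fixed reference alternative $a_0$ then yields $x(u) \in \arg\max_{a\in\alts}\{\sum_i \affcoeff_i \maxwp_{i,a} + \affcnst_a\}$ for all $u \in \utilSpace$. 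Because this portion depends only on the sets $\toposet(a,b)$ and not on the underlying utility functions, it transfers without modification. Ontoness (P3) enters exactly as before: it guarantees each $\toposet(a,b)$ is non-empty (so each $\gamma(a,b)$ is finite and well-defined) and forces the weights not to all vanish.

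The hard part is entirely concentrated in the replacement of Lemma~\ref{lem:wmon_ic} by Lemma~\ref{lem:wmon_ic_strictlyPL}, i.e.\ in observing that the ``horizontal-translation-everywhere'' property of a strictly parallel domain makes the willingness-to-pay ordering agree with the utility ordering at arbitrary prices, including negative ones or prices exceeding willingness to pay. This is exactly what removes the need for (P4) and (P5), which in the general parallel case were used only to force prices to be standard so that the weaker equivalence of Lemma~\ref{lem:maxwp_as_values} could apply. Every subsequent step is genuinely identical to the parallel case, since it manipulates only differences of willingness to pay and the combinatorial sets $\toposet(a,b)$; the only bookkeeping I would watch carefully is confirming that the unconstrained strictly parallel domain realizes all the willingness-to-pay vectors (under the forced zero-minimum normalization) appearing in the constructions for Claims~\ref{clm:clm1}--\ref{clm:clm3}.
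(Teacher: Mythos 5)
Your proposal is correct and follows exactly the paper's own route: the paper justifies Lemma~\ref{lem:gen_Roberts_SPD} precisely by observing that on a strictly parallel domain the equivalence $\arg\max_a u_{i,a}(t_{i,a}) = \arg\max_a\{\maxwp_{i,a}-t_{i,a}\}$ holds at \emph{arbitrary} prices, so W-Mon (Lemma~\ref{lem:wmon_ic_strictlyPL}) follows from (P1)--(P2) alone, after which the PAD argument, the sets $\toposet(a,b)$, and the separating-hyperplane machinery of Appendix~\ref{appx:proof_roberts} carry over verbatim. You correctly isolate the one substantive point (dropping the standard-price requirement, hence (P4)--(P5)) and the one bookkeeping check (realizability of zero-minimum willingness-to-pay vectors), which is all the paper itself relies on.
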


We now prove the impossibility result for the linear domain with two slopes.

\begin{proof}[Proof of Theorem~\ref{thm:two_slope_dictatorship}]

Let $\utilDomain_i^\alpha \subset \utilDomain_i$ be the set of all $u_i \in \utilDomain_i$ s.t. $u_{i,a}(\payment) = v_{i,a} - \alpha_i \payment$ holds for all $a \in \alts$, i.e. the set of ``$\alpha$ types". We know $\utilDomain_i^\alpha$ is a strictly parallel domain with unrestricted willingness to pay. Let $\utilSpace^\alpha \triangleq \prod_{i=1}^n \utilDomain_i^\alpha$, Lemma~\ref{lem:subdomain_onto} implies that fixing any mechanism $(x,t)$ on $\utilSpace$ that satisfies (P1)-(P3), the restriction of $(x,t)$ on $\utilSpace^\alpha$ must also satisfy (P1)-(P3). Lemma~\ref{lem:gen_Roberts_SPD} then implies that there exists non-negative coefficients $\{\affcoeff_i\}_{i\in N}$ and real constants $\{ \affcnst_a \}_{a \in \alts}$ s.t. $k_i \neq 0$ for some $i \in N$ and $x(u) \in \arg \max_{a \in \alts} \left\lbrace \sum_{i=1}^n \affcoeff_i \maxwp_{i,a} + \affcnst_a  \right\rbrace$ for all $u \in \utilSpace^\alpha$.
With the same arguments as in the proof of Lemma~\ref{lem:price_characterization}, we can show that $\forall i \in N$ s.t. $\affcoeff_i > 0$, $\forall a,b \in \alts$, the difference in the agent-independent prices satisfies:
\begin{align}
	t_{i,a}(u_{-i}) - t_{i,b}(u_{-i}) = \frac{1}{\affcoeff_i} \left( \sum_{j \neq i} \affcoeff_{j} \maxwp_{j,b} + \affcnst_{b} -  \sum_{j\neq i} \affcoeff_{j} \maxwp_{j,a} - \affcnst_{a} \right). \label{equ:price_diff_appx}
\end{align}

Assume without loss of generality that $\affcoeff_1 > 0$, we prove that agent 1 is the fixed-price dictator with the following steps. 

\begin{itemize}
	\item Step 1. $\affcoeff_i = 0$ for all $i \neq 1$. 
	\item Step 2. There exist fixed prices $ \vec{\payment} \in \setR^m$ s.t. $\forall u_{-1} \in \utilSpace_{-1}^\alpha$, $\forall a \in \alts$, $t_{1,a}(u_{-1}) = \payment_a$.
	\item Step 3. Agent 1 is the fixed-price dictator for any $u \in \utilSpace$.
\end{itemize}

The reason that Step~2 is not immediately implied by Step~1 is that without the assumptions of (P4) and (P5), so that prices are not necessarily standard, \eqref{equ:price_diff_appx} only pins down the agent-independent prices up to a constant, and we need to show by induction that the prices $t_{1,a}(u_{-1})$ must be fixed for all $u_{-1} \in \utilSpace_{-1}^\alpha$. We then prove in Step~3 by induction that $t_{1,a}(u_{-1}) = \payment_a$ must hold for all $u_{-1} \in \utilSpace_{-1}$, which shows that agent $1$ is the fixed price dictator.

\bigskip

\noindent{}\emph{Step 1.} 
We show by contradiction that $\affcoeff_2 = 0$ must hold. The same argument can be repeated for all $i \neq 1$. We first prove the following claim. 

\begin{claim} \label{clm:wtp_diff_fixed_price} Fix any $u_{-1,-2} \in \utilSpace_{-1,-2}^\alpha$, and assume $\affcoeff_1, \affcoeff_2 > 0$. $\forall a,b\in \alts$, and $\forall u_{2}, u_{2}' \in \utilDomain_2^\alpha$, $\maxwp_{2,a} - \maxwp_{2,b} = \maxwp_{2,a}' - \maxwp_{2,b}' \Rightarrow t_{1,a}(u_2, u_{-1,-2}) = t_{1,a}(u_2', u_{-1,-2}).$
\end{claim}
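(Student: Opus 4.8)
The plan is to reduce the statement to a single scalar claim and then kill a knife-edge with a non-parallel probing type. First I would record what the affine-maximizer structure already gives. By Lemma~\ref{lem:gen_Roberts_SPD} the restriction of $x$ to $\utilSpace^\alpha$ is an affine maximizer of willingness to pay, and the derivation behind \eqref{equ:price_diff_appx} shows that for $u_2,u_2'\in\utilDomain_2^\alpha$ the agent-$1$ price gap is $t_{1,a}(u_2,u_{-1,-2})-t_{1,b}(u_2,u_{-1,-2})=\frac{\affcoeff_2}{\affcoeff_1}(\maxwp_{2,b}-\maxwp_{2,a})+c(a,b,u_{-1,-2})$, i.e.\ it depends on $u_2$ only through $\maxwp_{2,a}-\maxwp_{2,b}$ (the remaining terms, coming from $\{j\ge 3\}$ and the $\affcnst$'s, are fixed once $u_{-1,-2}$ is fixed). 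Under the hypothesis $\maxwp_{2,a}-\maxwp_{2,b}=\maxwp_{2,a}'-\maxwp_{2,b}'$ these two gaps coincide, so $t_{1,a}(u_2,u_{-1,-2})-t_{1,a}(u_2',u_{-1,-2})=t_{1,b}(u_2,u_{-1,-2})-t_{1,b}(u_2',u_{-1,-2})=:\Delta$, and the whole claim reduces to showing that this common additive shift satisfies $\Delta=0$.

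Second, I would assume $\Delta\neq 0$ and build a mixed-slope type $u_1^\star\in\utilDomain_1$, which is exactly where the non-parallelism of the two-slopes domain is used. I assign distinct slopes $s_a\neq s_b$ (one $\alpha_1$, one $\beta_1$) to $a$ and $b$, assign very low values to every other alternative so that they are dominated for agent~$1$ at every finite price, and leave $v_{1,a},v_{1,b}$ free. For such a type the comparison between $a$ and $b$ is \emph{not} a function of the price difference alone: shifting both prices by $\Delta$ changes the gap $u^\star_{1,a}(\cdot)-u^\star_{1,b}(\cdot)$ by $(s_a-s_b)\Delta\neq 0$. Hence I can tune $v_{1,a},v_{1,b}$ (and choose which slope sits on which alternative, to fix the sign) so that agent~$1$ strictly prefers $a$ at the prices $\{t_{1,\cdot}(u_2,u_{-1,-2})\}$ and strictly prefers $b$ at the prices $\{t_{1,\cdot}(u_2',u_{-1,-2})\}$. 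By agent-maximization the selected alternative must then be $a$ in the profile $(u_1^\star,u_2,u_{-1,-2})$ and $b$ in $(u_1^\star,u_2',u_{-1,-2})$.

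Third, I would read off a constraint on agent~$2$ from DSIC. Since agent~$2$'s price schedule $t_{2,\cdot}(u_1^\star,u_{-1,-2})$ is agent-independent, it is identical whether agent~$2$'s true type is $u_2$ or $u_2'$; applying incentive compatibility to the two reports $u_2\leftrightarrow u_2'$ (whose outcomes are $a$ and $b$) and using Lemma~\ref{lem:maxwp_as_values} gives $\maxwp_{2,a}-\maxwp_{2,b}\ge t_{2,a}(u_1^\star,\cdot)-t_{2,b}(u_1^\star,\cdot)$ and $\maxwp_{2,b}'-\maxwp_{2,a}'\ge t_{2,b}(u_1^\star,\cdot)-t_{2,a}(u_1^\star,\cdot)$. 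With $\maxwp_{2,a}-\maxwp_{2,b}=\maxwp_{2,a}'-\maxwp_{2,b}'=:D$ these two inequalities sandwich the price gap and force $t_{2,a}(u_1^\star,u_{-1,-2})-t_{2,b}(u_1^\star,u_{-1,-2})=D$, i.e.\ agent~$2$ is \emph{exactly indifferent} between $a$ and $b$.

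The hard part is converting this knife-edge into a genuine contradiction, which is the main obstacle. Because $u_2$ and $u_2'$ share the same $a$--$b$ gap $D$ and agent~$2$'s schedule does not depend on her own report, a single probe can only yield the equality $D=t_{2,a}(u_1^\star,\cdot)-t_{2,b}(u_1^\star,\cdot)$, never a strict inequality. The plan is to break it with a second mixed probe $u_1^{\star\star}$ that preserves the flip (an open condition, so a whole continuum of probes works) but changes agent~$1$'s willingness-to-pay gap $\maxwp_{1,a}-\maxwp_{1,b}$ — achievable by shifting $v_{1,a},v_{1,b}$ by a common constant, which leaves the $a$-vs-$b$ ranking untouched yet moves $\maxwp_{1,a}-\maxwp_{1,b}$ by $K(1/s_a-1/s_b)\neq 0$. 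I would then argue, via W-Mon for agent~$2$ over her unrestricted $\alpha$-types (Lemma~\ref{lem:wmon_ic_strictlyPL}) together with agent-maximization, that the threshold gap at which agent~$2$'s selected alternative switches between $a$ and $b$ — which is precisely $t_{2,a}(u_1)-t_{2,b}(u_1)$ — responds monotonically (hence strictly) to agent~$1$'s willingness-to-pay gap. Thus $u_1^\star$ and $u_1^{\star\star}$ would force two different values of this quantity, while Step~3 forces both to equal the single number $D$, a contradiction. Therefore $\Delta=0$, which together with the equal price gaps from Step~1 yields $t_{1,a}(u_2,u_{-1,-2})=t_{1,a}(u_2',u_{-1,-2})$. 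The delicate point throughout is that agent~$2$'s prices against a \emph{mixed} agent-$1$ type are not covered by the clean $\alpha$-domain formula \eqref{equ:price_diff_appx}, so the monotonicity in the last step must be extracted from agent-maximization consistency rather than from a closed-form price expression.
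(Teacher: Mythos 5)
Your Steps 1--3 reproduce the paper's own argument essentially verbatim: the reduction to a single scalar discrepancy via the price-gap identity \eqref{equ:price_diff_appx} (which is legitimate there, since the profiles $(u_2,u_{-1,-2})$ and $(u_2',u_{-1,-2})$ lie in $\utilSpace_{-1}^\alpha$), the mixed-slope probe $u_1^\star$ that strictly prefers $a$ at the prices induced by $u_2$ and strictly prefers $b$ at the prices induced by $u_2'$, and the resulting knife-edge equality $t_{2,a}(u_1^\star,u_{-1,-2})-t_{2,b}(u_1^\star,u_{-1,-2})=\maxwp_{2,a}-\maxwp_{2,b}$, which is exactly the paper's \eqref{equ:equ_constraint}. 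The divergence, and the genuine gap, is in your final step. The paper closes the argument by perturbing the probe ($v_{1,b}\mapsto v_{1,b}-\epsilon$, so the flip survives but $\maxwp_{1,a}-\maxwp_{1,b}$ strictly increases) and then applying \eqref{equ:price_diff_appx} \emph{to agent 2's prices in the two probe economies}, concluding that the perturbation strictly pushes $t_{2,a}-t_{2,b}$ below the knife-edge value; since $b$ is still selected against $u_2'$, agent-maximization for agent 2 fails and DSIC is contradicted.

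You decline to use \eqref{equ:price_diff_appx} at that point (correctly noting that the economies in question contain agent 1's non-parallel type, outside the $\alpha$-domain on which the formula was derived), and you propose instead to prove that agent 2's threshold $t_{2,a}(u_1)-t_{2,b}(u_1)$ responds \emph{strictly} monotonically to agent 1's willingness-to-pay gap, ``via W-Mon for agent 2 over her unrestricted $\alpha$-types together with agent-maximization.'' That claim is the entire content of the proof, and the route you sketch cannot deliver it. Lemma~\ref{lem:wmon_ic_strictlyPL} constrains only agent 2's own deviations at a \emph{fixed} $u_{-2}$; it says nothing about how the schedule $t_{2,\cdot}(u_1,u_{-1,-2})$ varies as agent 1's type varies, which is a cross-profile comparison. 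Worse, your own Step 3 already defeats the plan: applied verbatim to any second probe $u_1^{\star\star}$ that preserves the flip, it forces $t_{2,a}(u_1^{\star\star},\cdot)-t_{2,b}(u_1^{\star\star},\cdot)$ to equal the \emph{same} number $D$, so within the class of flip-preserving probes the threshold is provably constant, and no pair of such probes can ever exhibit the ``two different values'' your contradiction requires. The strict responsiveness of agent 2's price gap to agent 1's WTP gap across mixed agent-1 types is precisely the cross-profile price structure the paper imports by extending \eqref{equ:price_diff_appx} to those economies; without either that extension or a genuinely new argument supplying it, your proof does not close, and ``monotonically (hence strictly)'' is an unsupported leap rather than a proof step.
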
 

\begin{proof}

Let $u_2, u_2' \in \utilDomain_2^\alpha$ be two types of agent $2$ s.t. $\maxwp_{2,a} - \maxwp_{2,b}  = \maxwp_{2,a}' - \maxwp_{2,b}'$, and assume for contradiction that $ t_{1,a}(u_2, u_{-1,-2}) \neq t_{1,a}(u_2', u_{-1,-2})$. 
Denote $u_{-1} \triangleq (u_2, u_{-1,-2})$ and $u_{-1}' \triangleq (u_2', u_{-1,-2})$. $t_{1,a}(u_{-1}) - t_{1,b}(u_{-1}) = t_{1,a}(u_{-1}')- t_{1,b}(u_{-1}')$ follows from \eqref{equ:price_diff_appx}. 
Assume w.l.o.g. that $t_{1,a}(u_{-1}) < t_{1,a}(u_{-1}')$, and that $\alpha_1 > \beta_1$. 
We can construct $u_1 \in \utilDomain_1$ as illustrated in Figure~\ref{fig:TwoSlopeDictator_Step1} such that $u_{1,a}(\payment) = v_{1,a} - \alpha_1 \payment$, $u_{1,b}(\payment) = v_{1,b} - \beta_1 \payment$, and that
\begin{align}
		u_{1,c}(t_{1,c}(u_{-1})) & < u_{1,b}(t_{1,b}(u_{-1})) < u_{1,a}(t_{1,a}(u_{-1})), ~\forall c \neq a,b, \label{equ:utility_inequ} \\
		u_{1,c}(t_{1,c}(u_{-1}')) & < u_{1,a}(t_{1,a}(u_{-1}'))  < u_{1,b}(t_{1,b}(u_{-1}')), ~\forall c \neq a,b. \label{equ:utility_inequ_prime}
\end{align}

\begin{figure}[t!]
\vspace{-0.0em}
\centering  

\begin{tikzpicture}[scale = 0.95][font = \normalsize]
\draw[ ->] (-0.1,0) -- (8,0) node[anchor=north] {$\payment$};

\draw[ ->] (0,-0.2) -- (0, 3.7) node[anchor=west] {$u_{1,a'}(\payment)$};

\draw[line width=0.2mm, -] (-0.2, 2.2) -- (2.2, -0.2);

\draw  [line width=0.2mm, dashed]  (-0.2, 3.2) -- (7, -0.2); 

\draw (0, 3.1)node[anchor = east] {{$v_{1,b}$}};
\draw (0, 2)node[anchor = east] {{$v_{1,a}$}};

\draw[line width=0.2mm, -] (6.2, 3.5) -- (6.85, 3.5) node[anchor=west] {$u_{1,a}(\payment)$};
\draw[line width=0.2mm, dashed] (6.2, 2.75) -- (6.85, 2.75) node[anchor=west] {$u_{1,b}(\payment)$};

\draw[dotted](0, 1.6) -- (3.8, 1.6) ;
\draw[dotted](0, 0.4) -- (6, 0.4) ;

\draw[dotted](0.35, 2) -- (0.35, -0.2);
\draw (0.1, -0.1) node[anchor = north]{ { $t_{1,a}(u_{-1})$ }} ;

\draw[dotted](1.6, 0.7) -- (1.6, -0.2);
\draw (1.85, -0.1) node[anchor = north]{ { $t_{1,a}(u_{-1}')$ }} ;

\draw[dotted](3.85, 1.7) -- (3.85, -0.2);
\draw (3.65, -0.1) node[anchor = north]{ { $t_{1,b}(u_{-1})$ }} ;

\draw[dotted](5.1, 0.9) -- (5.1, -0.2);
\draw (5.4, -0.1) node[anchor = north]{ { $t_{1,b}(u_{-1}')$ }} ;

\end{tikzpicture}
\caption{Illustration of $u_1 \in \utilDomain_1 \backslash \utilDomain_1^\alpha$, for the proof of Claim~\ref{clm:wtp_diff_fixed_price}.
\label{fig:TwoSlopeDictator_Step1}}
\end{figure}

Such $u_1$ exists since $\alpha_1 > \beta_1$, 	$t_{1,a}(u_{-1}) - t_{1,b}(u_{-1}) = t_{1,a}(u_{-1}') - t_{1,b}(u_{-1}')$, and the values are unrestricted. Agent-maximizing for agent $1$ then implies $x(u_1, u_{-1}) = a$ and $x(u_1, u_{-1}') = b$. Note that the agent-independent prices facing agent 2, $\{t_{2,a'}(u_{-2}) \}_{a' \in \alts}$, remain the same in the two economies. Agent-maximization for agent 2 therefore implies
\begin{align*}
	\maxwp_{2,a} \hsq - t_{2,a}(u_{-2}) \hsq \geq \maxwp_{2,b} \hsq - t_{2,b}(u_{-2}) ,\\ 
	\maxwp_{2,a}' \hsq - t_{2,a}(u_{-2}) \hsq \leq \maxwp_{2,b}' \hsq  - t_{2,b}(u_{-2}). 
\end{align*}
Given the assumption $\maxwp_{2,a} - \maxwp_{2,b} = \maxwp_{2,a}' - \maxwp_{2,b}'$, we must have
\begin{align}
	\maxwp_{2,a} -  \maxwp_{2,b} = \maxwp_{2,a}' -  \maxwp_{2,b}' = t_{2,a}(u_{-2}) - t_{2,b}(u_{-2}). \label{equ:equ_constraint}
\end{align}

\if 0
We know that in the economy $(u_1, u_{-1})$, the agent-maximizing alternative for agent 1 is $a$, whereas in the economy $(u_1, u_{-1}')$, the agent maximizing alternative for agent 1 is $b$.

Given such prices, the preferences over alternatives $a$ and $b$ according to $u_2$ and $u_2'$ should remain the same, since $a$ is preferred according to $u_2$ iff $\maxwp_{2,a} - t_{2,a}(u_{-2}) \geq \maxwp_{2,b} - t_{2,b}(u_{-2})  \Leftrightarrow \maxwp_{2,a}  - \maxwp_{2,b} \geq t_{2,a}(u_{-2}) - t_{2,b}(u_{-2}) \Leftrightarrow  \maxwp_{2,a}' - \maxwp_{2,b}' \geq t_{2,a}(u_{-2}) - t_{2,b}(u_{-2}) \Leftrightarrow \maxwp_{2,a}' - t_{2,a}(u_{-2}) \geq \maxwp_{2,b}' - t_{2,b}(u_{-2}) $, i.e. $a$ is preferred according to $u_2'$ as well. This implies that $u_2$ and $u_2'$ must be indifferent toward $a$ and $b$ in economies $(u_1, u_2, u_{-1, -2})$ and  $(u_1, u_2', u_{-1, -2})$, so that $a$ is selected in the former and $b$ is selected in the later without violating agent-maximization. Since $u_2 \in \utilDomain_2^\alpha$ is strictly parallel, thus we have 
\begin{align}
	\maxwp_{2,a} \hsq - \hsq t_{2,a}(u_{-2})\hsq  = \maxwp_{2,b} \hsq -\hsq  t_{2,b}(u_{-2}) \Leftrightarrow t_{2,a}(u_{-2}) \hsq -\hsq  t_{2,b}(u_{-2}) \hsq   = \maxwp_{2,a} \hsq -\hsq  \maxwp_{2,b}. \label{equ:equ_constraint}
\end{align}
\fi

Let $\epsilon$ be some small positive number such that $ 0 < \epsilon < u_{1,b}(t_{1,b}(u'_{-1})) - u_{1,a}(t_{1,a}(u'_{-1}))$. Consider  type $u_1' \in \utilDomain_1$ such that for all $z \in \setR$, $u'_{1,a}(\payment) = u_{1,a}(\payment) $, $u'_{1,b}(\payment) = u_{1,b}(\payment) - \epsilon$ and $u'_{1,c}(\payment) = u_{1,c}(\payment)$. In words, $u_1'$ is identical to $u_1$, except that $v_{1,b}' = v_{1,b} - \epsilon$, where $\epsilon$ is small enough that both \eqref{equ:utility_inequ} and \eqref{equ:utility_inequ_prime} still hold if $u_1$ is replaced with $u_1'$. Thus we have $x((u_1', u_2, u_{-1,-2})) = a$ and $x((u_1', u_2', u_{-1,-2})) = b$ from agent-maximization for agent $1$ given $u_1'$.

Replacing $u_1$ with $u_1'$ results in a decrease in the willingness to pay for alternative $b$, thus $\maxwp_{1,a}' - \maxwp_{1,b}' > \maxwp_{1,a} - \maxwp_{1,b}$. Given \eqref{equ:price_diff_appx} and the assumption that $\affcoeff_1, \affcoeff_2 > 0$, we know $ t_{2,a}((u_1', u_{-1,-2})) -  t_{2,b}((u_1', u_{-1,-2})) < t_{2,a}(u_{-2}) -  t_{2,b}(u_{-2})$. Combined with \eqref{equ:equ_constraint}, we know: 
\begin{align*}
	& t_{2,a}((u_1', u_{-1,-2}))  -  t_{2,b}((u_1', u_{-1,-2}))  < \maxwp_{2,a}' - \maxwp_{2,b}'  \\
	\Rightarrow & \maxwp_{2,a}' - t_{2,a}((u_1', u_{-1,-2})) >   \maxwp_{2,b}' - t_{2,b}((u_1', u_{-1,-2})),
\end{align*}
meaning that the alternative $x((u_1', u_2', u_{-1,-2})) = b$ is not agent-maximizing for agent 2 in economy $(u_1', u_2', u_{-1,-2})$. 
\if 0
 and this implies $ t_{2,a}(u_{-2}) -  t_{2,b}(u_{-2}) \neq  t_{2,a}((u_1', u_{-1,-2})) -  t_{2,b}((u_1', u_{-1,-2}))$, given \eqref{equ:price_diff_appx} and the assumption that $\affcoeff_2 > 0$. Then we know from \eqref{equ:equ_constraint} that
\begin{align*}
	t_{2,a}((u_1', u_{-1,-2})) \hsq -\hsq  t_{2,b}((u_1', u_{-1,-2})) \neq  \maxwp_{2,a} - \maxwp_{2,b} = \maxwp_{2,a}' - \maxwp_{2,b}'.
\end{align*}
This implies that in economies $(u_1', u_2, u_{-1, -2})$ and $(u_1', u_2', u_{-1,-2})$, agent $2$ is not indifferent toward $a$ and $b$, according to $u_2$ or $u_2'$, thus in at least one of the economies, the chosen alternative is not agent-maximizing for agent $2$. 
\fi
This contradicts DSIC, thus we conclude that $t_{1,a}(u_2, u_{-1,-2}) = t_{1,a}(u_2', u_{-1,-2})$ must hold. 
\end{proof}

Assume $\affcoeff_1, \affcoeff_2 > 0$, fix any $u_{-1,-2} \in \utilSpace_{-1,-2}^\alpha$ and some $u_2^\ast \in \utilDomain^\alpha_2$, we know $t_{1,a}(u_2^\ast, u_{-1,-2}) < \infty$, since $\affcoeff_1 > 0$ and $\utilDomain_1^\alpha$ is unrestricted thus there exists $u_1 \in \utilDomain_1^\alpha$ s.t. $x((u_1, u_2^\ast, u_{-1,-2})) = a$. Denote $\Delta_{a,b}^\ast \triangleq \maxwp_{2,a}^\ast - \maxwp_{2,b}^\ast$, $\Delta_{a,c}^\ast \triangleq \maxwp_{2,a}^\ast - \maxwp_{2,c}^\ast$ and $\payment_a \triangleq t_{1,a}(u_2^\ast, u_{-1,-2})$. 
We know from Claim~\ref{clm:wtp_diff_fixed_price} that $\forall u_2 \in \utilDomain_2^\alpha$ s.t. $\maxwp_{2,a} - \maxwp_{2,c} = \Delta_{a,c}^\ast$, $t_{1,a}(u_2, u_{-1,-2}) = \payment_a$. 
For any $\Delta_{a,b} \in \setR$, we can find $u_2' \in \utilDomain_2^\alpha$ s.t. $\maxwp_{2,a}' - \maxwp_{2,b}' = \Delta_{a,b}$ and $\maxwp_{2,a}' - \maxwp_{2,c}' = \Delta_{a,c}^\ast$, for which $t_{1,a}(u_2', u_{-1,-2}) = \payment_a$. Apply Claim~\ref{clm:wtp_diff_fixed_price} again, we know that for all $u_2 \in \utilSpace_2^\alpha$ s.t. $\maxwp_{2,a} - \maxwp_{2,b} = \Delta_{a,b}$, $t_{1,a}(u_2, u_{-1,-2}) = \payment_a$. Since this holds for all $\Delta_{a,b} \in \setR$, we conclude that $\forall u_{2} \in \utilDomain^\alpha_{2}$, $t_{1,a}(u_{2}, u_{-1,-2} ) = \payment_a$.
%
%
This implies that even for $u_2$, $u_2' \in \utilDomain_2^\alpha$ s.t. $\maxwp_{2,a} - \maxwp_{2,b} \neq \maxwp_{2,a}' - \maxwp_{2,b}' $, we still have $t_{1,a}(u_2, u_{-1,-2}) - t_{1,b}(u_2, u_{-1,-2}) = t_{1,a}(u_2', u_{-1,-2}) - t_{1,b}(u_2', u_{-1,-2} )$. This contradicts \eqref{equ:price_diff_appx}, thus $\affcoeff_2 = 0$ must hold.

\bigskip

\noindent{}\emph{Step~2.} Given \eqref{equ:price_diff_appx} and Step~1 that $\affcoeff_i = 0$ for all $i \neq 1$, we know that for all $u_{-1} \in \utilSpace_{-1}^\alpha$, for any pair of alternatives $a, b \in \alts$, $t_{1,a}(u_{-1})  - t_{1,b}(u_{-1}) = 1/\affcoeff_1(\affcnst_b - \affcnst_a)$ must hold. Denote this price difference as $\delta_{a,b} \triangleq  1/\affcoeff_1(\affcnst_b - \affcnst_a)$. 
We prove the following claim.

\begin{claim} \label{clm:diff_price_same_wp} For all $i \neq 1$, for all  $u_{-1,-i} \in \utilSpace_{-1,-i}^\alpha$, and for all $u_i, u_i' \in \utilDomain_i^\alpha$, if there exists $a^\ast \in \alts$ s.t.  $t_{1,a^\ast}((u_i', u_{-1,-i})) \neq t_{1,a^\ast}((u_i, u_{-1,-i}))$, then $\maxwp_{i,a} = \maxwp_{i,a}'$ for all $ a \in \alts$, i.e. $u_i$ and $u_i'$ have the same willingness to pay for all $a \in \alts$.
\end{claim}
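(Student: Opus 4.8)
The plan is to prove this contrapositive-flavored statement by contradiction, reusing the mixed-slope construction from the proof of Claim~\ref{clm:wtp_diff_fixed_price}. Suppose $t_{1,a^\ast}(u_i', u_{-1,-i}) \neq t_{1,a^\ast}(u_i, u_{-1,-i})$ for some $a^\ast$, and suppose toward a contradiction that $\maxwp_{i,b} \neq \maxwp_{i,b}'$ for some $b$. Since Step~1 gives $\affcoeff_1 > 0$ and $\affcoeff_j = 0$ for all $j \neq 1$, equation~\eqref{equ:price_diff_appx} shows that agent~$1$'s price differences $t_{1,a}(u_{-1}) - t_{1,b}(u_{-1}) = \frac{1}{\affcoeff_1}(\affcnst_b - \affcnst_a)$ are the same constant for every $u_{-1} \in \utilSpace_{-1}^\alpha$. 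Hence, writing $\eta := t_{1,a^\ast}(u_i', u_{-1,-i}) - t_{1,a^\ast}(u_i, u_{-1,-i})$, the two price vectors agent~$1$ faces differ by the same shift on every alternative, $t_{1,a}(u_i', u_{-1,-i}) = t_{1,a}(u_i, u_{-1,-i}) + \eta$ for all $a$, with $\eta \neq 0$; by swapping the roles of $u_i$ and $u_i'$ (the conclusion is symmetric in them) I may assume $\eta > 0$.

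Next I would pin down the pair of alternatives to flip. Both $u_i$ and $u_i'$ are normalized so that $\min_a \maxwp_{i,a} = \min_a \maxwp_{i,a}' = 0$, so the difference vector $d_a := \maxwp_{i,a} - \maxwp_{i,a}'$ cannot be a nonzero constant: a constant value $c$ would be $\le 0$ at a minimizer of $\maxwp_{i,\cdot}$ and $\ge 0$ at a minimizer of $\maxwp_{i,\cdot}'$, forcing $c = 0$ and hence $\maxwp_i = \maxwp_i'$. Since $\maxwp_i \neq \maxwp_i'$, the vector $d$ is nonconstant, so there exist $a,b$ with $d_a < d_b$, i.e. $\maxwp_{i,a} - \maxwp_{i,a}' < \maxwp_{i,b} - \maxwp_{i,b}'$.

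I would then construct a mixed-slope type $u_1 \in \utilDomain_1 \setminus \utilDomain_1^\alpha$ that is sensitive to the uniform shift $\eta$: assign $a$ the larger of the two slopes and $b$ the smaller one, and (using that values are unrestricted) choose $v_{1,a}, v_{1,b}$ so that $a$ is the unique best alternative for agent~$1$ at the lower prices $\{t_{1,a'}(u_i, u_{-1,-i})\}$ while $b$ is the unique best at the shifted prices $\{t_{1,a'}(u_i, u_{-1,-i}) + \eta\}$, with all remaining alternatives made uncompetitive by taking their values very negative. A short computation exactly as in Claim~\ref{clm:wtp_diff_fixed_price} shows such $u_1$ exists precisely because $\eta > 0$ and the slopes differ (the low-price advantage of $a$ must lie in the open interval between $0$ and the slope gap times $\eta$). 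Agent-maximization for agent~$1$ then forces $x(u_1, u_i, u_{-1,-i}) = a$ and $x(u_1, u_i', u_{-1,-i}) = b$.

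Finally I would extract the contradiction from agent~$i$. In both economies the profile of everyone other than agent~$i$ is the same, namely $(u_1, u_{-1,-i})$, so agent~$i$ faces identical agent-independent prices $t_{i,\cdot}(u_1,u_{-1,-i})$ in the two economies; since $\utilDomain_i^\alpha$ is strictly parallel, agent-maximization there amounts to maximizing $\maxwp_{i,\cdot} - t_{i,\cdot}$ (cf. Lemma~\ref{lem:maxwp_as_values}), giving $\maxwp_{i,a} - \maxwp_{i,b} \ge t_{i,a} - t_{i,b}$ from outcome $a$ under $u_i$, and $\maxwp_{i,b}' - \maxwp_{i,a}' \ge t_{i,b} - t_{i,a}$ from outcome $b$ under $u_i'$. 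Adding these two inequalities yields $\maxwp_{i,a} - \maxwp_{i,a}' \ge \maxwp_{i,b} - \maxwp_{i,b}'$, i.e. $d_a \ge d_b$, contradicting $d_a < d_b$. Therefore $\maxwp_{i,a} = \maxwp_{i,a}'$ for all $a$, proving the claim. I expect the main obstacle to be the middle step: verifying that the flip between a genuinely price-sensitive pair is always realizable inside the two-slope domain—this needs finiteness of the relevant agent-$1$ prices (which holds because the $\alpha$-restriction already makes every alternative selectable for fixed $u_{-1} \in \utilSpace_{-1}^\alpha$), the correct assignment of the two slopes, and keeping every other alternative dominated at both price vectors simultaneously.
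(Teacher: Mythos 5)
Your proof is correct, and it uses the same core machinery as the paper's: after Step~1 forces $\affcoeff_j=0$ for all $j\neq 1$, equation~\eqref{equ:price_diff_appx} makes agent~$1$'s two price vectors differ by a uniform shift $\eta$, and a mixed-slope type $u_1\in\utilDomain_1$ can flip its preferred alternative under such a shift; combining the resulting outcome flip with agent~$i$'s incentive constraints (your two inequalities are exactly W-Mon, which is valid here for arbitrary, non-standard prices because $\utilDomain_i^\alpha$ is strictly parallel, cf.\ Lemma~\ref{lem:wmon_ic_strictlyPL}) constrains the difference vector $d_a=\maxwp_{i,a}-\maxwp_{i,a}'$, and the normalization $\min_a \maxwp_{i,a}=\min_a \maxwp_{i,a}'=0$ finishes the job. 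Where you genuinely differ is in the logical decomposition. The paper proves the stronger intermediate fact that $d$ is \emph{constant}: fixing $a$ as the alternative whose price changed, it builds, for each $b$, \emph{two} mixed-slope types with the slope assignment swapped (Figures~\ref{fig:TwoSlopeDictator_Step1} and~\ref{fig:TwoSlopeDictator_Step2}), obtaining the outcome pairs \eqref{equ:choice_ab} and \eqref{equ:choice_ba} and hence both $d_a\geq d_b$ and $d_a\leq d_b$ from W-Mon; only then does normalization give $d\equiv 0$. You instead argue by contradiction: normalization shows a nonzero $d$ must be nonconstant, you choose a pair with $d_a<d_b$ adaptively, and a \emph{single} construction (larger slope on $a$, so that $a$ wins at the low prices and $b$ at the shifted prices, realizable precisely when the advantage of $a$ lies in the open interval $(0,(\alpha_1-\beta_1)\eta)$) plus one W-Mon application yields $d_a\geq d_b$, a contradiction. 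Your version is more economical --- one construction and one incentive inequality instead of two of each --- while the paper's delivers the pairwise equalities explicitly and reuses verbatim the template of Claim~\ref{clm:wtp_diff_fixed_price}. Your side conditions are also the right ones: finiteness of $t_{1,a}(u_{-1})$ for all $a$ and all $u_{-1}\in\utilSpace_{-1}^\alpha$ (which holds since $\affcoeff_1>0$ and $\utilDomain_1^\alpha$ has unrestricted willingness to pay), and dominating the remaining alternatives at both price vectors by taking their values sufficiently negative.
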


\begin{proof} Fix agent $i = 2$ and any $u_{-1,-i} \in \utilSpace_{-1,-i}^\alpha$, denote $u_{-1} \triangleq　(u_2, u_{-1,-2})$, $u_{-1}' = (u_2', u_{-1,-2})$, and assume that there exists $a\in \alts$ s.t. $t_{1,a}(u_{-1}') >  t_{1,a}(u_{-1})$. We first prove $\maxwp_{2,a} - \maxwp_{2,a}' = \maxwp_{2,b} - \maxwp_{2,b}'$ for all $b \in \alts$. This pins down the willingness to pay for all alternatives up to a constant. The fact that the smallest willingness to pay among all alternatives must be zero then implies $\maxwp_{2,b} = \maxwp_{2,b}'$ for all $b \in \alts$. Repeating the same arguments for all $i \neq 1$ completes the proof of this claim.

We now prove $\maxwp_{2,a} - \maxwp_{2,a}' = \maxwp_{2,b} - \maxwp_{2,b}'$. First, we know $t_{1,a}(u_{-1}') - t_{1,b}(u_{-1}') = t_{1,a}(u_{-1}) - t_{1,b}(u_{-1}) = \delta_{a,b}$ from Step~1. 
Same as the proof of Claim~\ref{clm:wtp_diff_fixed_price}, we may find $u_1 \in \utilDomain_1$ as shown in Figure~\ref{fig:TwoSlopeDictator_Step1}, where \eqref{equ:utility_inequ} and \eqref{equ:utility_inequ_prime} both hold. Agent-maximization for agent $1$ then implies
\begin{align}
	x((u_1, u_2, u_{-1,-2})) = a, ~x((u_1, u_2', u_{-1,-2})) = b. \label{equ:choice_ab}
\end{align}
Similarly, there exists $u_1' \in \utilDomain_1$ as illustrated in Figure~\ref{fig:TwoSlopeDictator_Step2} such that   
$u_{1,a}'(\payment) = v_{1,a}' - \beta_1 \payment$, $u_{1,b}'(\payment) = v_{1,b}' - \alpha_1 \payment$, and that 
\begin{align*}
		u_{1,c}'(t_{1,c}(u_{-1})) & < u_{1,a}'(t_{1,a}(u_{-1})) < u_{1,b}'(t_{1,b}(u_{-1})), ~\forall c \neq a,b, \\
		u_{1,c}'(t_{1,c}(u_{-1}')) & < u_{1,b}'(t_{1,b}(u_{-1}'))  < u_{1,a}'(t_{1,a}(u_{-1}')), ~\forall c \neq a,b. 
\end{align*}

\begin{figure}[t!]
\vspace{-0.0em}
\centering  

\begin{tikzpicture}[scale = 0.95][font = \normalsize]
\draw[ ->] (-0.1,0) -- (8,0) node[anchor=north] {$\payment$};

\draw[ ->] (0,-0.2) -- (0, 3) node[anchor=west] {$u'_{1,a'}(\payment)$};

\draw[line width=0.2mm, -] (-0.2, 1.6) -- (3.3, -0.2);

\draw  [line width=0.2mm, dashed]  (2.6, 2.76) -- (5.7, -0.2); 

\draw (0, 1.5)node[anchor = east] {{$v'_{1,a}$}};

\draw[line width=0.2mm, -] (6, 2.8) -- (6.65, 2.8) node[anchor=west] {$u'_{1,a}(\payment)$};
\draw[line width=0.2mm, dashed] (6, 2) -- (6.65, 2) node[anchor=west] {$u'_{1,b}(\payment)$};

\draw[dotted](0, 1.6) -- (3.8, 1.6) ;
\draw[dotted](0, 0.4) -- (6, 0.4) ;

\draw[dotted](0.35, 2) -- (0.35, -0.2);
\draw (0.1, -0.1) node[anchor = north]{ { $t_{1,a}(u_{-1})$ }} ;

\draw[dotted](1.6, 0.7) -- (1.6, -0.2);
\draw (1.85, -0.1) node[anchor = north]{ { $t_{1,a}(u_{-1}')$ }} ;

\draw[dotted](3.85, 1.7) -- (3.85, -0.2);
\draw (3.65, -0.1) node[anchor = north]{ { $t_{1,b}(u_{-1})$ }} ;

\draw[dotted](5.1, 0.9) -- (5.1, -0.2);
\draw (5.4, -0.1) node[anchor = north]{ { $t_{1,b}(u_{-1}')$ }} ;

\end{tikzpicture}
\caption{Illustration of $u_1' \in \utilDomain_1$, for the proof of Claim~\ref{clm:diff_price_same_wp}.
\label{fig:TwoSlopeDictator_Step2}}
\end{figure}
\noindent{}Agent-maximization for agent 1 with type $u_1'$ then requires
\begin{align}
	x((u_1', u_2, u_{-1,-2})) = b, ~x((u_1', u_2', u_{-1,-2})) = a. \label{equ:choice_ba}
\end{align}
Given \eqref{equ:choice_ab}, we know from W-Mon that $\maxwp_{2,a} - \maxwp_{2,a}' \geq \maxwp_{2,b} - \maxwp_{2,b}'$ must hold. Similarly, we get $\maxwp_{2,a} - \maxwp_{2,a}' \leq \maxwp_{2,b} - \maxwp_{2,b}'$ from \eqref{equ:choice_ba}, therefore $\maxwp_{2,a} - \maxwp_{2,a}' = \maxwp_{2,b} - \maxwp_{2,b}'$.
\end{proof}

Intuitively, Claim~\ref{clm:diff_price_same_wp} shows that for any $i \neq 1$, for any $u_{-1,-i} \in \utilSpace_{-1,-i}^\alpha$, and any all $u_i, u_i' \in \utilDomain_i^\alpha$, if there exists $a \in \alts$ s.t. $\maxwp_{i,a} \neq \maxwp_{i,a}'$, then we must have $t_{1,a'}((u_i', u_{-1,-i})) = t_{1,a'}((u_i, u_{-1,-i}))$ for all $a' \in \alts$.  
Fix any $u_{-1}^\ast = (u_2^\ast, \dots, u_n^\ast) \in \utilSpace_{-1}^\alpha$, and define $\payment_a \triangleq t_{1,a}(u_{-1}^\ast)$ for all $a \in \alts$. 
We prove by induction that $\forall u_{-1} \in \utilSpace_{-1}^\alpha$, $t_{1,a}(u_{-1}) = \payment_a$ must hold for all $a \in \alts$.
For any $\ell = 0, 1, \dots, n-1$, let the induction statements be 
\begin{enumerate}
	\item[$\GG_\ell^\ast$:] $\forall u_{-1} \in \utilSpace_{-1}^\alpha$ s.t. $|\{ i \in N | i \neq 1, u_i \neq u_i^\ast \}|\leq \ell$, $t_{1,a}(u_{-1}) = \payment_{a}, \forall a \in \alts$.
	\item [$\HH_\ell^\ast$:] $\forall i \neq 1$, for all $u_1 \in \utilDomain_1$ and $\forall u_{-1,-i} \in \utilSpace_{-1,-i}^\alpha$, if (I) $|\{ j \in N | j \neq 1, j \neq i, u_j \neq u_j^\ast \}|\leq \ell$, and (II) $\exists a^\ast$ s.t. $u_{1,a^\ast}(\payment_{a^\ast}) > u_{1,a}(\payment_a)$, $\forall a \neq a^\ast$, then $t_{i,a}(u_{-i}) = +\infty$ $\forall a \neq a^\ast$.
\end{enumerate}

$\GG_0^\ast$ trivially holds from agent-independence since when  $|\{ i \in N | i \neq 1, u_i \neq u_i^\ast \}|\leq 0$, $u_{-1} = u_{-1}^\ast$. We now prove $\GG_\ell^\ast \Rightarrow \HH_\ell^\ast$ and $\HH_{\ell-1}^\ast \Rightarrow \GG_{\ell}^\ast$, and this would complete the proof of the claim of this step, which is $\GG_{n-1}^\ast$. 

\smallskip

\noindent{}\emph{Step 2.1. $\GG_\ell^\ast \Rightarrow \HH_\ell^\ast$ for all $0 \leq \ell \leq n-1$.} 

Consider agent $i = 2$, and fix any $u_{-1, -2} \in \utilSpace_{-1, -2}^\alpha$ s.t. $|\{ j \in N | j \neq 1, j \neq 2, u_j \neq u_j^\ast \}|\leq \ell$. We know from $\GG_\ell^\ast$ that for all $a \in \alts$, $t_{1, a}(u_2^\ast, u_{-1, -2}) = \payment_a$.
Fix some $u_1 \in \utilDomain_1$ for which there exists $a^\ast \in \alts$ s.t. $u_{1,a^\ast}(\payment_{a^\ast}) > u_{1,a}(\payment_a)$ for all $a \neq a^\ast$. 
For any alternative $b \neq a^\ast$, and any $u_2 \in \utilDomain_2^\alpha$ s.t. $\maxwp_{2,b} > \maxwp_{2,b}^\ast$, we know from Claim~\ref{clm:diff_price_same_wp} that $t_{1,a}(u_2, u_{-1,-2}) =t_{1,a}(u_2^\ast, u_{-1,-2}) =  \payment_a$ for all $a \in \alts$, thus $a^\ast$ is also the unique agent-maximizing alternative for agent $1$ in the economy $(u_1, u_2, u_{-1,-2})$. Since $\maxwp_{2,b}$ can be arbitrarily large, we must have $t_{2,b}(u_1, u_{-1,-2}) = \infty$ so that $a^\ast$ is also always agent-maximizing for agent $2$. The same argument can be repeated for all agents $i\neq 1$. 

\smallskip

\noindent{}\emph{Step 2.2. $\GG_{\ell-1}^\ast \txtand \HH_{\ell-1}^\ast \Rightarrow \GG_{\ell}^\ast$, for all $1 \leq \ell \leq n-1$.} 

Consider any type profile $u_{-1} \in \utilSpace_{-1}^\alpha$ such that $|\{ i \in N | i \neq 1, u_i \neq u_i^\ast \}| = \ell$. We assume w.l.o.g. that $u_2 \neq u_2^\ast$. We know from $\GG_{\ell-1}^\ast$ that $t_{1,a}(u_{2}^\ast, u_{-1,-2}) = \payment_a$ holds for all $a \in \alts$, since in $(u_2^\ast, u_{-1,-2})$, $|\{ i \in \alts | i \neq 1, u_i \neq u_i^\ast \}| = \ell - 1$. Given that both $u_2$ and $u_2^\ast$ are in $\utilDomain_2^\alpha$, if there exists any alternative $a \in \alts$ s.t. $\maxwp_{2,a} \neq \maxwp_{2,a}^\ast$, we know from Claim~\ref{clm:diff_price_same_wp} that $t_{1,a}(u_2, u_{-1,-2}) = t_{1,a}(u_2^\ast, u_{-1,-2}) = \payment_a$ holds for all $a \in \alts$, which is what we are looking for. 
Therefore, the only remaining case is for $u_2 \in \utilDomain_2^\alpha$ s.t. $\maxwp_{2,a} = \maxwp_{2,a}^\ast$ for all $a \in \alts$.

Assume toward a contradiction, that there exists $u_2 \in \utilDomain_2^\alpha$ s.t. $\maxwp_{2,a} = \maxwp_{2,a}^\ast$ for all $a \in \alts$, for which there exists an alternative, say alternative $a$ s.t. $t_{1,a}(u_2, u_{-1,-2}) \neq t_{1,a}(u_2^\ast, u_{-1,-2}) = \payment_a$. Assume $t_{1,a}(u_2, u_{-1,-2}) > t_{1,a}(u_2^\ast, u_{-1,-2})$, and the other direction can be proved similarly. Fix any alternative $b \neq a$, we know from \eqref{equ:price_diff_appx} that $t_{1,a}(u_2, u_{-1,-2}) - t_{1,b}(u_2, u_{-1,-2}) = t_{1,a}(u_2^\ast, u_{-1,-2}) - t_{1,b}(u_2^\ast, u_{-1,-2})$ must hold. Similar to the proof of Claim~\ref{clm:wtp_diff_fixed_price}, we can find $u_1 \in \utilDomain_1$ s.t. $u_{1,a}(\payment) = v_{1,a} - \alpha_1 \payment$, $u_{1,b} = v_{1,b} - \beta_1 \payment$ for all $z \in \setR$, and that
\begin{align*}
		u_{1,c}(t_{1,c}(u_2^\ast, u_{-1,-2})) & < u_{1,b}(t_{1,b}(u_2^\ast, u_{-1,-2})) < u_{1,a}(t_{1,a}(u_2^\ast, u_{-1,-2})), ~\forall c \neq a,b,  \\
		u_{1,c}(t_{1,c}(u_2, u_{-1,-2})) & < u_{1,a}(t_{1,a}(u_2, u_{-1,-2}))  < u_{1,b}(t_{1,b}(u_2, u_{-1,-2})), ~\forall c \neq a,b.
\end{align*}

We know that in the economy $(u_1, u_2, u_{-1,-2})$, alternative $b$ is the unique agent-maximizing alternative for agent $1$. However, given since $t_{1,a'}(u_2^\ast, u_{-1,-2}) = \payment_{a'}$ holds for all $a' \in \alts$, we know that given the vector of prices $\vec{\payment}$, alternative $a$ is the unique agent-maximizing alternative for agent $1$. 
Therefore, $\HH_{\ell-1}^\ast$ together with $|\{ i \in \alts | i \neq 1, i\neq 2, u_i \neq u_i^\ast \}| = \ell - 1$ implies that $t_{2,b}(u_1, u_{-1,-2}) = \infty$. This shows that alternative $b$ cannot be agent-maximizing for agent 2 in the economy $(u_1, u_2, u_{-1,-2})$. This violates DSIC, completes the proof of $\GG_{\ell-1}^\ast \txtand \HH_{\ell-1}^\ast \Rightarrow \GG_{\ell}^\ast$, and also the proof of Step~2 of this theorem.

\if 0
Assume towards a contradiction, that there exists $u_{-1} \in \utilSpace_{-1}^\alpha$ and an alternative $a \in \alts$ s.t. $|\{ i \in N | i \neq 1, u_i \neq u_i^\ast \}| = \ell$ and that $t_{1,a}(u_{-1}) \neq \payment_a$. Assume w.l.o.g. that $u_2 \neq u_2^\ast$, and that $t_{1,a}(u_{-1}) > \payment_a$. We know from $\GG_{\ell-1}^\ast$ that $t_{1,a}(u_{2}^\ast, u_{-1,-2}) = \payment_a$ holds for all $a$, since $|i \in \alts | i \neq 1, u_1 \neq u_i^\ast| = \ell - 1$. Since both $u_2$ and $u_2^\ast$ are in $\utilDomain_2^\alpha$, $u_2 \neq u_2^\ast$ implies that $\maxwp_{2,a'} \neq \maxwp_{2,a'}^\ast$ for some $a' \in \alts$. Claim~\ref{clm:diff_price_same_wp} then implies that $t_{1,a}(u_2, u_{-1,-2}) = t_{1,a}(u_2, u_{-1,-2}) = \payment_a$ holds for all $a \in \alts$.

With the same argument as in the proof of Claim~\ref{clm:wtp_diff_fixed_price}, we can find $u_1 \in \utilDomain_1^\alpha$ s.t. $u_{1,a}(\payment)  = v_{1,a} - \alpha_1 \payment$, $u_{1,b}(\payment) = v_{1,b} - \beta_1 \payment$ and that
\begin{align*}
		& u_{1,c}(\payment_c)  < u_{1,b}(\payment_b) < u_{1,a}(\payment_a), &~\forall c \neq a,b, \\ 
		 & u_{1,c}(t_{1,c}(u_{-1})) < u_{1,a}(t_{1,a}(u_{-1}))  < u_{1,b}(t_{1,b}(u_{-1})), &~\forall c \neq a,b. 
\end{align*}

First, we know $|\{ j \in N | j \neq 1, j \neq 2, u_j \neq u_j^\ast \}| =  \ell - 1$. $\HH_{\ell-1}^\ast$ implies that $t_{2,b}(u_1, u_{-1,-2}) = \infty$, since $a$ is the unique agent-maximizing alternative for agent $1$ given type $u_1$ under prices $\vec{z}$. However, in the economy $(u_1, u_{-1})$, the agent-maximizing alternative for agent 1 is $b$, which cannot be agent-maximizing for agent $2$. This contradicts DSIC, thus $t_{1,a}(u_{-1}) = \payment_a$ holds for all $a \in \alts$. 
\fi

\bigskip

\noindent{}\emph{Step~3.} Step~2 implies that when $u_{-1} \in \utilSpace^\alpha_{-1}$, the outcome of the mechanism must be determined according to the fixed price dictator where agent $1$ is the dictator and the fixed prices are given by $\vec{\payment}$. 
The proof of the third step is very similar to the proof of Step~2 of Theorem~\ref{thm:dictatorship}: by induction on the number of agents whose type is outside of $\utilDomain_i^\alpha$, we can show that the outcome must be determined by the same fixed-price dictatorship for any $u \in \utilSpace$. This completes the proof of Theorem~\ref{thm:two_slope_dictatorship}.
\end{proof}


\if 0

\subsection{Relaxing IR and No Subsidy for full Parallel Domain} \label{appx:full_PD}

\hma{This is another result that we are not really excited about.}

\begin{theorem} With at least three alternatives and a utility domain $\utilSpace = \prod_{i=1}^n \utilDomain_i$ s.t. (I) $\plSpace \subset \utilSpace$ and (II) $\utilDomain_i \not\subset \plDomain$, for at least $n-1$ agents, the only social choice mechanism satisfying (P1)-(P3) must be a fixed price dictatorship.
\end{theorem}

We sketch the proof outline for two agents, for simplicity of notation.

\begin{enumerate}[1.]
	\item  A type $u_i$ is said to be strictly parallel if \eqref{equ:parallel_defn} holds for all $\payment \in \setR$. We know that with strictly parallel types, parts (i) and (ii) of Lemma~\ref{lem:maxwp_as_values} holds for any prices, which implies that W-Mon is a necessary condition for the mechanism to be deterministic and onto. 
	
	\item The parallel domain $\plDomain$ contains unrestricted parallel domains in which all types are strictly parallel. As an example, the quasi-linear domain $\qlDomain$ is a subset of $\plDomain$ which is strictly parallel and unrestricted. Given (P1)-(P3), W-Mon implies a generalized Roberts' theorem, that all mechanisms must have a choice rule that maximizes some linear affine function where the coefficients are not all zero. This pins down the agent-independent prices $t_{2,a}(u_1)$ up to a constant for all $u_1$ that are strictly parallel.
	
	\item For each $u_1$ that is strictly parallel, we call $\min_{a \in \alts} t_{2,a}(u_1)$ the offset in the prices according to $u_1$. We can prove that for any fixed alternative $a\in \alts$:
	\begin{enumerate}[(i)]
		\item for all strictly parallel $u_1$  s.t. $a \in \arg\min_{a'\in \alts} t_{2,a'}(u_1)$, the offset $\min_{a' \in \alts} t_{2,a'}(u_1) = t_{2,a}(u_i)$ must be the same. 
		\item This same offset must be zero. 
	\end{enumerate}		
	This shows that the prices prescribed by a strictly parallel type must be standard, thus on the strictly parallel sub-space, the mechanism  must be a generalized weighted VCG mechanism.
	
	\item The rest of the steps are similar to the proof of Theorem~\ref{thm:dictatorship} --- with types for at least $n-1$ agents outside of the parallel domain, we can prove that the number of non-zero coefficients $\affcoeff_i$ can be at most one, thus the mechanism is a fixed price dictatorship on the strictly parallel subspace, and we can prove that the mechanism must be a fixed price dictatorship on the entire domain by induction.
\end{enumerate}
\fi

\section{Impossibility Result Assuming Pareto Efficiency} \label{appx:result_with_PE}

We show in this section that if condition (P3) onto is replaced by a stronger Pareto efficiency (PE), we obtain a similar dictatorship result as in Theorem~\ref{thm:dictatorial_with_PE} with weaker assumptions on the utility domain: instead of requiring all but one agent having non-parallel types, it is sufficient that there exists a single agent with a single non-parallel type.
Moreover, these efficient mechanisms must be full dictatorships instead of fixed-price dictatorships--- the dictator chooses her favorite alternative, and efficiency requires that she does not make any payment to the mechanism.

\begin{definition}[Pareto Efficiency] \label{defn:pe} A social choice mechanism 
is \emph{Pareto efficient} if for any type profile, 
it is impossible to select another alternative and adjust the payments in a way that (i) all agents are weakly better off, (ii) at least one agent is strictly better off, and (iii) the total payment to the mechanism is not reduced.
\end{definition}

\begin{theorem} \label{thm:dictatorial_with_PE} With at least three alternatives and a utility domain $\utilSpace = \prod_{i=1}^n \utilDomain_i$ such that  
\begin{enumerate}
	\item[(C1)] for each $i\in N$, $\utilDomain_i$ contains an unrestricted parallel domain,
	\item[(C2')] there exists an agent $i \in N$ such that $\utilDomain_i \not\subset \plDomain$,
\end{enumerate}
the only social choice mechanisms that satisfies (P1) DSIC, (P2) deterministic, (P4) IR, (P5) no subsidy, and Pareto efficiency, are dictatorships.
\end{theorem}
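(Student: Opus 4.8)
The plan is to reduce everything to the machinery already built for Theorem~\ref{thm:dictatorship}, using Pareto efficiency to supply the two ingredients that ontoness alone did not give: that the rule restricted to the parallel subspace is onto and in fact efficient, and that the single non-parallel agent guaranteed by (C2') must carry positive weight (this is the role that the stronger condition (C2) played in the main theorem).

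First I would show that PE implies (P3) onto. Given (C1), for any target alternative $a$ I would construct a parallel profile in which every agent has a large willingness to pay for $a$ and zero willingness to pay for all other alternatives. On such a profile (P4) IR and (P5) no subsidy force the realized payments to be $0$, so every agent strictly prefers $a$ at the realized payments; PE then rules out selecting any $b\neq a$, since switching to $a$ at the same payments would be a Pareto improvement that does not reduce total revenue. Hence the mechanism is onto, so (P1)--(P5) all hold, and by Lemma~\ref{lem:subdomain_onto} the restriction of $(x,t)$ to the parallel subspace $\uspaceintpl=\utilSpace\cap\plSpace$ again satisfies (P1)--(P5). Theorem~\ref{thm:roberts} then yields non-negative weights $\{\affcoeff_i\}$ (not all zero) and constants $\{\affcnst_a\}$ with $x(u)\in\arg\max_a\{\sum_i\affcoeff_i\maxwp_{i,a}+\affcnst_a\}$ for every $u\in\uspaceintpl$.

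The heart of the argument is to exploit PE directly on parallel profiles. On a parallel profile with standard prices, replacing the chosen alternative $a^\ast$ by $b$ while holding each agent at the same utility changes agent $i$'s payment by exactly $\maxwp_{i,b}-\maxwp_{i,a^\ast}$ by Lemma~\ref{lem:maxwp_as_values}, so the total collected payment changes by $\sum_i(\maxwp_{i,b}-\maxwp_{i,a^\ast})$; IR keeps all utilities in the range where the parallel identity is valid. Thus PE forces $x(u)\in\arg\max_a\sum_i\maxwp_{i,a}$ on $\uspaceintpl$, and comparing this efficient rule with the affine-maximizer form on suitable two-alternative profiles of an unrestricted parallel domain forces every $\affcoeff_i$ to be equal and strictly positive and every $\affcnst_a$ to be equal. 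In particular the agent $j$ that (C2') guarantees to have a non-parallel type satisfies $\affcoeff_j>0$.

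Finally I would close exactly as in Step~1 of the proof of Theorem~\ref{thm:dictatorship}. If $n\ge 2$, there is a second agent with positive weight, and the construction there, applied with the non-parallel agent $j$ playing the role of ``agent~2'' and any other positive-weight agent as ``agent~1,'' produces a profile on which no alternative is simultaneously agent-maximizing for both agents, contradicting DSIC. Hence $n=1$: the single agent is the dictator, the efficient rule makes her select her favorite alternative, and equal constants together with no subsidy (via Lemma~\ref{lem:price_characterization}) force her payment to $0$, giving a full dictatorship. I expect the efficiency step to be the main obstacle: one must carefully justify that the PE payment-adjustment stays inside the range where parallelism holds (which is precisely where IR is needed) and then extract equality of all weights and constants, rather than merely $\affcoeff_j>0$. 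Once the rule is pinned to the efficient one, the reduction to the existing Theorem~\ref{thm:dictatorship} construction is routine.
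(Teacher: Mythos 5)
Your first step (PE together with IR and no subsidy implies ontoness) and the ensuing reduction via Lemma~\ref{lem:subdomain_onto} and Theorem~\ref{thm:roberts} are correct and coincide with the paper. The gap is in what you call the heart of the argument: the claim that PE forces $x(u)\in\arg\max_{a}\sum_i \maxwp_{i,a}$ on $\uspaceintpl$, and hence that all weights $\affcoeff_i$ are equal and strictly positive. The utility-preserving payment adjustment you invoke, $t_i\mapsto t_i+\maxwp_{i,b}-\maxwp_{i,a^\ast}$, is licensed by \eqref{equ:parallel_defn} only when the adjusted payment lies in $[0,\maxwp_{i,b}]$, equivalently when agent $i$'s realized utility is at most $v_{i,b}$. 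For an agent whose realized utility exceeds $v_{i,b}$ (she likes $a^\ast$ and pays little), the adjustment you need is a \emph{negative} payment, and in that region a parallel type is completely unconstrained (see the paper's footnote after \eqref{equ:parallel_property}): $u_{i,b}$ may be bounded above strictly below her current utility, so no payment at all makes her weakly better off under $b$. IR does not rescue you --- it constrains the mechanism's realized outcome, not the utilities at your hypothetically adjusted payments. Hence the Pareto improvement you need may not exist, and PE does not imply unweighted-sum maximization on a general unrestricted parallel domain.

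This is not a repairable technicality, because your intermediate claim is incompatible with the theorem itself: for $n\ge 2$ the conclusion is a dictatorship, i.e., exactly one positive weight, which is certainly not a maximizer of $\sum_i\maxwp_{i,a}$. Accordingly your argument ends with ``hence $n=1$,'' which amounts to asserting that for $n\ge 2$ no mechanism satisfies the hypotheses at all; that vacuousness claim is precisely what fails on domains whose types cannot be compensated at negative payments, and it is not what the theorem asserts. The paper instead uses PE only through Pareto improvements whose payments are non-negative and whose utilities stay in the range where parallelism and Lemma~\ref{lem:maxwp_as_values} apply: (i) a profile with willingness to pay $\epsilon/2$ and $\epsilon$ on two alternatives forces all constants $\affcnst_a$ to be equal; (ii) a case split on the non-parallel agent: if she has positive weight together with any other agent, the DSIC contradiction of Step~1 of Theorem~\ref{thm:dictatorship} applies, while if her weight is zero and two \emph{other} agents have positive weights, an explicit three-alternative profile exhibits an outcome of the affine maximizer that is Pareto dominated by reassigning the non-negative payment $8/\affcoeff_2$ from agent 2 to agent 1; and then (iii) Step~2 of the proof of Theorem~\ref{thm:dictatorship} (which relies only on (C1)) extends the resulting zero-price dictatorship from $\uspaceintpl$ to all of $\utilSpace$ --- a step your proposal skips entirely for $n\ge 2$.
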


\begin{proof} 
We first observe that any mechanism that satisfy the given conditions must also be onto, therefore all (P1)-(P5) hold. For any alternative $a \in \alts$, consider the parallel type profile $u$ where the willingness to pay is given by
\begin{align*}
	\maxwp_{1,a} &= 1;~\maxwp_{1,a'} = 0, ~\forall a' \neq a,\\
	\maxwp_{i,a'} &= 0, ~\forall a' \in \alts. 
\end{align*}
Such a type profile is guaranteed to exist given (C1). Assume that any alternative $b \neq a$ is selected, we know from IR and No subsidy that there cannot be any payment from any agent. However, this violates PE, since if alternative $a$ is selected instead without charging any payment from any agent, agent $1$ is strictly better off while the rest of the agents are indifferent. This shows that every alternative is selected for some type profile thus the mechanism is onto.

\medskip

Following the notation as in the proof of Theorem~\ref{thm:dictatorship}, we denote the parallel subdomains as $\udomainintpl_i \triangleq \utilDomain_i \cap \plDomain$ for each $i \in N$, and let $\uspaceintpl \triangleq \utilSpace \cap \plSpace$ be the set of all parallel type profiles. Theorem~\ref{thm:roberts} implies any mechanism that satisfy (P1)-(P5) must be an affine maximizer on $\uspaceintpl$, with non-negative coefficients $\{\affcoeff_i\}_{i \in N}$, not all of them are zero, and real constants $\{\affcnst_a\}_{a \in \alts}$.

We first show that Pareto efficiency requires that the constants have to be all equal. Let $a \in \arg \max_{a' \in \alts} \affcnst_{a'}$. Assume that the constants are not all equal, there exists $b \in \alts$ s.t. $\affcnst_a > \affcnst_b$. Since there is at least one agent with $\affcoeff_i > 0$, assume w.l.o.g. that $\affcoeff_1 > 0$. Denote $\epsilon \triangleq 1/\affcoeff_1 (\affcnst_a - \affcnst_b)$, and consider a parallel type profile with willingness to pay given by:
\begin{align*}
	\maxwp_{1,a} & = \epsilon / 2, ~\maxwp_{1,b} = \epsilon,~\maxwp_{1,c}  = 0, ~\forall c \neq a,b, \\
	\maxwp_{i,a'} &= 0, ~\forall i \neq 1, ~\forall a' \in \alts.
\end{align*}
 
We can check that alternative $a$ is the unique maximizer of the choice rule, and therefore will be selected. Lemma~\ref{lem:price_characterization} then implies that no agent makes any payment to the mechanism, since $t_{i,a}(u_{-1}) = 1/\affcoeff_1(\affcnst_a - \affcnst_a ) = 0$, and there is no tie in the choice rule. 
This violates Pareto efficiency, since selecting alternative $b$ without charging any payment from any agent is a strict improvement for agent $1$, and does not make any other agent worse off. This implies that the constants have to be all equal. 

\medskip

We will show that the number of agents with a non-zero coefficient is exactly one, similar to Step~1 of the proof of Theorem~\ref{thm:dictatorship}. Since Step~2 of the proof of Theorem~\ref{thm:dictatorship} does not depend on the assumption (C2) and relies only on (C1), we can then repeat the same argument and reach the conclusion that the mechanism must be a fixed price dictatorship. The fact that the coefficients $\{ \affcnst_a \}_{a \in \alts}$ being all equal then implies that the fixed prices that the dictator faces must be all zero.

We now prove $|\{i \in N | \affcoeff_i > 0\}| = 1$.
Given $(C2')$, we assume w.l.o.g. that there exist a non-parallel type for agent $1$: $u_1^\ast \in \utilDomain_1 \backslash \plDomain$.
With the same arguments as in Step~1 in the proof of Theorem~\ref{thm:dictatorship}, we can show that for any other agent $i \neq 1$, there is at most one of them with a non-zero coefficient. 
This implies that either (i) $\affcoeff_1 = 0$, or (ii) $\affcoeff_i = 0$ for all $i \neq 1$. For case (ii), we already have $|\{i \in N | \affcoeff_i > 0\}| = 1$. For case (ii), we only need to consider the case where there are at least two other agents with non-zero coefficients, which we name agent $2$ and $3$. Consider a parallel type profile with willingness to pay given by:
\begin{align*}
	\maxwp_{1,a} &= 10 / \affcoeff_2, ~\maxwp_{1,b} = 0,~\maxwp_{1,c} = 0, \\
	\maxwp_{2,a} &= 5/\affcoeff_2, ~\maxwp_{2,b} = 10 / \affcoeff_2,~\maxwp_{2,c} = 0, \\
	\maxwp_{3,a} &= 4/\affcoeff_3,~ \maxwp_{3,b} = 0, ~\maxwp_{3,c} = 8/\affcoeff_3,
\end{align*}
and if there are any other agent, or any alternative other than $a,b,c$, let the willingness to pay be all zeros. Since the constants are all equal, and given $\affcoeff_1 = 0$, we know $x(u) = b = \arg\max_{a' \in \alts} \{  \affcoeff_2 \maxwp_{2,a'} + \affcoeff_3 \maxwp_{3, a'} \}$, and Lemma~\ref{lem:price_characterization} again implies that $t_2(u) = 8/\affcoeff_2$, and $t_i(u) = 0$ for all $i \neq 2$. This is not a Pareto efficient outcome, since if we select alternative $a$ instead, collect 
$8 / \affcoeff_2$ from agent $1$, and charge no payment from any other agent, all of agents $1$, $2$ and $3$ are strictly better off, and the rest of the agents, if there are any, are indifferent. This is a contradiction, implies that $|\{i \in N | \affcoeff_i > 0\}| = 1$ must hold, and completes the proof of the theorem.
\end{proof}

\end{document}